\documentclass[onecolumn,notitlepage,superscriptaddress,11pt,tightenlines,nofootinbib]{revtex4-1}    



\usepackage[ colorlinks = true,
             linkcolor = blue,
             urlcolor  = blue,
             citecolor = red,
             anchorcolor = green,
]{hyperref}

\usepackage{amssymb,amsmath,amsthm}
\usepackage[T1]{fontenc}
\usepackage[mathscr]{euscript}
\usepackage{microtype}
\usepackage{subfigure}
\usepackage{enumitem}
\usepackage{graphicx}
\usepackage{overpic}

\theoremstyle{plain}
\newtheorem{lemma}{Lemma}
\newtheorem{proposition}[lemma]{Proposition}
\newtheorem{theorem}[lemma]{Theorem}
\newtheorem{corollary}[lemma]{Corollary}
\newtheorem*{directpart}{Direct Part of Theorem~\ref{th:main}}

\theoremstyle{remark}
\newtheorem{remark}{Remark}
\newtheorem{definition}{Definition}
\newtheorem{example}{Example}


\newcommand{\cX}{\mathscr{X}}
\newcommand{\cY}{\mathscr{Y}}
\newcommand{\cS}{\mathscr{S}}
\newcommand{\cD}{\mathcal{D}}
\newcommand{\cK}{\mathcal{K}}
\newcommand{\cW}{\mathcal{W}}

\newcommand{\cM}{\mathscr{M}}
\newcommand{\cC}{\mathcal{C}}
\newcommand{\cP}{\mathcal{P}}

\newcommand{\cE}{\mathcal{E}}

\newcommand{\bbP}{\mathbb{P}}
\newcommand{\cH}{\mathscr{H}}
\newcommand{\cG}{\mathscr{G}}

\newcommand{\perr}{p_{\textrm{err}}}
\newcommand{\abs}[1]{|#1|}
\newcommand{\eps}{\varepsilon}
\newcommand{\id}{\mathrm{id}}

\newcommand{\proj}[1]{|#1\rangle\!\langle #1|}

\newcommand{\diff}{\textnormal{d}}

\DeclareMathOperator*{\Exp}{E}
\DeclareMathOperator*{\Var}{Var}

\DeclareMathOperator{\tr}{tr}

\renewcommand{\vec}[1]{\textnormal{\textit{\textbf{#1}}}}
\DeclareMathOperator*{\argmax}{\arg\max}
\DeclareMathOperator*{\argmin}{\arg\min}
\DeclareMathOperator{\conv}{conv}

\DeclareMathOperator{\supp}{supp}
\DeclareMathOperator{\im}{im}

\newcommand{\cSo}{\cS_{\circ}}
\newcommand{\ccSo}{\overline{\cS}_{\circ}}
\newcommand{\cXo}{\cX_{\circ}}


\begin{document}

\title{\Large Second-Order Asymptotics for the Classical\\ Capacity of Image-Additive Quantum Channels}

 \author{Marco Tomamichel}                   
 \email{marco.tomamichel@sydney.edu.au}
 \affiliation{School of Physics, The University of Sydney, Australia}
 \affiliation{Centre for Quantum Technologies, National University of Singapore, Singapore}
 
 \author{Vincent Y.~F.~Tan}
  \email{vtan@nus.edu.sg}
  \affiliation{Department of Electrical and Computer Engineering, and \\ Department of Mathematics, National University of Singapore, Singapore}

\begin{abstract}
We study non-asymptotic fundamental limits for transmitting classical information over memoryless quantum channels, i.e.\ we investigate the amount of classical information that can be transmitted when a quantum channel is used a finite number of times and a fixed, non-vanishing average error is permissible. 
In this work we consider the classical capacity of quantum channels that are image-additive, including all classical to quantum channels,
as well as the product state capacity of arbitrary quantum channels.
In both cases we show that the non-asymptotic fundamental limit admits a second-order approximation that illustrates the speed at which the rate of optimal codes converges to the Holevo capacity as the blocklength tends to infinity. The behavior is governed by a new channel parameter, called channel dispersion, for which we provide a geometrical interpretation.
\end{abstract}

\maketitle

\section{Introduction}

One of the landmark achievements in quantum information theory is the establishing of the coding theorem for sending classical information across a noisy quantum channel by Holevo~\cite{holevo98}, and independently by Schumacher-Westmoreland~\cite{schumacher97}\,---\,the so-called HSW theorem. 
The HSW theorem can be formally stated as follows: Let $\cW^n$ denote the $n$-fold memoryless composition of the channel $\cW$ and let
$M^*(\cW^n,\eps)$ denote the maximum size of a length-$n$ block code for the channel $\cW$ with average error probability $\eps \in (0,1)$. Then, the HSW theorem, together with the weak converse  established by Holevo~\cite{holevo73b} in the 1970s (the Holevo bound), asserts that 
\begin{align}
\label{eq:holevo0}
C(\cW) := \lim_{\eps\to 0} \liminf_{n\to\infty} \frac{1}{n}\log M^*(\cW^n,\eps) = \lim_{n \to \infty} \frac{1}{n} \chi(\cW^{n}) ,
\end{align}
where $\chi(\cW)$ is the \emph{Holevo capacity} of the channel. (We define all quantities precisely in the following.) Let us emphasize that the Holevo capacity is generally not additive~\cite{hastings09}, and we can thus not simplify the limit on the right hand side of~\eqref{eq:holevo0} without further assumptions.
 
However, for discrete classical-quantum (c-q) channels, the converse part of  HSW theorem was strengthened significantly by Ogawa-Nagaoka~\cite{ogawa99} and Winter~\cite{winter99,winterthesis} who proved the {\em strong converse} for discrete memoryless c-q channels, namely
\begin{align}
  \label{eq:sc-winter}
 \lim_{n\to\infty} \frac{1}{n}\log M^*(\cW^n,\eps) = \chi(\cW) = C(\cW),\quad\mbox{for all } \eps\in (0,1).
\end{align}
In the work by Ogawa-Nagaoka~\cite{ogawa99}, the strong converse was proved using ideas from Arimoto's strong converse proof~\cite{arimoto73} for classical channels, which itself was based on techniques to prove Gallager's random coding error exponent~\cite{gallager65}. Hence, Ogawa and Nagaoka's proof~\cite{ogawa99} also applies to c-q channels whose inputs are not necessarily discrete. Winter's strong converse proof~\cite{winter99}, on the other hand, is based on the method of types~\cite{csiszar98} which is a powerful tool developed in classical information theory for discrete memoryless systems. Winter then combines this method with a suitable discretization of the output space to show the strong converse for non-stationary channels~\cite{winterthesis}.
We also mention the work by Hayashi-Nagaoka~\cite{hayashi03} in which a necessary and sufficient condition was provided for the strong converse property to hold for general (not only memoryless) c-q channels.
More recently, Wilde-Winter-Yang~\cite{wilde13} established that the strong converse, Eq.~\eqref{eq:sc-winter}, also holds if $\cW$ is an \emph{entanglement-breaking} channel or a Hadamard channel. In particular, this shows that the Holevo capacity is additive for these channels.

In this work we focus our attention on channels $\cW$ that are (tensor product) \emph{image-additive}~\cite{wolf14}, namely quantum channels $\cW$ that satisfy 
\begin{align}
 \im(\cW^n) = \conv\big( \im(\cW)^{\otimes n} \big) , \label{eq:im-add}
\end{align}
where $\im(\cW)$ denotes the image of the channel (i.e.\ the set of all quantum states that can be output by $\cW$ if the input is a quantum state) and $\conv$ denotes the convex hull. This class of channels is a proper subset of the entanglement-breaking channels but strictly larger than c-q channels~\cite{wolf14}. 
Finally, if we restrict the input to an arbitrary quantum channel to product states (or, more generally, separable states), then the respective channel images automatically satisfy~\eqref{eq:im-add}.

We are interested in characterizing $M^*(\cW^n, \eps)$ for these channels beyond the strong converse statement in~\eqref{eq:sc-winter}. This quantity represents the fundamental limit for the size of a codebook that allows transmission of classical information over $n$ uses of the quantum channel $\cW$ up to an error $\eps$. Notably such communication schemes generally require a joint measurement of $n$ quantum systems at the receiver's terminal, which is technologically challenging even for moderate values of $n$. Thus, an asymptotic characterization for $n \to \infty$ as in~\eqref{eq:sc-winter} seems insufficient.
%
To this end, our goal here is to approximate $M^*(\cW^n, \eps)$  in terms of efficiently computable quantities for large but finite $n$. 

For image-additive channels, the results of Wilde-Winter-Yang in fact imply that~\cite{wilde13}
\begin{align}
\log M^*(\cW^n,\eps) = n C(\cW)+ O(\sqrt{n}),\quad\mbox{for all } \eps\in (0,1).
\end{align}
Our present work refines the $O(\sqrt{n})$ term by identifying the implied constant in this remainder term as a function of $\eps$ and a new channel parameter called the \emph{dispersion} of the quantum channel. The resulting second-order approximation generalizes results for classical channels that go back to Strassen's work in the 1962~\cite{strassen62}. In this seminal work, he showed for most well-behaved discrete classical channels $W:\cX\to\cY$ that 
\begin{align}
 \log M^*(W^n,\eps) = nC(W) + \sqrt{nV_\eps (W)}\,\Phi^{-1}(\eps) + O(\log n),  \label{eqn:strass}
\end{align}
where $C(W)$ is the Shannon capacity, $\Phi$ is the cumulative distribution function of a standard normal random variable, and $V_\eps(W)$ is another fundamental property of the channel known as the $\eps$-channel dispersion, a term coined by Polyanskiy {\em et al.}~\cite{polyanskiy10}. Refinements to and extensions of the expansion of $\log M^*(W^n,\eps)$ were pursued by Hayashi~\cite{hayashi09}, Polyanskiy {\em et al.}~\cite{polyanskiy10} and the present authors~\cite{tomamicheltan12}.\footnote{The latter two works establish that the remainder term satisfies $O(\log n) = \frac12 \log n + O(1)$ for most channels, and as such the third-order contribution is independent of the detailed channel description.}

\subsection{Main Contributions}

In Section~\ref{sec:pre} we introduce the necessary concepts and definition required to formally state our main results, which we detail in Section~\ref{sec:results}.
There are three main contributions in this paper: 

\begin{enumerate}[itemsep=0mm]

\item It is a well-known fact that the capacity of a classical or c-q channel can be represented geometrically as the divergence radius of the channel image. In this paper, in the course of proving our  main result, and especially the converse part, we leverage this fact heavily and refine the geometric interpretation of the Holevo capacity in Section~\ref{sec:res-radius}.

\item We develop a one-shot converse bound on $M^*(\cW, \eps)$ in terms of the geometry of  the image of the channel by employing a non-asymptotic quantity known as the {\em $\eps$-hypothesis testing divergence radius}. This is a one-shot analogue of the divergence radius that is commonly used to characterize the channel capacity. We find that such an approach allows to shift our attention  from the input to the output space already in the non-asymptotic (one-shot) regime. Indeed, all the necessary calculations to yield the second-order approximation are done in the output space, thus allowing the input space to be arbitrary. 

This approach of working solely on the output space by employing a one-shot divergence radius to find the converse of the second-order approximation is new 
and does not have a classical analogue.

\item We then use this technique to refine the asymptotic expansion of $\log M^*(\cW^n,\eps)$ for c-q channels whose input alphabet is neither discrete nor otherwise structured. In fact our only requirement is that the image of the channel is comprised of quantum states on a finite-dimensional Hilbert space. We prove a quantum analogue of Strassen's~\cite{strassen62} refinement to the Shannon capacity in \eqref{eqn:strass}. This result is presented as Theorem~\ref{th:main} and discussed in Section~\ref{sec:res-main}. 

Finally, we show how our result for c-q channels with unstructured inputs can be adapted to yield an asymptotic expansion for all image-additive channels as well as the product state capacity of arbitrary quantum channels in Section~\ref{sec:image-add}


\end{enumerate}

Because of the generality that is being afforded in our setup, several auxiliary technical results have to be developed  either by modifying arguments from the literature or proving them from scratch. These results may be of independent interest in other contexts. First, we develop several alternative representations of the  divergence radius that turn out to be amenable for computations involved in both the direct part and converse parts of the proof of our main theorem. Second, in the course of proving the direct part, we also show,  by appealing to Caratheodory's theorem, that it suffices to choose a finite input ensemble in order to achieve the second-order approximation.  Third, for the converse part, to deal with ensembles of ``bad'' states that are not close to Holevo capacity-achieving, we construct an appropriate $\gamma$-net whose size can be controlled appropriately and whose elements serve to approximate those ensembles of ``bad'' states. (Notably, Winter~\cite[Thm.~II.7]{winterthesis} also employed a related idea to get beyond the assumption of discrete input alphabets.)
Finally, we also prove several useful continuity properties of quantum information quantities. These allow us to establish that the third-order term in the Strassen-type asymptotic expansion in \eqref{eqn:strass} for c-q channels with discrete support is $O(\log n)$, as in the classical case.


\section{Preliminaries}
\label{sec:pre}


We consider the real vector space of self-adjoint (Hermitian) operators on a finite-dimensional inner product (Hilbert) space. We denote the space of self-adjoint operators by $\cH$ and keep it fixed throughout to ease notation.
For $A, B \in \cH$, we write $A \geq B$ iff $A - B$ is positive semi-definite.
Moreover, we denote by $\{ A > B \}$ and $\{ A \geq B\}$ 
the projectors onto the positive and non-negative subspaces of $A - B$, respectively. We write $A \gg B$ to denote the fact that the kernel of $A$ is contained in the kernel of $B$. Let $\lambda_{\min}(A)$ denote the minimum eigenvalue of $A$.
We equip $\cH$ with a metric, the trace distance $\delta_{\tr}(A,B) := \frac12 \tr | A - B |$, where $\tr$ denotes the trace.
The \emph{identity operator} is denoted by $\id$.
The set of \emph{quantum states} is given by $\cS := \{ \rho \in \cH \,|\, \rho \geq 0 \land \tr(\rho) = 1 \}$. 
Clearly, $(\cS, \delta_{\tr})$ is a compact metric space.

For any closed (and thus compact) subset $\cSo \subseteq \cS$, we denote by $\cP(\cSo)$ the set of probability measures on $(\cSo, \Sigma_{\circ})$, where $\Sigma_{\circ}$ is the Borel $\sigma$-algebra on $(\cSo, \delta_{\tr})$. Since $(\cSo, \delta_{\tr})$ is a compact metric space, $\big(\cP(\cSo), \delta_{\textrm{wc}}\big)$ is a compact metric space, where $\delta_{\textrm{wc}}$ denotes the Prohorov metric~\cite[Sec.~6 and Thm.~6.4]{partha67}. We will not use $\delta_{\textrm{wc}}$ explicitly but simply note that convergence in $\delta_{\textrm{wc}}$ is equivalent to weak convergence of probability measures. As such, any function of the form
\begin{align}
  \cP(\cSo) \to \mathbb{R}, \qquad \bbP \mapsto \int_{\cSo} \diff \bbP(\rho) f(\rho) 
\end{align}
is continuous if $f$ is bounded and continuous. 
If $\cSo$ is discrete, we abuse notation and also use $\cP(\cSo)$ to denote the set of probability mass functions on $\cSo$. We then use $P \in \cP(\cSo)$ to denote its elements. We often use the abbreviations $\rho^{(\bbP)}$ and $\rho^{(P)}$ to denote the averaged states
\begin{align}
\rho^{(\bbP)} := \int_{\cSo} \diff \bbP(\rho)\rho \qquad \textrm{and} \qquad 
\rho^{(P)} := \sum_{\rho \in \cSo} P(\rho)\rho .
\end{align}

For any $n \in \mathbb{N}$, we also consider the $n$-fold products of the underlying inner-product space and denote the associated set of self-adjoint operators and states with $\cH^n$ and $\cS^n$, respectively. For any $\cSo \subseteq \cS$, we denote by $\cSo^{\otimes n} \subseteq \cS^n$ the set of $n$-tuples of states in $\cSo$, represented as a product state $\bigotimes_{i=1}^n \rho_i$, where $\rho_i \in \cSo$. Clearly, $\cS^{\otimes n} \subseteq \cS^n$.

We employ the cumulative distribution function of the standard normal distribution 
\begin{align}
\Phi(a) := \int_{-\infty}^a \frac{1}{\sqrt{2 \pi}} \exp \Big(-\frac12 x^2\Big) \,\mathrm{d}x
\end{align}
and define its inverse as $\Phi^{-1}(\eps) := \sup \{ a \in \mathbb{R} \,|\, \Phi(a) \leq \eps \}$, which reduces to the usual inverse for $0 < \eps < 1$ and extends to take values $\pm \infty$ outside that range.

\subsection{Codes for Classical-Quantum Channels}

We consider general c-q channels, i.e.\ arbitrary functions $\cW: \cX \to \cS$, where $\cX$ is an arbitrary set. A special case of this is a quantum channel, namely a completely positive trace-preserving (CPTP) map $\cW: \cS' \to \cS$, where $\cS'$ denotes a set of quantum states.
We denote the image of the channel by
  \begin{align}
    \im(\cW) := \{ \rho \in \cS \,|\, \exists\, x \in \cX : \rho = \cW(x) \} ,
  \end{align}
and its closure by $\overline{\im(\cW)}$. Without loss of generality, we may assume that $\im(\cW)$ has full support on the underlying Hilbert space, i.e.\ every vector (of the underlying Hilbert space) is supported by at least one element in $\im(\cW)$. Thus, we will usually set $d = |\supp(\im(\cW))|$. 

A {\em code} $\cC$ for $\cW$ is defined by the triple $\{\cM, e, \cD\}$,
where $\cM$ is a (discrete) set of messages, $e: \cM \to \cX$ an encoding function and $\cD = \{ Q_m \}_{m \in \cM}$ is a positive operator valued measure (POVM).\footnote{A POVM in this context is a set of operators $\{ Q_m \}_{m \in \cM}$ satisfying $Q_m \geq 0$ for all $m \in \cM$ and $\sum_{m \in \cM} Q_m = \id$.}
We write $\abs{\cC} = \abs{\cM}$ for the cardinality of the message set. We define the \emph{average error probability} of a code $\cC$ for the channel $\cW$  as 
\begin{align}
\perr(\cC,\cW):=  1-\frac{1}{ |\cM|}\sum_{m\in\cM} \tr \big( \cW(e(m)) Q_{m} \big)
\end{align} 
where the distribution over messages $M$ is assumed to be uniform on $\cM$. Alternatively, we may write
$\perr(\cC,\cW) = \Pr[M \neq M']$ where
\begin{align}
  M \xrightarrow{\ e\ } X \xrightarrow{\ \cW\ } \cW(X) \xrightarrow{\ \cD\ } M'
\end{align}
forms a Markov chain, $\cW(X)$ denotes the (random) output of the channel, and $M'$ thus denotes the output of the decoder. 

To characterize the non-asymptotic fundamental limit of data transmission over a single use of the channel, we define the \emph{maximum size of a codebook for $\cW$ with average error $\eps$} as
\begin{align}
  M^*(\cW,\eps) := \max \big\{ m \in \mathbb{N} \,\big|\, \exists\, \cC :\ \abs{\cC} = m \ \land\ \perr(\cC, \cW) \leq \eps \big\} .
\end{align}
We are interested to evaluate this quantity for the composite channel $\cW^n$, corresponding to $n \geq 1$ uses of a memoryless channel $\cW$. 
Formally, the $n$-fold i.i.d.\ repetition of the channel, $\cW^n: \cX^n\to \cS^{\otimes n}$, takes as input a vector $\vec{x} =( x_1,x_2, \ldots , x_n) \in \cX^n$ and maps it to $\cW(x_1) \otimes \cW(x_2) \otimes \ldots \otimes \cW(x_n) \in \cS^{\otimes n}$. 
In particular, this model does not allow for entangled channel outputs. The non-asymptotic fundamental limit of data transmission over $n$ uses of the channel is consequently given by $M^*(\cW^n,\eps)$.

\subsection{Information Quantities}

The following basic quantities are of interest here. For any $\rho \in \cS$, we employ the von Neumann entropy
$H(\rho) := - \tr (\rho \log \rho)$.
Moreover, for positive semi-definite $\sigma$ satisfying $\sigma \gg \rho$, the \emph{relative entropy}~\cite{umegaki62,hiai91} and the \emph{relative entropy variance}~\cite{tomamichel12,li12} are respectively defined as
\begin{align}
  D(\rho\|\sigma) &:= \tr \Big( \rho \big( \log \rho - \log \sigma \big) \Big) \qquad \textrm{and}\\
  V(\rho\|\sigma) &:= \tr \Big( \rho \big( \log \rho - \log \sigma - D(\rho\|\sigma)\cdot \id \big)^2 \Big) \,.
\end{align}
As usual, we implicitly use the convention $0 \log^k 0 \equiv 0$ for all $k \in \mathbb{N}$.

Classically, for two probability mass functions $P, Q \in \cP(\cX)$, the 
relative entropy
$D(P\|Q)$ is the expectation value of the log-likelihood ratio $\log \big( P(X)/Q(X)\big)$ where $X \leftarrow P$, and $V(P\|Q)$ is the corresponding variance. The above definition of $V(\rho\|\sigma)$ is thus a natural non-commutative generalization of the classical concept, with its operational meaning firmly established in~\cite{tomamichel12,li12}.

We summarize some properties of the above quantities, which we will employ later.
\begin{enumerate}[itemsep=0mm]
  \item $\rho \mapsto H(\rho)$ is strictly concave (cf., e.g., Lemma~\ref{lm:strict}) and continuous.
  \item $(\rho,\sigma) \mapsto D(\rho\|\sigma)$ is jointly convex and lower semi-continuous. In fact, it is continuous except when it diverges to infinity, i.e.\ when $\sigma \not\gg \rho$.
  \item $D(\rho\|\sigma)$ is positive definite, i.e.\ $D(\rho\|\sigma) \geq 0$ with equality iff $\rho = \sigma$.
  \item $(\rho,\sigma) \mapsto V(\rho\|\sigma)$ is continuous except when $\sigma \not\gg \rho$.
\end{enumerate}

Finally, in order to express the one-shot bounds, we introduce the \emph{$\eps$-hypothesis-testing divergence}~\cite{wang10}. 
For any $\eps \in (0,1)$ and $\rho, \sigma \in \cS$, it is defined as
\begin{align}
  D_h^{\eps}(\rho\|\sigma) := - \log \frac{\beta_{1-\eps}(\rho\|\sigma)}{1-\eps}, \qquad \textrm{where} \quad \beta_{1-\eps}(\rho\|\sigma) := \min_{0 \leq Q \leq \id \atop \tr(Q \rho) \geq 1-\eps}  \tr(Q \sigma) \,. \label{eq:defhypo}
\end{align} 
Note that $\beta_{1-\eps}$ is the smallest type-II error of a hypothesis
test between $\rho$ and $\sigma$ with type-I error at most $\eps$.
The $\eps$-hypothesis testing divergence satisfies the following basic properties, which we summarize here for later reference.
\begin{lemma} \label{lm:hypo-prop}
  Let $\eps \in (0,1)$, let $\cSo,\cSo' \subseteq \cS$ be discrete sets, and let $P \in \cP(\cSo)$, $Q \in\cP(\cSo')$. Define $\rho = \sum_{\tau \in \cSo} P(\tau) \tau$ and $\sigma = \sum_{\omega \in \cSo'} Q(\omega) \omega$. 
  Then $D_h^{\eps}(\rho\|\sigma)$ satisfies the following properties:
  \begin{enumerate}[itemsep=0mm]
  \item $D_h^{\eps}(\rho\|\sigma) \geq 0$ with equality if and only if $\rho = \sigma$. \emph{(cf.~\cite[Prop.~3.2]{dupuis12})}
  \item For any CPTP map $\mathcal{M}$ we have $D_h^{\eps}(\rho\|\sigma) \geq D_h^{\eps}\big(\mathcal{M}(\rho) \big\| \mathcal{M}(\sigma) \big)$. \emph{(cf.~\cite{wang10})}
   \item $D_h^{\eps}(\rho\|\sigma) \leq  \min_{\omega \in \cSo'} \big\{ D_h^{\eps}(\rho\|\omega) + \log \frac{1}{Q(\omega)} \big\}$.
   \item $D_h^{\eps}(\rho\|\sigma) \leq \max_{\tau \in \cSo} D_h^{\eps}(\tau\|\sigma)$.
  \end{enumerate}
\end{lemma}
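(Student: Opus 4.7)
The plan is to handle each property by unpacking the semidefinite-programming definition of $\beta_{1-\eps}$ and then applying the monotonicity of $-\log$. Throughout, I let $Q^*$ denote an optimizer for $\beta_{1-\eps}(\rho\|\sigma)$, so that $0 \leq Q^* \leq \id$, $\tr(Q^*\rho) \geq 1-\eps$, and $\tr(Q^*\sigma) = \beta_{1-\eps}(\rho\|\sigma)$. Properties~1 and~2 are standard and I would simply invoke the cited arguments: for Property~1, observe that $Q = (1-\eps)\id$ is always feasible, so $\beta_{1-\eps}(\rho\|\sigma) \leq 1-\eps$ and $D_h^{\eps}(\rho\|\sigma) \geq 0$, with the equality characterization taken from~\cite[Prop.~3.2]{dupuis12}; for Property~2, the Heisenberg-adjoint $\mathcal{M}^*$ maps any feasible test $Q$ for $(\mathcal{M}(\rho),\mathcal{M}(\sigma))$ to a feasible test $\mathcal{M}^*(Q)$ for $(\rho,\sigma)$ with identical type-I and type-II values, yielding $\beta_{1-\eps}(\rho\|\sigma) \leq \beta_{1-\eps}(\mathcal{M}(\rho)\|\mathcal{M}(\sigma))$.

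For Property~3, my strategy is to reuse $Q^*$ as a test for the simpler hypothesis problem $(\rho,\omega)$ and to extract only the $\omega$-component from the mixture. Concretely, expanding $\sigma = \sum_{\omega' \in \cSo'} Q(\omega')\omega'$ and using that $Q^* \geq 0$ and each $\omega' \geq 0$, I drop all terms except $\omega' = \omega$ to obtain $\beta_{1-\eps}(\rho\|\sigma) = \tr(Q^*\sigma) \geq Q(\omega)\tr(Q^*\omega) \geq Q(\omega)\beta_{1-\eps}(\rho\|\omega)$, where the final inequality follows because $Q^*$ is also feasible for the hypothesis problem between $\rho$ and $\omega$. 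Taking $-\log$, rewriting the result in terms of $D_h^{\eps}$, and then minimizing over $\omega \in \cSo'$ yields the stated bound.

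For Property~4, the idea is dual: I use $Q^*$ as a test for some $(\tau^*,\sigma)$, where $\tau^*$ is picked from the mixture decomposition of $\rho$. Expanding $\rho = \sum_{\tau \in \cSo} P(\tau)\tau$ gives $\sum_{\tau} P(\tau)\tr(Q^*\tau) = \tr(Q^*\rho) \geq 1-\eps$; since this is a convex average of non-negative numbers whose value is at least $1-\eps$, a short pigeonhole argument produces some $\tau^* \in \cSo$ with $\tr(Q^*\tau^*) \geq 1-\eps$. Then $Q^*$ is feasible for the hypothesis problem between $\tau^*$ and $\sigma$, so $\beta_{1-\eps}(\rho\|\sigma) = \tr(Q^*\sigma) \geq \beta_{1-\eps}(\tau^*\|\sigma) \geq \min_{\tau \in \cSo} \beta_{1-\eps}(\tau\|\sigma)$, and the claim follows after taking $-\log$.

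I do not foresee a genuine obstacle: each property reduces to a single SDP-feasibility argument together with the observation that $Q^*$ remains usable in a suitably perturbed hypothesis problem. The only step that leans on the discreteness assumption in the lemma is the pigeonhole inside Property~4, which is purely combinatorial here; in a measure-theoretic version one would instead have to work with an essential supremum, so stating the lemma for discrete $\cSo$ keeps the argument clean.
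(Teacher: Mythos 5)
Your proposal is correct: Properties 1 and 2 are handled by the standard feasibility of $(1-\eps)\id$ and the unital adjoint argument (as in the cited works), and your SDP-feasibility arguments for Properties 3 and 4 (reusing the optimal test $Q^*$ for the problems $(\rho,\omega)$ and $(\tau^*,\sigma)$, respectively, after extracting one component of the mixture) are exactly the ``close inspection of the definition'' that the paper alludes to while omitting the proof. No gaps; in particular, the existence of the optimizer $Q^*$ is guaranteed by compactness of the feasible set in finite dimensions, and your averaging/pigeonhole step in Property 4 correctly produces a $\tau^*\in\cSo$ for which $Q^*$ remains feasible.
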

\noindent The last inequality shows that $\rho \mapsto D_h^{\eps}(\rho \| \sigma)$ is quasi-convex. The last two inequalities can be verified by a close inspection of the definition in~\eqref{eq:defhypo} and we omit the proof.

\section{Main Results}\label{sec:results}

\subsection{The Divergence Radius of a Set of Quantum States}
\label{sec:res-radius}

It is well known that the capacity of a classical or classical-quantum channel can be represented geometrically as the divergence radius of the channel image. (For the quantum case, see, e.g.~\cite{ohya97} and~\cite{schumacher01}.) Here, we take a complementary approach and investigate the divergence radius of subsets of the set of quantum states. If such a set is the image of a channel, our analysis allows us to construct capacity-achieving ensembles by just looking at the channel image. Furthermore, this viewpoint leads to a natural quantum generalization of the concept of channel dispersion. Thus, somewhat surprisingly, we will see that not only the capacity but also the finite blocklength behavior of channels is governed by the geometry of the channel image.

\subsubsection{Divergence Radius}

Let us start by investigating the divergence radius of arbitrary closed subsets of the set of quantum states on a finite-dimensional Hilbert space.

\begin{definition}
  \label{def:radius}
Let $\cSo \subseteq \cS$ be closed. The \emph{divergence radius} of $\cSo$ (in $\cS$) is defined as
\begin{align}
  \chi(\cSo) := \inf_{\sigma \in \cS} \sup_{\rho \in \cSo} D(\rho\|\sigma) \,. \label{eq:dr}
\end{align}
\end{definition}

We show the following properties of the divergence radius.

\begin{theorem}
\label{th:radius}
  Let $\cSo \subseteq \cS$ be closed. We find the following:
  \begin{enumerate}[itemsep=0mm]
    \item The \emph{divergence center}, defined as $\sigma^*(\cSo) := \argmin_{\sigma \in \cS} \big\{ \sup_{\rho \in \cSo} D(\rho\|\sigma) \big\}$, exists and is unique. Moreover, $\sigma^*(\cSo) \gg \rho$ for all $\rho \in \cSo$.
    \item Define the set of \emph{peripheral points} of $\cSo$, i.e.\
    \begin{align}
      \label{eq:gamma}
      \Gamma(\cSo) := \argmax_{\rho \in \cSo} D\big(\rho \big\|\sigma^*(\cSo) \big) .
    \end{align}
     Then, $D\big(\rho \big\|\sigma^*(\cSo) \big) \leq \chi(\cSo)$ for all $\rho \in \cSo$ with equality iff $\rho \in \Gamma(\cSo)$.    
    \item We have $\sigma^*(\cSo) \in \conv(\Gamma(\cSo))$.
    \item The divergence radius has the following alternative representation:
    \begin{align} \label{eq:holevo}
        \chi(\cSo) = \sup_{\bbP \in \cP(\cSo)} \int \diff \bbP(\rho)\, D\bigg(\rho\,\bigg\|\, \int \diff \bbP(\rho') \rho' \bigg) .
    \end{align}
    \item
The set of probability measures that achieve the supremum is given
    by the \emph{peripheral decompositions of the divergence center}, namely the compact convex
    set
    \begin{align}
      \Pi(\cSo) := \bigg\{ \bbP \in \cP\big(\Gamma(\cSo)\big) \,\bigg|\,  \int \diff \bbP(\rho) \rho = \sigma^*(\cSo) \bigg\} . \label{eq:pi}
    \end{align}
   Moreover, $\Pi(\cSo)$ contains a discrete probability measure with support on at most $d^2$ points in $\Gamma(\cSo)$. 
       \end{enumerate}
\end{theorem}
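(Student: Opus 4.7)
The plan is to establish existence and uniqueness of the divergence center (part 1), derive the Holevo representation~\eqref{eq:holevo} via a minimax argument (part 4), and then read off parts 2, 3 and 5 from the structure of the optimizing probability measures. For part 1, I would observe that $f(\sigma) := \sup_{\rho \in \cSo} D(\rho\|\sigma)$ is convex and lower semi-continuous as a pointwise supremum of such maps and is finite at $\sigma = \id/d$, so compactness of $\cS$ gives a minimizer $\sigma^*$ with $\chi(\cSo) = f(\sigma^*) < \infty$, which in turn forces $\sigma^* \gg \rho$ for all $\rho \in \cSo$. Uniqueness follows from strict convexity of $\sigma \mapsto D(\rho\|\sigma)$ on its finite domain: if two minimizers $\sigma_0 \neq \sigma_1$ existed, picking $\rho^* \in \cSo$ that attains the supremum at $\sigma_{1/2} = (\sigma_0+\sigma_1)/2$ (which exists by compactness and continuity of $D(\cdot\|\sigma_{1/2})$ on $\cSo$) would give $f(\sigma_{1/2}) = D(\rho^*\|\sigma_{1/2}) < \tfrac12 (D(\rho^*\|\sigma_0) + D(\rho^*\|\sigma_1)) \leq \chi(\cSo)$, contradicting $\chi(\cSo) = \min f$. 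Part 2 is then immediate: $\rho \mapsto D(\rho\|\sigma^*)$ is continuous and bounded on the compact set $\cSo$, attaining its maximum $\chi(\cSo)$ exactly on $\Gamma(\cSo)$.

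For part 4, I would apply Sion's minimax theorem to
\begin{align}
 F(\bbP, \sigma) := \int_{\cSo} \diff \bbP(\rho)\, D(\rho\|\sigma),
\end{align}
which is affine in $\bbP \in \cP(\cSo)$ (compact convex in the weak topology) and convex lower semi-continuous in $\sigma \in \cS$, yielding $\chi(\cSo) = \sup_{\bbP} \inf_\sigma F(\bbP, \sigma)$. A direct computation provides the compensation identity
\begin{align}
 F(\bbP, \sigma) = \int \diff \bbP(\rho)\, D(\rho \| \rho^{(\bbP)}) + D(\rho^{(\bbP)} \| \sigma),
\end{align}
so the inner infimum is uniquely attained at $\sigma = \rho^{(\bbP)}$ and equals the Holevo quantity, establishing~\eqref{eq:holevo}.

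For part 5, the Holevo functional $\bbP \mapsto \int \diff \bbP(\rho)\, D(\rho\|\rho^{(\bbP)})$ is weakly continuous on $\cP(\cSo)$ since the compensation identity with $\sigma = \sigma^*$ writes it as a difference of continuous functionals (using $\sigma^* \gg \rho$ and the uniform bound $D(\rho\|\sigma^*) \leq \chi(\cSo)$ on $\cSo$). An optimizer $\bbP^* \in \cP(\cSo)$ thus exists, and the compensation identity at $\sigma = \sigma^*$ reads
\begin{align}
 \chi(\cSo) + D(\rho^{(\bbP^*)} \| \sigma^*) = F(\bbP^*, \sigma^*) \leq \sup_{\rho \in \cSo} D(\rho\|\sigma^*) = \chi(\cSo),
\end{align}
forcing $\rho^{(\bbP^*)} = \sigma^*$. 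Combining $\int \diff \bbP^*(\rho)\, D(\rho\|\sigma^*) = \chi(\cSo)$ with $D(\rho\|\sigma^*) \leq \chi(\cSo)$ on $\cSo$ forces $\supp(\bbP^*) \subseteq \Gamma(\cSo)$, so $\bbP^* \in \Pi(\cSo)$; conversely, any $\bbP \in \Pi(\cSo)$ is optimal by direct substitution in the compensation identity. Convexity and compactness of $\Pi(\cSo)$ are evident, and Caratheodory's theorem in the $(d^2-1)$-dimensional real affine space of trace-one self-adjoint operators yields a discrete element of $\Pi(\cSo)$ supported on at most $d^2$ points. Part 3 drops out: $\Pi(\cSo) \neq \emptyset$ gives $\sigma^* = \int \diff \bbP^*(\rho)\, \rho \in \conv(\Gamma(\cSo))$.

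The main obstacle lies in checking the hypotheses of Sion's minimax theorem when $\cSo$ is an arbitrary closed set: one must carefully handle the weak topology on $\cP(\cSo)$ and the blow-up of $F(\bbP,\sigma)$ when $\sigma \not\gg \rho$ on the support of $\bbP$, but these are routine measure-theoretic considerations that do not raise conceptual difficulties.
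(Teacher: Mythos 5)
There is a genuine gap, and it sits exactly where the paper's Remark~1 warns one might slip: your uniqueness argument for the divergence center. You claim that $\sigma \mapsto D(\rho\|\sigma)$ is \emph{strictly} convex on its finite domain and conclude $D(\rho^*\|\sigma_{1/2}) < \tfrac12\big(D(\rho^*\|\sigma_0)+D(\rho^*\|\sigma_1)\big)$ for a peripheral $\rho^*$ attaining the supremum at $\sigma_{1/2}$. This is false unless $\rho^*$ is positive definite. For example, with $\rho^* = \diag(1,0,0)$, $\sigma_0 = \diag(\tfrac12,\tfrac14,\tfrac14)$ and $\sigma_1 = \diag(\tfrac12,\tfrac18,\tfrac38)$ (all full rank, so well inside the ``finite domain''), one has $D(\rho^*\|\sigma_\lambda) = 1$ bit for every $\lambda$, so the inequality is an equality and no contradiction arises. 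The failure is not a corner case: for classical-quantum channels the elements of $\cSo$ are typically pure, i.e.\ maximally rank-deficient, so the strict-convexity route never applies to the states that matter. This is precisely the criticism the paper levels at~\cite[Lem.~3.4]{ohya97}. The paper avoids it by a different mechanism: it first proves $\sigma \in \conv(\Gamma_{\sigma}(\cSo))$ for \emph{every} minimizer $\sigma$ (via the Pythagorean inequality for relative entropy and a set-limit lemma), decomposes each minimizer into peripheral states by Caratheodory, rewrites $\chi(\cSo) = H(\sigma) - \sum_\rho P(\rho) H(\rho)$, and then gets uniqueness from the strict concavity of the von Neumann entropy $H$ (which has no support caveat), \`a la Gallager.

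Apart from this step, your route is sound and genuinely different from the paper's for Properties 3--5: you apply Sion's minimax (as the paper does for Property 4), but then use the compensation identity $F(\bbP,\sigma) = I(\bbP) + D(\rho^{(\bbP)}\|\sigma)$ together with weak continuity of $I$ on $\cP(\cSo)$ to obtain a maximizer $\bbP^*$, force $\rho^{(\bbP^*)} = \sigma^*$ and $\bbP^*[\Gamma(\cSo)]=1$, and read off Property 3 and the Caratheodory reduction\,---\,whereas the paper reaches Property 3 first through the Pythagorean-theorem argument. In fact, your own machinery repairs the gap if you invoke it in the right order: the compensation-identity argument shows that $\rho^{(\bbP^*)} = \sigma$ for \emph{every} minimizer $\sigma$ and any fixed maximizer $\bbP^*$, which immediately yields uniqueness without any strict-convexity claim (one only has to run the existence-of-$\bbP^*$ argument with an arbitrary, not necessarily unique, minimizer, which it permits). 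As written, however, Part~1 of your proof does not stand, and the brisk dismissal of the upper semi-continuity issues in $\bbP$ when $\sigma \not\gg \rho$ (needed for Sion) also deserves the more careful treatment that the restriction to Dirac measures provides in the paper's Lemma~\ref{lm:alt}.
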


The proof of this theorem is presented in Section~\ref{sec:radius} and we illustrate it in Figure~\ref{fig:radius}.

\begin{remark}
  Uniqueness of $\sigma^*(\cSo)$ was also claimed by Ohya, Petz and Watanabe~\cite[Lem.~3.4]{ohya97} in a related context. However, they argue that this directly follows from the ``fact that the relative entropy functional is strictly convex in the second variable''. We submit that more care has to be taken to establish uniqueness. Notably, the functional $\sigma \mapsto D(\rho\|\sigma)$ is only strictly convex if $\rho > 0$ is positive definite and trivial counterexamples can be constructed otherwise. It is then unclear how to apply this property directly to the situation at hand.
\end{remark}

\begin{figure}[t]
\begin{center}
\begin{overpic}[width=.35\columnwidth]{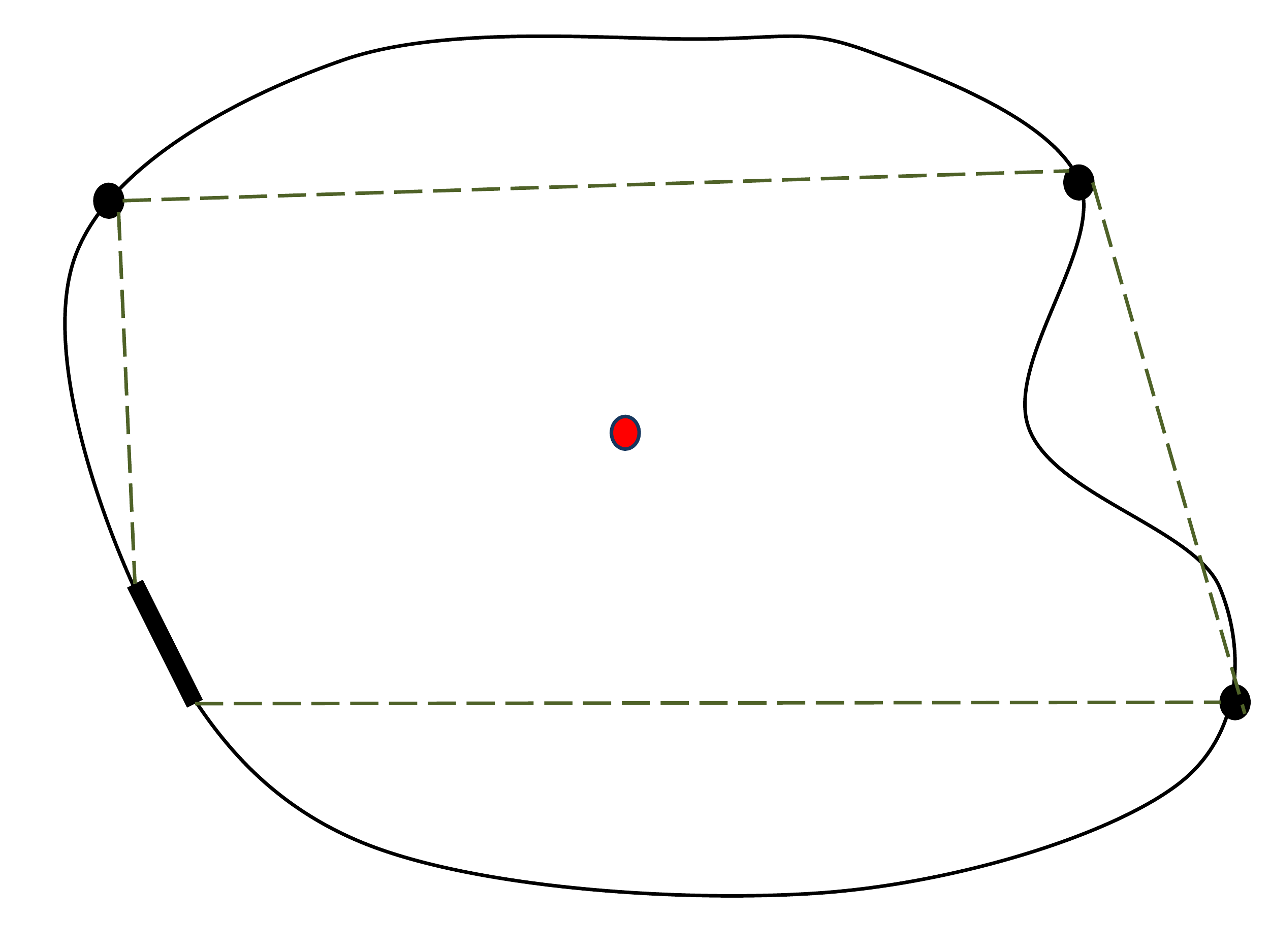}
\put(51,43){$\sigma^*$}
\put(30,65){$\cSo$}
\put(100,24){$\Gamma$}
\put(88,58){$\Gamma$}
\put(2,59){$\Gamma$}
\put(6,21){$\Gamma$}
\put(35,23){$\conv(\Gamma)$}
\end{overpic}
\hspace{1cm}
\begin{overpic}[width=.35\columnwidth]{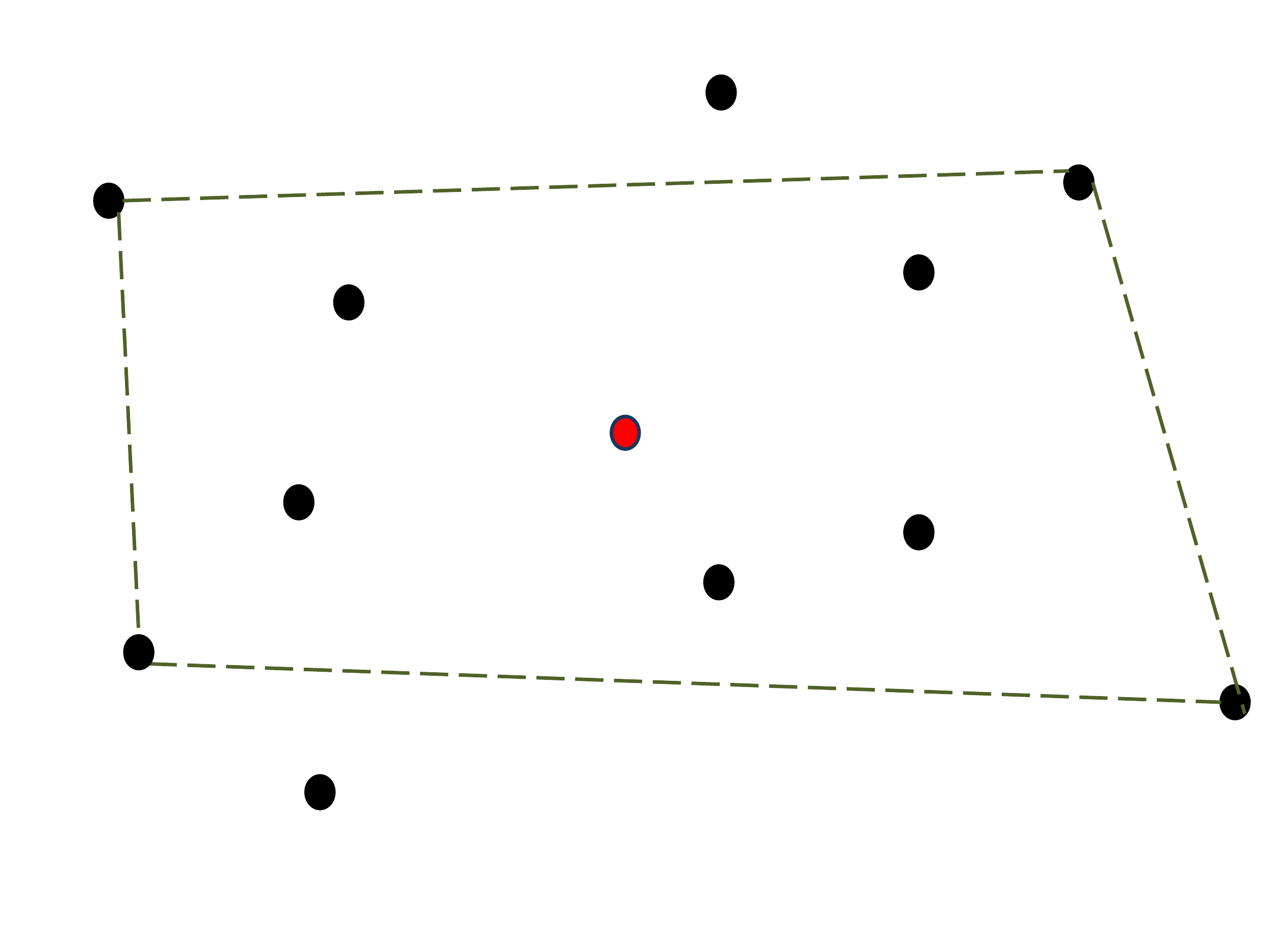}
\put(51,43){$\sigma^*$}
\put(100,24){$\Gamma$}
\put(90,58){$\Gamma$}
\put(2,59){$\Gamma$}
\put(4,21){$\Gamma$}
\put(30,25){$\conv(\Gamma)$}
\end{overpic}
\end{center}
\vspace{-0.5cm}

\caption{Example of a discrete and a continuous set with divergence center $\sigma^*$ and peripheral states $\Gamma$. The set $\Gamma$ must lie on the boundary of $\cSo$ due to the quasi-convexity of $\rho \mapsto D(\rho\|\sigma)$ (cf.~Lemma~\ref{lm:hypo-prop}). As seen in Theorem~\ref{th:radius}, the center $\sigma^*$ lies in the convex hull of $\Gamma(\cSo)$ consistent with the Euclidean intuition. }
\label{fig:radius}
\end{figure}

\begin{remark}
Property 3 is of particular importance for our argument and has not been shown before. A weaker property, namely $\sigma^*(\cSo) \in \conv(\cSo)$ was already pointed out in~\cite[Lem.~3.4]{ohya97}.
 However, our stronger Property 3 implies that $\sigma^*(\cSo)$ can be written as a convex combination of states in $\Gamma(\cSo)$, i.e.\ $\sigma^*(\cSo) = \rho^{(\bbP)}$ for some $\bbP \in \cP(\Gamma(\cSo))$. 
If $\cSo$ is the image of a quantum channel $\cW$, we write $\cW^{-1}(\Gamma(\cSo))$ to denote any pre-image of $\Gamma(\cSo)$. Then, the tuple $\{\bbP, \cW^{-1}(\Gamma(\cSo))\}$ corresponds to an optimal ensemble of input states, i.e.\ an ensemble that achieves the maximum Holevo information. 
 In particular, $\Pi(\cSo)$ as defined in~\eqref{eq:pi} is non-empty. 
\end{remark}

\begin{remark}
It is natural to see~\eqref{eq:holevo} as the dual problem (cf.~\cite{boyd04}) to the convex optimization problem in~\eqref{eq:dr}; in particular, the integral in~\eqref{eq:holevo} is concave in $\bbP$. As such \eqref{eq:holevo} implies strong duality.\footnote{A dual problem to~\eqref{eq:holevo} for the discrete case has also been established in~\cite{sutter14}, but elementary manipulations reveal that the dual program there is equivalent to the divergence radius optimization in~\eqref{eq:dr}.}
\end{remark}

\subsubsection{Peripheral Information Variance}

The above observations allow us to define the minimal and maximal peripheral information variance of $\cSo$ in terms of the information variance of peripheral decompositions of the divergence center.  To do so, we consider measures $\bbP \in \Pi(\cSo)$ and optimize 
\begin{align}
  V(\bbP|\sigma^*(\cSo)), \qquad \textrm{where} \quad V(\bbP|\sigma) := \int \diff \bbP(\rho)\, V( \rho \big\| \sigma) \,.
\end{align}
is the \emph{conditional information variance}.
This leads to the following definitions.

\begin{definition}
  Let $\cSo \subseteq \cS$ be closed and $\Pi(\cSo)$ defined in~\eqref{eq:pi}. Then, the \emph{minimal and maximal peripheral information variance} of $\cSo$ (in $\cS$) are respectively defined as
  \begin{align}
    v_{\min}(\cSo) &:= \inf_{\bbP \in \Pi(\cSo)} V\big(\bbP \big|\sigma^*(\cSo)\big) = \inf_{\bbP \in \Pi(\cSo)} \int \diff \bbP(\rho)\, V\big( \rho \big\| \sigma^*(\cSo) \big), \qquad \textrm{and} \label{eq:vmin}\\
    v_{\max}(\cSo) &:= \sup_{\bbP \in \Pi(\cSo)} V\big(\bbP \big|\sigma^*(\cSo)\big)= \sup_{\bbP \in \Pi(\cSo)}\int \diff \bbP(\rho)\, V\big( \rho \big\| \sigma^*(\cSo) \big) .\label{eq:vmax}
  \end{align}
\end{definition}

It is evident from the compactness of $\Pi(\cSo)$ that the infimum and supremum are achieved so we may replace $\inf$ and $\sup$ with $\min$ and $\max$, respectively. Moreover, the minimum in Eq.~\eqref{eq:vmin} is achieved for a probability measure $\bbP \in \cP(\Gamma(\cSo))$ that satisfies the linear constraints
\begin{align}
  &\int \diff \bbP(\rho) \rho = \sigma^*(\cSo) \quad \textrm{and} \quad \int \diff \bbP(\rho)\,V\big(\rho\big\|\sigma^*(\cSo)\big) = v_{\min}(\cSo).
\end{align}
These constitute $d^2-1$ real constraints for the first equality and one additional constraint for the second one. Since $\Gamma(\cSo)$ is not connected in general, Caratheodory's theorem (see, e.g., \cite[Thm.~18]{eggleston58}) yields the following lemma:

\begin{lemma}
  \label{lm:cara}
  There exist discrete probability measures with support on at most $d^2+1$ points in $\Gamma(\cSo)$ that achieve the infimum and supremum in~\eqref{eq:vmin} and~\eqref{eq:vmax}, respectively.   
\end{lemma}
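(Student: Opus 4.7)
The plan is a dimension-counting application of Caratheodory's theorem to a two-moment problem. The discussion preceding the lemma already notes that the extrema in Eqs.~\eqref{eq:vmin}--\eqref{eq:vmax} are attained by some $\bbP_{\min}, \bbP_{\max} \in \Pi(\cSo)$, so the remaining task is to replace each optimizer by a finitely supported measure on $\Gamma(\cSo)$ preserving both the mean $\int \diff\bbP(\rho)\,\rho = \sigma^*(\cSo)$ and the variance integral $\int \diff\bbP(\rho)\,V(\rho\|\sigma^*(\cSo))$.

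The key construction is the continuous map $F: \Gamma(\cSo) \to \cH \oplus \mathbb{R}$ defined by $F(\rho) := \bigl(\rho,\,V(\rho\|\sigma^*(\cSo))\bigr)$. Continuity is automatic because Theorem~\ref{th:radius}(1) gives $\sigma^*(\cSo) \gg \rho$ for every $\rho \in \Gamma(\cSo)$, on which set $V(\cdot\|\sigma^*(\cSo))$ is continuous by the properties of the relative entropy variance collected in Section~\ref{sec:pre}. The image $F(\Gamma(\cSo))$ is compact (continuous image of the closed, hence compact, subset $\Gamma(\cSo)$ of $\cSo$), and therefore so is its convex hull. Moreover the image lies in the affine subspace $\{(A,v) \in \cH \oplus \mathbb{R} : \tr A = 1\}$, whose real dimension equals $\dim_{\mathbb{R}}(\cH) = d^2$.

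The two moment conditions identify $\bigl(\sigma^*(\cSo), v_{\min}(\cSo)\bigr)$ as the barycenter $\int F\,\diff\bbP_{\min}$, which therefore lies in $\conv(F(\Gamma(\cSo)))$. Caratheodory's theorem (the cited Thm.~18 of~\cite{eggleston58}) applied in the $d^2$-dimensional affine space then writes this barycenter as a convex combination of at most $d^2+1$ points of $F(\Gamma(\cSo))$; pulling the coefficients back along $F^{-1}$ produces a discrete probability measure supported on at most $d^2+1$ points of $\Gamma(\cSo)$ that lies in $\Pi(\cSo)$ and attains $v_{\min}(\cSo)$. The argument for $v_{\max}(\cSo)$ is identical. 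I expect the only mildly delicate point to be the reduction from the possibly non-atomic optimizer $\bbP_{\min}$ to an element of the convex hull itself (rather than its closure), but this is automatic in finite dimensions as soon as compactness of $F(\Gamma(\cSo))$ is in hand.
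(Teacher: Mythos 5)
Your argument is correct and is essentially the paper's own proof: the paper likewise takes an optimizer in $\Pi(\cSo)$ (attained by compactness), notes that it is constrained by $d^2$ real moment conditions\,---\,$d^2-1$ from $\int \diff \bbP(\rho)\,\rho = \sigma^*(\cSo)$ and one from the variance integral\,---\,and invokes Caratheodory's theorem to produce a discrete measure on at most $d^2+1$ points of $\Gamma(\cSo)$. Your map $F(\rho) = \bigl(\rho, V(\rho\|\sigma^*(\cSo))\bigr)$ merely spells out the standard reduction behind that citation, including the (correctly handled) point that the barycenter lies in the convex hull itself because $F(\Gamma(\cSo))$ is compact in finite dimension.
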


\subsection{Second-Order Approximation for the Classical Capacity}
\label{sec:res-main}

\subsubsection{Capacity of Classical-Quantum Channels}

Our main result is the evaluation of the second-order asymptotics for the capacity of c-q channels with general input.
(Recall that we consider general channels $\cW: \cX \to \cS$, where $\cX$ is an arbitrary set\footnote{In particular, this set is not assumed to be countable or have any topological structure.} and $\cS$ is the set of quantum states on an arbitrary finite-dimensional Hilbert space.)

\begin{theorem}
  \label{th:main}
  Let $\eps \in (0,1)$ and $\cW$ be a c-q channel. Setting $\cSo = \overline{\im(\cW)}$, we find
  \begin{align}
  &\log M^*(\cW^n, \eps) = n\, C(\cW) + \sqrt{n\, V_{\eps}(\cW)}\, \Phi^{-1}(\eps) + K(n, \cSo, \eps), \quad \textrm{where} \\
   &\qquad \quad C(\cW) = \chi(\cSo) \quad \textrm{and} \quad V_{\eps}(\cW) = v_{\eps}(\cSo) := \begin{cases} v_{\min}(\cSo) & \textrm{if }\ 0 < \eps \leq \frac12 \\
       v_{\max}(\cSo) & \textrm{if }\ \frac12 < \eps < 1 \end{cases} .
  \end{align}
  We have $K(n, \cSo, \eps) = o(\sqrt{n})$ for all channels. Moreover, if $\cSo$ is finite and $v_{\eps}(\cSo) > 0$, we have $K(n, \cSo, \eps) = O(\log n)$. 
\end{theorem}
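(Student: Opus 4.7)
The plan is to establish matching direct and converse bounds, each of which reduces the multi-shot problem to a one-shot bound governed by an appropriate divergence quantity on the output space.

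For the direct part, I would first invoke Lemma~\ref{lm:cara} to fix a discrete peripheral decomposition $\bbP^* \in \Pi(\cSo)$ supported on at most $d^2+1$ points in $\Gamma(\cSo)$ and attaining $v_\eps(\cSo)$. I would then pick, via the remark following Theorem~\ref{th:radius}, a preimage $\cW^{-1}(\Gamma(\cSo))$ so that $\{\bbP^*, \cW^{-1}(\Gamma(\cSo))\}$ is an input ensemble with averaged output $\sigma^*(\cSo)$. Random coding with codewords drawn i.i.d.\ from $(\bbP^*)^{\otimes n}$, combined with a standard one-shot achievability bound expressed in terms of $D_h^{\eps}$ (Hayashi--Nagaoka plus the $\eps$-hypothesis testing divergence) applied with the product reference $\sigma^*(\cSo)^{\otimes n}$, reduces the problem to lower bounding $D_h^{\eps}\!\left(\rho^{\otimes n} \,\big\|\, \sigma^*(\cSo)^{\otimes n}\right)$ on average. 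Since $\log \rho - \log \sigma^*(\cSo)$ becomes a sum of i.i.d.\ self-adjoint random variables with mean $C(\cW)$ and variance $V_\eps(\cW)$, a quantum Berry--Esseen (Nagaoka--Hayashi/Li/Tomamichel--Hayashi) bound on $D_h^{\eps}$ yields $n C(\cW) + \sqrt{n V_\eps(\cW)}\Phi^{-1}(\eps) + O(\log n)$.

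For the converse part, the novel ingredient is to stay entirely in the output space. I would first establish a one-shot bound of the form $\log M^*(\cW, \eps) \leq \chi_h^{\eps}(\overline{\im(\cW)}) + O(1)$, where $\chi_h^{\eps}(\cSo) := \inf_{\sigma\in\cS}\sup_{\rho\in\cSo} D_h^{\eps}(\rho\|\sigma)$ is the hypothesis-testing divergence radius; this uses monotonicity of $D_h^{\eps}$ under the measurement CPTP map together with a data-processing argument against the optimal $\sigma$. Applied to $\cW^n$, choosing the test state $\sigma^*(\cSo)^{\otimes n}$, every $\rho^{(n)} \in \im(\cW^n) \subseteq \cSo^{\otimes n}$ is a product $\rho_1\otimes\cdots\otimes\rho_n$, so the hypothesis-testing divergence again reduces to a sum of independent (not i.i.d.) log-likelihood observables. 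For inputs all lying close to $\Gamma(\cSo)$, a non-i.i.d.\ Berry--Esseen bound together with the definitions of $v_{\min}, v_{\max}$ yields the claimed second-order term. For inputs with some $\rho_i$ bounded away from $\Gamma(\cSo)$, Theorem~\ref{th:radius} guarantees a strict gap $D(\rho_i\|\sigma^*(\cSo)) < \chi(\cSo) - \delta$, and this first-order deficit dominates any possible increase in variance.

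The main obstacle I expect is the converse in the case where $\cSo$ is not discrete, since the sup over $\rho^{(n)} \in \cSo^{\otimes n}$ is then over an uncountable set and a naive union bound is unavailable. I would resolve this by constructing a $\gamma$-net $\cN_\gamma \subseteq \cSo$ of cardinality $\poly(1/\gamma)$ in the trace distance (mirroring Winter~\cite[Thm.~II.7]{winterthesis}), using the continuity properties of $D_h^{\eps}(\cdot\|\sigma^*(\cSo))$ and $V(\cdot\|\sigma^*(\cSo))$ listed after the definitions to transfer the bound from any $\rho_i$ to its nearest net point with an error $O(\gamma)$, and then paying an additive $O(\log |\cN_\gamma^{\otimes n}|) = O(n \log(1/\gamma))$ through property~3 of Lemma~\ref{lm:hypo-prop} applied to a uniform mixture over the net. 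Choosing $\gamma = 1/\poly(n)$ keeps this cost $O(\log n)$ and gives $K = O(\log n)$ in the finite case, while in the general case a slightly cruder choice yields only $K = o(\sqrt n)$; the positivity assumption $v_\eps(\cSo) > 0$ is needed so that Berry--Esseen is non-degenerate and the Gaussian limit governs the third-order remainder rather than a finer lattice correction.
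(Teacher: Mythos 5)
Your direct part follows essentially the paper's own route (Lemma~\ref{lm:cara}, a peripheral decomposition, the Wang--Renner/Hayashi--Nagaoka one-shot bound, and an i.i.d.\ Berry--Esseen expansion), and is fine at sketch level provided you note that the cross term in the variance of the joint state vanishes precisely because $P\in\Pi(\cSo)$ (Lemma~\ref{lm:v-expand}). The converse, however, has a genuine gap. Your dichotomy --- ``all $\rho_i$ close to $\Gamma(\cSo)$, hence second-order control via $v_{\min},v_{\max}$'' versus ``some $\rho_i$ bounded away from $\Gamma(\cSo)$, hence a dominating first-order deficit'' --- does not cover sequences in which every $\rho_i\in\Gamma(\cSo)$ but the empirical barycenter $\frac1n\sum_i\rho_i$ is far from $\sigma^*(\cSo)$. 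For such sequences $\sum_i D(\rho_i\|\sigma^*(\cSo))=n\chi(\cSo)$ exactly, so testing against $\sigma^*(\cSo)^{\otimes n}$ yields no first-order penalty, while the conditional variance $V(P_{\rho^n}|\sigma^*(\cSo))$ is \emph{not} confined to $[v_{\min},v_{\max}]$, since those extrema are optimizations only over $\Pi(\cSo)$, i.e.\ over measures on $\Gamma(\cSo)$ whose barycenter equals $\sigma^*(\cSo)$. For $\eps\le\frac12$ the variance can fall below $v_{\min}$, and for $\eps>\frac12$ it can exceed $v_{\max}$, so your bound overshoots the claimed $\sqrt{n\,v_\eps(\cSo)}\,\Phi^{-1}(\eps)$ term. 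This is exactly the obstruction the paper flags before~\eqref{eq:sigman}; its resolution is to replace the reference state by the mixture $\sigma^n$ in~\eqref{eq:sigman}, which includes $\tau^{\otimes n}$ for all $\tau$ in a net $\cG^\gamma$ of the \emph{whole} state space, and, for an off-center sequence, to extract via Part~3 of Lemma~\ref{lm:hypo-prop} the net element near the barycenter: then $\sum_i D(\rho_i\|\tau)\approx n\,I(P_{\rho^n})\le n\tilde\chi_2^\nu<n\chi(\cSo)$ restores a first-order deficit (case (b) of Proposition~\ref{pr:conv-bad}). Your proposal contains no mechanism of this kind.

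Relatedly, your use of the net is both misdirected and miscounted. No union bound over $\cSo^{\otimes n}$ is needed: the non-asymptotic bounds (Proposition~\ref{pr:asymptotics}) are uniform over sequences, so the supremum is handled pointwise; the net's true role in the paper is to supply barycenter-adapted reference states with $\lambda_{\min}(\tau)$ bounded below (Lemma~\ref{lm:net}). Moreover, mixing over $\cN_\gamma^{\otimes n}$ and extracting a particular $n$-tuple costs $\log|\cN_\gamma^{\otimes n}|=n\log|\cN_\gamma|$, which for $\gamma=1/\poly(n)$ is $\Theta(n\log n)$ --- this destroys even the first-order term, and is not $O(\log n)$ as you claim; the paper's mixture contains only $1+|\cG^\gamma|$ i.i.d.\ product states, so the extraction cost is $O(\log(1/\gamma))$, constant in $n$ for fixed $\gamma$. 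Finally, the $O(\log n)$ remainder for finite $\cSo$ does not come from tuning $\gamma$: it requires the additional type-indexed products in~\eqref{eq:sigman2} together with the argument that optimizing $nI(P)+\sqrt{nV(P)}\,\Phi^{-1}(\eps)$ over $\Pi^\nu$ instead of $\Pi$ costs only $O(1)$, which rests on the strict concavity of the entropy (Lemma~\ref{lm:strict}); in the general case the paper's $o(\sqrt n)$ arises from the $\nu\searrow0$ limit via Lemma~\ref{lm:v-limit}, not from a net-size trade-off.
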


\begin{remark}
  The \emph{$\eps$-channel dispersion} is an operational quantity defined as~\cite[Eq. (221)]{polyanskiy10}
  \begin{align}
     V_{\eps}(\cW) := \limsup_{n\to\infty} \frac{1}{n} \bigg( \frac{n C(\cW) - \log M^*(\cW^n, \eps)}{\Phi^{-1}(\eps)} \bigg)^2 .
  \end{align}
  Our results imply that it equals $v_{\eps}(\cSo)$, the minimal or maximal peripheral information variance of the channel image, depending on the value of $\eps$.
\end{remark}

\begin{remark}
  Traditionally, classical-quantum channels are studied for the case when $\cX$ is discrete. In our framework, this corresponds to a discrete set $\cSo = \{ \cW(x) \,|\, x \in \cX \}$.
\end{remark}

\begin{remark}
Some restrictions on $\cSo$ are necessary in order to show that $K(n, \cSo,\eps) = O(\log n)$. Indeed, there exists a class of classical discrete memoryless channels, so-called exotic channels~\cite[p.~2231 and App.~H]{polyanskiy10}, for which $v_{\eps}(\cSo) = 0$ and $K(n, \cSo, \eps) = \Theta(n^{1/3})$ hold~\cite[Thm.~51]{polyanskiythesis10}.
\end{remark}

\noindent We sketch the main ideas and outline of our proof in the following.

\paragraph{Summary of the Proof of the Direct Part:}
The direct part of Theorem~\ref{th:main}, established in Section~\ref{sec:direct}, is derived employing a one-shot bound due to Wang and Renner that relates $M^*(\cW^n, \eps)$ with the $\eps$-hypothesis-testing divergence, $D_h^{\eps}(\cdot\|\cdot)$, defined in~\eqref{eq:defhypo} above. The bound is valid for classical-quantum channels with finite input alphabets and the asymptotics are derived in this setting based upon the second-order asymptotics of the hypothesis testing divergence evaluated on i.i.d.\ states established in~\cite{li12} and~\cite{tomamichel12}. Finally, a simple application of Caratheodory's theorem (Lemma~\ref{lm:cara}) shows that it is possible to achieve the second-order asymptotics with finite alphabets (of size depending on the dimension of the output space).

\paragraph{Summary of the Proof of the Converse Part:}
The converse part of Theorem~\ref{th:main} is proved in Sections~\ref{sec:proof/one-shot-converse}--\ref{sec:converse}. The proof employes a one-shot analogue of the divergence radius in Definition~\ref{def:radius}.
\begin{definition}
Let $\eps \in (0,1)$ and $\cSo \subseteq \cS$. The $\eps$-\emph{hypothesis-testing divergence radius} is defined as
\begin{align}
  \chi_h^{\eps}(\cSo) := \inf_{\sigma \in \cS} \sup_{\rho \in \cSo} D_h^{\eps}(\rho\|\sigma) .
\end{align}
\end{definition}

This quantity, evaluated for the channel image, constitutes an upper bound on $M^*(\cW, \eps)$ for c-q channels with general input. In Section~\ref{sec:proof/one-shot-converse}, we establish the following one-shot converse bound:

\begin{proposition} \label{pr:one-shot-converse}
  Let $\eps \in (0,1)$ and let $\cW$ be a c-q channel. For any $\mu \in (0, 1-\eps)$, we have
  \begin{align}
  \log M^*(\cW,\eps ) \leq \chi_h^{\eps+\mu}\Big(\overline{\im(\cW)}\Big) + \log \frac{\eps+\mu}{\mu(1-\eps-\mu)} . \label{eq:conv}
  \end{align}
\end{proposition}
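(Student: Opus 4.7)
The proof is a one-shot meta-converse of Wang--Renner type, pushed through in the output space via the divergence radius. Fix any code $\cC = \{\cM, e, \cD\}$ for $\cW$ with $\perr(\cC, \cW) \leq \eps$, write $M = |\cC|$, and set $\rho_m := \cW(e(m))$ and $\eps_m := 1 - \tr(Q_m \rho_m)$ for each $m \in \cM$. The average error bound is $\frac{1}{M}\sum_m \eps_m \leq \eps$, so Markov's inequality yields
\begin{align}
\cM' := \{ m \in \cM : \eps_m \leq \eps + \mu \} \quad \text{with} \quad |\cM'| \geq \frac{\mu}{\eps+\mu}\, M.
\end{align}
For every $m \in \cM'$ the operator $Q_m$ is a feasible test in the definition of $\beta_{1-\eps-\mu}(\rho_m\|\sigma)$ for any state $\sigma \in \cS$, since $\tr(Q_m \rho_m) \geq 1 - \eps - \mu$ and $0 \leq Q_m \leq \id$.

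Hence, using the identity $\beta_{1-\eps-\mu}(\rho_m\|\sigma) = (1-\eps-\mu)\, 2^{-D_h^{\eps+\mu}(\rho_m\|\sigma)}$ from~\eqref{eq:defhypo},
\begin{align}
\tr(Q_m \sigma) \;\geq\; \beta_{1-\eps-\mu}(\rho_m\|\sigma) \;=\; (1-\eps-\mu)\, 2^{-D_h^{\eps+\mu}(\rho_m\|\sigma)}.
\end{align}
Summing over $m \in \cM'$ and invoking the POVM completeness $\sum_{m \in \cM} Q_m = \id$, which gives $\sum_{m\in \cM'} Q_m \leq \id$ and therefore $\sum_{m\in\cM'} \tr(Q_m \sigma) \leq \tr(\sigma) = 1$, I obtain
\begin{align}
|\cM'|\, (1-\eps-\mu)\, 2^{-\max_{m \in \cM'} D_h^{\eps+\mu}(\rho_m\|\sigma)} \;\leq\; \sum_{m\in\cM'} \tr(Q_m\sigma) \;\leq\; 1.
\end{align}

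The final step is to optimize over $\sigma$ and lift from the code's states to the channel image. Since each $\rho_m \in \im(\cW) \subseteq \overline{\im(\cW)} = \cSo$, we have $\max_{m \in \cM'} D_h^{\eps+\mu}(\rho_m\|\sigma) \leq \sup_{\rho \in \cSo} D_h^{\eps+\mu}(\rho\|\sigma)$ for every $\sigma$, and taking the infimum over $\sigma$ gives
\begin{align}
\inf_\sigma \max_{m \in \cM'} D_h^{\eps+\mu}(\rho_m\|\sigma) \;\leq\; \chi_h^{\eps+\mu}(\cSo).
\end{align}
Combining the previous display with $M \leq \frac{\eps+\mu}{\mu}|\cM'|$ and taking logarithms yields exactly~\eqref{eq:conv}. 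The only genuinely delicate point is the inequality $\sum_{m\in\cM'} \tr(Q_m \sigma) \leq 1$: it is what permits the sup over $m$ of $D_h^{\eps+\mu}(\rho_m \| \sigma)$ (rather than an average), which in turn is what enables the passage from the finite collection $\{\rho_m\}$ to the whole channel image and thus to the divergence radius $\chi_h^{\eps+\mu}(\cSo)$. Everything else is a one-line application of Markov's inequality and the very definition of $\beta_{1-\eps-\mu}$; no regularity assumption on the input alphabet $\cX$ enters at any point, which is precisely the advantage of working with $\chi_h^{\eps+\mu}$ on the output side.
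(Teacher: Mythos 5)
Your proof is correct and follows essentially the same route as the paper's: Markov-type expurgation to a good set of size at least $\frac{\mu}{\eps+\mu}|\cM|$, feasibility of $Q_m$ as a test in $\beta_{1-\eps-\mu}(\rho_m\|\sigma)$, the POVM normalization $\sum_m \tr(Q_m\sigma)\leq 1$, and then the sup over $\overline{\im(\cW)}$ followed by the inf over $\sigma$. The only cosmetic difference is that the paper isolates the single codeword $m^*$ minimizing $\tr(Q_m\sigma)$ over the good set, whereas you sum over the whole good set and bound via $\max_m D_h^{\eps+\mu}(\rho_m\|\sigma)$\,---\,the same pigeonhole step in a slightly different guise.
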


This bound should be compared to the bounds by Renner-Wang~\cite{wang10} and Matthews-Wehner~\cite{matthews12}. Both of these works also establish one-shot converse bounds in terms of the $\eps$-hypothesis testing divergence (see also~\cite[Remark~15]{hayashi03}). However, our result crucially differs in that our bound only depends on the image of the channel, independently of the input alphabet supported by the channel. It thus allows us to treat the remaining evaluation as a problem on the output space.

Applied to the $n$-fold memoryless repetition of the c-q channel~$\cW$, it yields
\begin{align}
  \log M^*(\cW^n,\eps ) \leq \chi_h^{\eps+\mu}(\cSo^{\otimes n}) + O(\log n) . \label{eq:sketch1}
\end{align}
where $\mu$ is chosen inversely polynomial in $n$. Proposition~\ref{pr:thedifficultpart} in Section~\ref{sec:converse}, then establishes that
\begin{align}
  \chi_h^{\eps+\mu}\big(\cSo^{\otimes n}\big) \leq n\, \chi(\cSo) + \sqrt{n\, v_{\eps}(\cSo)}\, \Phi^{-1}(\eps) + o(\sqrt{n}) \label{eq:sketch2}\,,
\end{align}
which, combined with~\eqref{eq:sketch1}, concludes the proof.

This asymptotic expansion in~\eqref{eq:sketch2} constitutes the technically most challenging part of our derivation.
To evaluate these asymptotics for a suitable choice of $\sigma^n$ we extend the second-order approximation of~\cite{tomamichel12} to non-identical product distributions. Moreover, we show that these bounds hold uniformly in all sequences $\rho^n = \bigotimes_{i=1}^n \rho_i \in \cSo^{\otimes n}$ that appear in the supremum above. This is particularly challenging because we have to treat separately sequences for which the average relative entropy variance is small, and hence the convergence to the second-order approximation is too slow.\footnote{For a classical analogue, recall that the convergence speed in the Berry-Esseen theorem is inversely proportional to $\sigma^{3}$, where $\sigma^2$ is the average variance of a sequence of non-i.i.d.~random variables.}
To tackle this, we employ a net on $\cSo$ and in particular do not appeal to the use of constant composition codes and type-counting arguments, which are workhorses 
of the second-order analysis for discrete memoryless
channels in the classical setting. Our novel proof thus departs from the usual treatment, which in particular allows us to consider general input alphabets.

\subsubsection{Classical Capacity for Image-Additive Quantum Channels}
\label{sec:image-add}

First, note that the achievability bounds in Theorem~\ref{th:main} in fact apply for the classical capacity of all quantum channels, and can be achieved using product states. To see this, let $\cX$ be a set of quantum states (whether the states in $\cX$ are modeled as density operators on a Hilbert space or states of a C* algebra is irrelevant here)
and $\cW$ be the quantum channel from $\cX$ to $\cS$, as usual. Obviously the channel is now a completely positive trace-preserving map, but we do not need to use this structure here and focus again on its image, $\cSo = \im(\cW)$, where closure is now unnecessary since the image is compact.
Thus, for all quantum channels $\cW$, we have\footnote{But note that $\chi(\cSo)$ could generally be smaller than $C(\cW)$.}
  \begin{align}
  &\log M^*(\cW^n, \eps) \geq n\, \chi(\cSo) + \sqrt{n\, v_{\eps}(\cSo)}\, \Phi^{-1}(\eps) + o(\sqrt{n}) \,.
\end{align}

Moreover, the converse part of the proof of Theorem~\ref{th:main} can be easily adapted to cover general image-additive quantum channels. 
The logarithm of the maximum codebook size of a quantum channel is certainly also upper bounded by $\chi_h^{\eps}(\cSo)$ as in~\eqref{eq:sketch1}, so in particular we find
\begin{align}
    \log M^*(\cW^n,\eps ) \leq \chi_h^{\eps+\mu}(\cSo^n) + \log \frac{\eps+\mu}{\mu(1-\eps-\mu)} , \qquad \textrm{where} \quad \cSo^n = {\im(\cW^n)} \,.
\end{align}

However, the crucial difference vis-\`a-vis classical-quantum channels is that here we generally have $\cSo^n \neq \cSo^{\otimes n}$ as the channel image can be enlarged in the presence of non-product input states. Restricting to image-additive channels $\cW$, however, we find
\begin{align}
  \cSo^n = {\im(\cW^n)} = { \conv(\im(\cW)^{\otimes n}) } = \conv(\cSo^{\otimes n})
\end{align}
Now the only missing observation is that $\chi_h^{\eps}(\conv(\cSo^{\otimes n})) = \chi_h^{\eps}(\cSo^{\otimes n})$ for all $\cSo \subseteq \cS$, which is an immediate consequence of the quasi-convexity of $\rho \mapsto D_h^{\eps}(\rho\|\sigma)$, shown in Part~4 of Lemma~\ref{lm:hypo-prop}.
%
Hence, Proposition~\ref{pr:thedifficultpart} directly applies to this situation as well and we arrive at the following corollary:

\begin{corollary}
  Let $\eps \in (0,1)$ and $\cW$ be an image-additive quantum channel. Then,
  \begin{align}
  &\log M^*(\cW^n, \eps) = n\, C(\cW) + \sqrt{n\, V_{\eps}(\cW)}\, \Phi^{-1}(\eps) + o(\sqrt{n})  \end{align}
  with $C(\cW)$ and $V_{\eps}(\cW)$ as defined in Theorem~\ref{th:main}.
\end{corollary}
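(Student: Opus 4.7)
The plan is to derive the corollary by combining the one-shot converse bound with image-additivity on the output side, and obtaining achievability from Theorem~\ref{th:main} by restricting the channel inputs to product states. Since the image $\cSo = \im(\cW)$ of a CPTP map is compact, the closure operation in Theorem~\ref{th:main} is unnecessary, and the definitions of $C(\cW) = \chi(\cSo)$ and $V_\eps(\cW) = v_\eps(\cSo)$ carry over verbatim.

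For the achievability direction, I would regard $\cW$ as inducing a c-q channel $\widetilde{\cW}\colon \cSo \to \cS$ given by the identity embedding of product inputs (equivalently, preparing a pure-state label and then running the channel). Any code for $\widetilde{\cW}^{\,n}$ yields a product-input code for $\cW^n$ of the same size and error, so $M^*(\cW^n,\eps) \geq M^*(\widetilde{\cW}^{\,n},\eps)$. Applying the direct part of Theorem~\ref{th:main} to $\widetilde{\cW}$ (with $\im(\widetilde{\cW}) = \cSo$) immediately yields the lower bound
\begin{align}
  \log M^*(\cW^n,\eps) \geq n\,\chi(\cSo) + \sqrt{n\, v_\eps(\cSo)}\,\Phi^{-1}(\eps) + o(\sqrt{n}),
\end{align}
without using image-additivity at all.

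For the converse, I would apply Proposition~\ref{pr:one-shot-converse} directly to $\cW^n$, taking $\mu$ inversely polynomial in $n$, which gives
\begin{align}
  \log M^*(\cW^n,\eps) \leq \chi_h^{\eps+\mu}\bigl(\im(\cW^n)\bigr) + O(\log n).
\end{align}
Image-additivity then lets me replace $\im(\cW^n)$ by $\conv(\cSo^{\otimes n})$. The key simplification is that $\chi_h^{\eps}$ is insensitive to taking convex hulls: for any $\sigma$, quasi-convexity of $\rho \mapsto D_h^{\eps}(\rho\|\sigma)$ (Part~4 of Lemma~\ref{lm:hypo-prop}) gives $\sup_{\rho \in \conv(\cSo^{\otimes n})} D_h^{\eps}(\rho\|\sigma) = \sup_{\rho \in \cSo^{\otimes n}} D_h^{\eps}(\rho\|\sigma)$, and hence $\chi_h^{\eps+\mu}(\conv(\cSo^{\otimes n})) = \chi_h^{\eps+\mu}(\cSo^{\otimes n})$. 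At this point the problem has been reduced to bounding $\chi_h^{\eps+\mu}(\cSo^{\otimes n})$, which is exactly the content of Proposition~\ref{pr:thedifficultpart}.

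Combining the two directions gives the matching asymptotic expansion, establishing the corollary. The main obstacle — namely, the asymptotic expansion of $\chi_h^{\eps+\mu}$ on product sets stated in~\eqref{eq:sketch2} — was already handled in Proposition~\ref{pr:thedifficultpart}, and here it is simply invoked; so the only genuinely new ingredient is the quasi-convexity reduction from $\conv(\cSo^{\otimes n})$ to $\cSo^{\otimes n}$, which is where image-additivity is used and which also transparently explains why the dispersion is governed by the geometry of $\cSo = \im(\cW)$ itself rather than of $\im(\cW^n)$.
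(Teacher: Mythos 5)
Your proposal is correct and follows essentially the same route as the paper: achievability by applying the direct part of Theorem~\ref{th:main} to the channel viewed as a c-q channel on (product) state inputs, and the converse via Proposition~\ref{pr:one-shot-converse}, image-additivity $\im(\cW^n)=\conv(\cSo^{\otimes n})$, the quasi-convexity identity $\chi_h^{\eps}(\conv(\cSo^{\otimes n}))=\chi_h^{\eps}(\cSo^{\otimes n})$ from Part~4 of Lemma~\ref{lm:hypo-prop}, and Proposition~\ref{pr:thedifficultpart}. No gaps to report.
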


Similarly, if the input of the channel is restricted to separable states then clearly 
the restricted image satisfies $\im_{\textrm{sep}}(\cW^n) = \conv(\im(\cW)^{\otimes n})$ and thus Proposition~\ref{pr:thedifficultpart} again suffices to determine the second-order asymptotics.\footnote{The first-order asymptotics (for the case of product state inputs) were discussed in detail in~\cite{fujiwara98}.}

\begin{corollary}
  \label{cor:qq} Let $\eps \in (0,1)$, let $\cW$ be any quantum channel. Let $M_{\textrm{sep}}^*(\cW^n, \eps)$ denote the maximum size of a codebook for classical information transmission over $\cW$ with average error $\eps$ when the channel is restricted to separable input states. Then,
  \begin{align}
  \log M_{\textrm{sep}}^*(\cW^n, \eps) =  n\, C(\cW) + \sqrt{n\, V_{\eps}(\cW)}\, \Phi^{-1}(\eps) + o(\sqrt{n}) , \label{eq:cor}
  \end{align}
    with $C(\cW)$ and $V_{\eps}(\cW)$ as defined in Theorem~\ref{th:main}.
\end{corollary}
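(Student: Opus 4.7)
The plan is to mirror, nearly verbatim, the argument given just above for image-additive channels, with $\im(\cW^n)$ replaced everywhere by the restricted image $\im_{\textrm{sep}}(\cW^n)$. The first step is to verify the structural identity
\begin{align}
  \im_{\textrm{sep}}(\cW^n) \;=\; \conv\bigl(\im(\cW)^{\otimes n}\bigr) \;=\; \conv(\cSo^{\otimes n}),
\end{align}
where $\cSo = \im(\cW)$. This follows from linearity of $\cW^n$: any separable input is a convex mixture of product states, each of which maps to a product of images and hence to an element of $\cSo^{\otimes n}$; conversely, every element of $\conv(\cSo^{\otimes n})$ is realized in this way. Note that here, in contrast to Theorem~\ref{th:main}, the image is already compact since $\cW$ is defined on the (compact) set of input density operators, so no closure is needed.

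For the achievability direction, the observation is that product-state inputs form a subset of separable inputs. Viewing $\cW$ as a c-q channel $\tilde{\cW}$ whose classical input alphabet is the set of input density operators, we obtain $M_{\textrm{sep}}^*(\cW^n,\eps) \geq M^*(\tilde{\cW}^n,\eps)$, and since $\overline{\im(\tilde{\cW})} = \cSo$, Theorem~\ref{th:main} applied to $\tilde{\cW}$ produces the lower bound in \eqref{eq:cor}. For the converse, the one-shot bound of Proposition~\ref{pr:one-shot-converse} continues to hold when the channel input is restricted to a subset (the derivation only uses that outputs lie in the image under the allowed inputs). Applied to $n$ channel uses with separable inputs, it gives
\begin{align}
  \log M_{\textrm{sep}}^*(\cW^n, \eps) \leq \chi_h^{\eps+\mu}\bigl( \conv(\cSo^{\otimes n}) \bigr) + \log \tfrac{\eps+\mu}{\mu(1-\eps-\mu)}.
\end{align}
Now invoke Part~4 of Lemma~\ref{lm:hypo-prop}, which asserts that $\rho \mapsto D_h^{\eps+\mu}(\rho\|\sigma)$ is quasi-convex; this forces the supremum defining $\chi_h^{\eps+\mu}$ to be attained at an extreme point, so that $\chi_h^{\eps+\mu}(\conv(\cSo^{\otimes n})) = \chi_h^{\eps+\mu}(\cSo^{\otimes n})$. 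Choosing $\mu$ to be inverse polynomial in $n$ (so that the additive $\log(\eps+\mu)/(\mu(1-\eps-\mu))$ contributes only $O(\log n)$) and applying the asymptotic expansion of Proposition~\ref{pr:thedifficultpart} to $\chi_h^{\eps+\mu}(\cSo^{\otimes n})$ closes the argument.

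Essentially no new technical content is required, so there is no real ``hard part'': the proof is a bookkeeping exercise that isolates the two ingredients already in hand, namely (i) the fact that restricting to separable inputs produces exactly the convex hull of the $n$-fold product image, and (ii) the quasi-convexity of the hypothesis-testing divergence radius in its first argument. The only place one needs to be marginally careful is to double-check that Proposition~\ref{pr:one-shot-converse} is insensitive to the input restriction, which is immediate since its proof passes to the output space immediately and does not exploit the full structure of $\im(\cW)$.
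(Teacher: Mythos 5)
Your proposal is correct and follows essentially the same route as the paper: identify $\im_{\textrm{sep}}(\cW^n)=\conv(\cSo^{\otimes n})$, get achievability from the product-state (c-q) view of Theorem~\ref{th:main}, apply the one-shot converse of Proposition~\ref{pr:one-shot-converse} to the restricted image, remove the convex hull via the quasi-convexity in Part~4 of Lemma~\ref{lm:hypo-prop}, and finish with Proposition~\ref{pr:thedifficultpart} with $\mu$ inverse polynomial in $n$. No gaps; this matches the paper's argument in Section~\ref{sec:image-add}.
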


\begin{example}
Qubit Pauli channels are symmetric under reflection at the center of the Bloch sphere. As such, $\sigma^*(\cSo) = \frac12 \id$ and it is furthermore easy to verify that any capacity-achieving ensemble (of minimal size) is commutative. Hence, the capacity and dispersion of a Pauli channel equal those of a (classical) binary symmetric channel (see, e.g.,~\cite[Thm.~52]{polyanskiy10}).
\end{example}

\begin{figure}[t]
  \subfigure[\,The sets $\cSo^{\gamma} = \im(\cE_{\textrm{ad}}^{\gamma})$ projected onto the xz-plane of the Bloch sphere for $\gamma \in \{0, \frac14, \frac34\}$.\label{fig:ad1}]{
\begin{overpic}[width = .35\columnwidth]{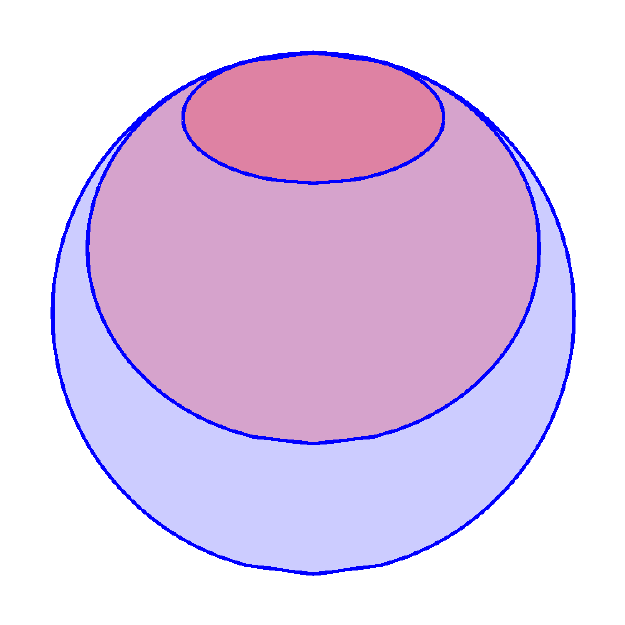}
  \put(50,91.8){\circle*{2}}
  \put(48,95){$|0\rangle$}
  \put(50,8.8){\circle*{2}}
  \put(48,2){$|1\rangle$}
  \put(8.5,50){\circle*{2}}
  \put(-2,50){$|+\rangle$}
  \put(91.8,50){\circle*{2}}
  \put(94,50){$|-\rangle$}
  
  \put(62,15){$\cSo^0$}
  \put(50,66){\circle*{2}}
  \put(45,58){$\sigma^*(\cSo^{\frac14})$}
  \put(65,37){$\cSo^{\frac14}$}
  \put(14.5,66){\circle*{2}}
  \put(85.3,66){\circle*{2}}
  \multiput(14.5,66)(5,0){14}{\line(1,0){3}}
  
  \put(16,58){$\Gamma(\cSo^{\frac14})$}
  \put(70,58){$\Gamma(\cSo^{\frac14})$}

  \put(50,85){\circle*{2}}
  \put(42,77){$\sigma^*(\cSo^{\frac34})$}
  \put(62,77){$\cSo^{\frac34}$}  
  \put(30.5,85){\circle*{2}}
  \put(69.5,85){\circle*{2}}
  \multiput(30.5,85)(5,0){8}{\line(1,0){3}}

  \put(16,89){$\Gamma(\cSo^{\frac34})$}
  \put(71,89){$\Gamma(\cSo^{\frac34})$}
\end{overpic}}
  \hspace{10cm}
  \subfigure[\,Divergence radius, $\chi(\cSo^{\gamma})$ (in bits, solid line), and peripheral information variance, $v_{\min}(\cSo^{\gamma}) = v_{\max}(\cSo^{\gamma})$ (in bits$^2$, dashed line), as a function of $\gamma$.\label{fig:ad2}]{\includegraphics[scale=1.32]{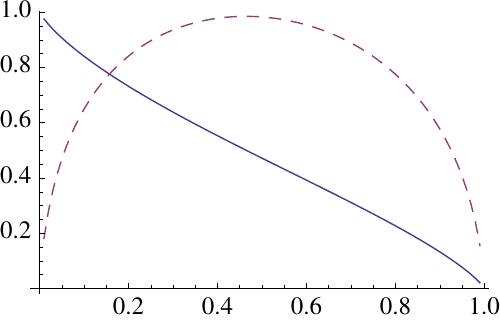}}
  \hspace{1cm}
  \subfigure[\,Second-Order approximation in~\eqref{eq:cor} for $\frac{1}{n} \log M_{\textrm{sep}}^*\big( (\cE_{\textrm{ad}}^{\gamma})^{\otimes n}, \eps \big)$ for $\eps = 1\%$, $\gamma \in \{0, \frac14, \frac34\}$ (top to bottom) as a function of $n$. The dashed lines correspond to the asymptotic limit.\label{fig:ad3}]{\includegraphics[scale=1.32]{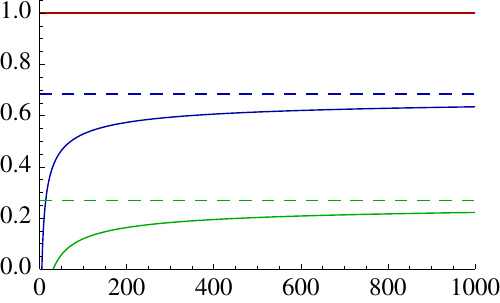}}
  \caption{Geometry and second-order approximation for the amplitude damping channel.}
  \label{fig:ad}
\end{figure}

\begin{example}
 The amplitude damping channel with magnitude $\gamma$ is given as
 \begin{align}
 \cE_{\textrm{ad}}^{\gamma}: \rho \mapsto \left( \begin{array}{cc} 1 & 0 \\ 0 & \sqrt{1-\gamma} \end{array} \right) \rho \left(  \begin{array}{cc} 1 & 0 \\ 0 & \sqrt{1-\gamma} \end{array} \right) + \left(  \begin{array}{cc} 0 & \sqrt{\gamma} \\ 0 & 0 \end{array} \right) \rho \left( \begin{array}{cc} 0 & 0 \\ \sqrt{\gamma} & 0 \end{array} \right) .
\end{align}
 Its channel image, $\cSo^{\gamma} = \im(\cE_{\textrm{ad}}^{\gamma})$, is displayed in Figure~\ref{fig:ad1}. In Fig.~\ref{fig:ad2}, the channel capacity and dispersion are evaluated numerically for different values of $\gamma$. The second-order approximation, i.e.\ the first two terms on the right-hand side~of~\eqref{eq:cor} are plotted as a function of $n$ in Figure~\ref{fig:ad3}.
 
 It was already noted in~\cite[Fig.~1]{schumacher01} that it is necessary to consider non-orthogonal input states to achieve $\chi(\cSo^{\gamma})$\,---\,in particular, $\cE_{\textrm{ad}}^{\gamma}(|0\rangle\!\langle0|) \notin \Gamma(\cSo^{\gamma})$ for general $\gamma \in (0,1)$.
\end{example}

This naturally leaves many open questions. Most intriguingly, it was recently shown that for entanglement-breaking and Hadamard channels, we have~\cite{wilde13}
\begin{align}
  \log M^*(\cE^n, \eps) = n\, \chi(\cSo) + O(\sqrt{n})
\end{align}
Thus, one could reasonably conjecture that a second-order approximation of the form~\eqref{eq:cor} also holds for such channels (and not only image-additive channels). In particular, it would be interesting to see if the second-order term is again given by the peripheral information variance. The proof of the strong converse in~\cite{wilde13} relies on the additivity of a suitable R\'enyi divergence radius~\cite{lennert13,wilde13} of the channel image. However, it appears that their techniques are insufficient to derive a second-order expansion of the $\eps$-hypothesis testing divergence radius.

\section{Proofs: Quantum Divergence Radius}\label{sec:radius}

This section contains various lemmas which, combined, establish Theorem~\ref{th:radius}.
Recall that $\cS$ denotes the set of quantum states on a Hilbert space of dimension $d$, and $\cSo \subseteq \cS$ is an arbitrary closed subset of $\cS$, and thus also compact.

We will later show that the divergence center $\sigma^*(\cSo)$, as defined in Theorem~\ref{th:radius}, is indeed a singleton, but at this point we have to be satisfied with the following statement.

\begin{lemma}
  \label{lm:center}
  The set $\sigma^*(\cSo)$ is nonempty, convex and $\sigma \in \sigma^*(\cSo)$ implies $\sigma \gg \rho$ for all $\rho \in \cSo$.
\end{lemma}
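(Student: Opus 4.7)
The plan is to establish the three claims in the order they are stated, by exploiting compactness of $\cS$ together with the convexity and lower semi-continuity properties of the relative entropy listed just before the lemma. The function $f(\sigma) := \sup_{\rho \in \cSo} D(\rho\|\sigma)$ is the object to analyze; $\sigma^*(\cSo)$ is by definition the set of minimizers of $f$ on $\cS$.

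First I would check that the optimization is nontrivial by exhibiting an upper bound on $\inf f$. Taking $\sigma_0 = \id/d$ (the maximally mixed state), one has $D(\rho\|\sigma_0) = \log d - H(\rho) \leq \log d$ for every $\rho \in \cS$, hence $f(\sigma_0) \leq \log d$ and in particular $\inf_{\sigma \in \cS} f(\sigma) \leq \log d < \infty$. This single observation already yields the support statement: if $\sigma \in \sigma^*(\cSo)$, then $f(\sigma) \leq \log d < \infty$, so $D(\rho\|\sigma) < \infty$ for every $\rho \in \cSo$, which under the standing convention for $D(\cdot\|\cdot)$ is equivalent to $\sigma \gg \rho$ for every $\rho \in \cSo$.

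Nonemptiness I would get from lower semi-continuity plus compactness. For each fixed $\rho$, the map $\sigma \mapsto D(\rho\|\sigma)$ is lower semi-continuous (Property~2 in the preliminaries). A pointwise supremum of lower semi-continuous functions is lower semi-continuous, so $f$ is lower semi-continuous on the compact metric space $(\cS, \delta_{\tr})$ and therefore attains its infimum, so $\sigma^*(\cSo) \neq \emptyset$. Convexity then comes for free from joint convexity of $D$: for each fixed $\rho$, $\sigma \mapsto D(\rho\|\sigma)$ is convex, the pointwise supremum of convex functions is convex, so $f$ is convex on the convex set $\cS$, and the set of minimizers $\sigma^*(\cSo) = \{\sigma \in \cS : f(\sigma) = \inf f\}$ of a convex function on a convex set is convex.

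The main obstacle, such as it is, is purely bookkeeping around the extended-real-valued nature of $D$: one needs to invoke lower semi-continuity (rather than continuity) because $D(\rho\|\cdot)$ can jump to $+\infty$ on the boundary, and to correctly interpret ``$D(\rho\|\sigma) < \infty$ iff $\sigma \gg \rho$'' in the support argument. Strict convexity of $f$ and hence uniqueness of the minimizer are \emph{not} addressed here and are deferred, consistent with the remark in the paper that uniqueness is more delicate than it might appear.
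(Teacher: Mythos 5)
Your proposal is correct and takes essentially the same route as the paper: both minimize the convex function $f(\sigma)=\sup_{\rho\in\cSo}D(\rho\|\sigma)$ over the compact set $\cS$, obtain convexity of the minimizer set from convexity of $f$, and deduce $\sigma\gg\rho$ from finiteness of the optimal value. Your version is in fact slightly more careful than the paper's one-line argument, since you make explicit the lower semi-continuity needed for attainment of the infimum and exhibit $\id/d$ as a witness that the infimum is finite.
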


\begin{proof}
  Since $\cSo$ is compact, $\sup_{\rho \in \cSo} D ( \rho \| \sigma )$ is finite if and only if $\sigma \gg \rho$ for all $\rho \in \cSo$.
  Moreover, since $\cS$ is compact and the function $f: \sigma \mapsto \sup_{\rho \in \cSo} D ( \rho \| \sigma )$ convex, the set of minima contains at least one element and is convex.
\end{proof}

In analogy to Theorem~\ref{th:radius}, we define the set of extremal points in $\cSo$ corresponding to the center $\sigma \in \sigma^*(\cSo)$ as
$\Gamma_{\sigma}(\cSo) := \argmax_{\rho \in \cSo} D\big(\rho \big\|\sigma \big)$.

\begin{proposition}
  \label{pr:contain}
  For every $\sigma \in \sigma^*(\cSo)$, we have $\sigma \in \conv(\Gamma_{\sigma}(\cSo))$.
\end{proposition}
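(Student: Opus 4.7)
My plan is to derive $\sigma \in \conv(\Gamma_\sigma(\cSo))$ from the first-order optimality of $\sigma$ as a minimizer of the convex function $g(\sigma') := \sup_{\rho \in \cSo} D(\rho \| \sigma')$ over the state space. Lemma~\ref{lm:center} already guarantees $\sigma \gg \rho$ for every $\rho \in \cSo$, which lets me restrict all calculations to the subspace $\supp(\sigma)$ and assume without loss of generality that $\sigma$ is strictly positive definite on that subspace. With this reduction in place, the positivity constraint on $\sigma'$ is inactive at $\sigma$, so that only the affine constraint $\tr(\sigma') = 1$ remains.

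\textbf{The key linear map.} A standard computation using the Daleckii--Krein integral representation of the derivative of the matrix logarithm identifies the gradient of $\sigma' \mapsto D(\rho\|\sigma')$ at $\sigma > 0$ with $-T_\sigma[\rho]$, where
\begin{align*}
  T_\sigma[X] \;:=\; \int_0^\infty (\sigma + s\,\id)^{-1}\, X \, (\sigma + s\,\id)^{-1}\,\diff s.
\end{align*}
Two elementary properties of $T_\sigma$ drive the whole argument. First, a direct evaluation in the eigenbasis of $\sigma$ (using $\int_0^\infty \lambda(\lambda+s)^{-2}\,\diff s = 1$ for each eigenvalue $\lambda > 0$) yields the identity $T_\sigma[\sigma] = \id$. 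Second, in that same basis $T_\sigma$ acts as a Schur multiplier with the strictly positive kernel $(\log\lambda_i - \log\lambda_j)/(\lambda_i - \lambda_j)$ (interpreted by continuity as $1/\lambda_i$ on the diagonal), so $T_\sigma$ is a bijection on the self-adjoint operators supported in $\supp(\sigma)$.

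\textbf{Danskin plus KKT.} Since $\cSo$ is compact, $D(\rho\|\cdot)$ is smooth on a neighborhood of $\sigma$, $D(\cdot\|\sigma)$ is continuous on $\cSo$ (by Lemma~\ref{lm:center}), and $D(\rho\|\cdot)$ is convex (matrix-operator concavity of $\log$), Danskin's theorem gives
\begin{align*}
  \partial g(\sigma) \;=\; \conv\bigl\{\, -T_\sigma[\rho] \,:\, \rho \in \Gamma_\sigma(\cSo) \bigr\} \;=\; \bigl\{\, -T_\sigma[\rho^{(\bbP)}] \,:\, \bbP \in \cP(\Gamma_\sigma(\cSo)) \bigr\},
\end{align*}
where the second equality is linearity of $T_\sigma$. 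The KKT stationarity condition for minimizing $g$ subject only to $\tr(\sigma') = 1$ then produces some $\bbP \in \cP(\Gamma_\sigma(\cSo))$ and a Lagrange multiplier $\lambda \in \mathbb{R}$ with $-T_\sigma[\rho^{(\bbP)}] = \lambda\,\id$. Substituting $\id = T_\sigma[\sigma]$, this becomes $T_\sigma[\rho^{(\bbP)} + \lambda\sigma] = 0$, and the injectivity of $T_\sigma$ forces $\rho^{(\bbP)} = -\lambda\sigma$. Taking traces on both sides pins down $\lambda = -1$, so $\rho^{(\bbP)} = \sigma$, which exhibits $\sigma$ as a convex combination of elements of $\Gamma_\sigma(\cSo)$, as required.

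\textbf{Main obstacle.} The most delicate step is the clean application of Danskin's theorem together with the KKT conditions, and the reduction to strictly positive $\sigma$ via Lemma~\ref{lm:center} is essential here: without it, the positivity constraint would contribute additional Lagrange multipliers supported on $\ker(\sigma)$ and the tidy identification $\lambda = -1$ would break down. The remaining work is routine matrix calculus, and the conceptual heart of the proof is the identity $T_\sigma[\sigma] = \id$ combined with injectivity of $T_\sigma$.
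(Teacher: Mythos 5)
Your proof is correct, but it takes a genuinely different route from the paper. The paper argues by contradiction: it introduces the enlarged peripheral sets $\Theta^{\nu} = \{\rho \in \cSo \mid D(\rho\|\sigma) \geq \chi(\cSo)-\nu\}$, assumes $\sigma \notin \conv(\Theta^{\nu})$, and uses the Pythagorean inequality for relative entropy (Lemma~\ref{lm:closest-point}) to show that moving $\sigma$ slightly toward the closest point of $\conv(\Theta^{\nu})$ strictly decreases $\sup_{\rho}D(\rho\|\cdot)$, contradicting optimality; it then passes from $\bigcap_{\nu>0}\conv(\Theta^{\nu})$ to $\conv(\Gamma_\sigma(\cSo))$ via the set-limit Lemma~\ref{lm:set-limit}. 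That argument needs no differentiability at all. You instead run a first-order optimality argument: the Daleckii--Krein derivative of the matrix logarithm identifies the gradient of $D(\rho\|\cdot)$ at $\sigma$ as $-T_\sigma[\rho]$, Danskin's theorem gives $\partial g(\sigma)$ as the convex hull of these gradients over $\Gamma_\sigma(\cSo)$, and the trace-constraint KKT condition together with $T_\sigma[\sigma]=\id$ and injectivity of the Schur multiplier $T_\sigma$ forces $\rho^{(\bbP)}=\sigma$. Your route is shorter, directly produces a peripheral decomposition of $\sigma$ (with finite support via Carath\'eodory, since $\Gamma_\sigma(\cSo)$ is compact so barycenters lie in the convex hull), and cleanly replaces strict convexity considerations by injectivity of $T_\sigma$\,---\,mirroring the classical KKT characterization of capacity-achieving output distributions. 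What it costs is the regularity bookkeeping: you must (and do) use Lemma~\ref{lm:center} to restrict to $\supp(\sigma)$ so that $D(\rho\|\cdot)$ is smooth near $\sigma$ and the positivity constraint is inactive, note that restricting the feasible set only requires necessary conditions, and verify Danskin's hypotheses (compactness of $\cSo$ and $\Gamma_\sigma(\cSo)$, joint continuity of $(\rho,\sigma')\mapsto D(\rho\|\sigma')$ and of the gradient near $\sigma$); these all hold in the finite-dimensional setting here, so I see no gap. The paper's approach, by contrast, avoids Danskin/KKT machinery entirely and handles the boundary geometry purely through convexity and a limiting argument.
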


\begin{proof}
Let us fix $\sigma \in \sigma^*(\cSo)$ to simplify notation. We define 
\begin{align}
\Theta^{\nu} := \big\{ \rho \in \cSo \,\big|\, D(\rho\| \sigma) \geq \chi(\cSo) - \nu \big\} .
\end{align}
and its complement $\bar{\Theta}^{\nu} := \cSo \setminus \Theta^{\nu}$ for any $\nu \geq 0$. We first observe that $\Theta^\nu \subseteq \cSo$ is closed since $D(\cdot\|\sigma)$ is continuous and $\cSo$ is closed itself. Thus, both $\Theta^{\nu}$ and $\conv(\Theta^{\nu})$ are compact.
Moreover, we clearly have $\bigcap_{\nu > 0} \Theta^{\nu} = \Theta^0 = \Gamma_{\sigma}(\cSo)$.

For the sake of contradiction, let us now assume that $\sigma \notin \conv(\Theta^{\nu})$ for some fixed $\nu > 0$. We employ the following lemma (also known as the Pythagorean theorem for relative entropy).

\begin{lemma} \textnormal{\cite[Lem.~3.3]{ohya97}}
\label{lm:closest-point}
  Let $\cSo \subseteq \cS$ be compact convex and let $\sigma \in \cS$. Then, $\tau := \argmin_{\tau \in \cSo} D(\tau\|\sigma)$ is unique. Moreover, for all $\rho \in \cSo$, we have
  \begin{align}
    D(\rho\|\sigma) \geq D(\rho\|\tau) + D(\tau\|\sigma) .
  \end{align}
\end{lemma}

This establishes that there exists a unique state $\tau \in \conv(\Theta^{\nu})$ that minimizes $D(\tau\|\sigma)$. 
Furthermore,  $D(\rho\|\sigma) > D(\rho\|\tau)$ for all $\rho \in \Theta^{\nu}$. Consequently, using the parametrization $\tau^{\lambda} := \lambda \tau + (1-\lambda) \sigma$ and the convexity of $D(\rho\|\cdot)$, we find 
\begin{align}
D(\rho\|\tau^{\lambda}) \leq \lambda D(\rho\|\tau) + (1-\lambda) D(\rho\|\sigma) < D(\rho\|\sigma) \qquad \forall \lambda \in (0, 1) .
\end{align}
Hence, $D(\rho\|\tau^{\lambda}) < D(\rho\|\sigma)$ for all $\rho \in \Theta^{\nu}$ and for all $\lambda \in (0,1)$.

Furthermore, recall that $D(\rho\|\sigma)$ is bounded away from $\chi(\cSo)$ for all $\rho \in \bar{\Theta^{\nu}}$ by definition. 
Due to the continuity of $D(\rho\|\cdot)$, we thus find that for sufficiently small $\lambda > 0$,
\begin{align}
  D\big(\rho \big\| \tau^{\lambda}\big) < \chi(\cSo) \qquad \forall \rho \in \cSo .
\end{align}
However, this implies that $\sigma \notin \sigma^*(\cSo)$ and thus leads to a contradiction.

Hence, we conclude that $\sigma \in \conv(\Theta^{\nu})$ and since this holds for all $\nu > 0$, we find
$\sigma \in \bigcap_{\nu > 0} \conv(\Theta^{\nu})$.
The statement then follows by the following lemma proven in Appendix~\ref{app:set-limit}.
\begin{lemma}
  \label{lm:set-limit}
  Let $\Theta_1 \supseteq \Theta_2 \supseteq \ldots$ be a sequence of compact sets in a finite-dimensional vector space. Then,
  \begin{align}
    \bigcap_{n \in \mathbb{N}} \conv(\Theta_n) = \conv( \Theta_{\infty}) \qquad
    \textrm{whenever}\qquad \Theta_{\infty} := \bigcap_{n \in \mathbb{N}} \Theta_n \neq \emptyset .
  \end{align}
\end{lemma}
\noindent This establishes that
$\bigcap_{\nu > 0} \conv(\Theta^{\nu}) = \conv(\Theta^0)$ and concludes the proof.
\end{proof}

The fact that $\sigma \in \sigma^*(\cSo) \implies \sigma \in \conv(\Gamma_{\sigma}(\cSo))$, first established here, is crucial since it allows the following construction:

Due to Caratheodory's theorem, we may decompose $\sigma$ into a convex combination of (at most $d^2$) peripheral states, namely we may write
\begin{align}
  \label{eq:decomp}
  \sigma = \sum_{\rho \in \cXo} P(\rho)\, \rho, \qquad \textrm{where}\quad  \cXo \subseteq \Gamma_{\sigma}(\cSo),\ |\cXo| \leq d^2 \quad \textrm{and} \quad P \in \cP(\cXo).
\end{align}
Using this decomposition and the fact that $D(\rho\|\sigma) = \chi(\cSo)$ for all $\rho \in \cXo$, we find
\begin{align}
  \label{eq:r-holevo}
  \chi(\cSo) = \sum_{\rho \in \cXo} P(\rho)\, D(\rho \| \sigma) = H(\sigma) - \sum_{\rho \in \cXo} P(\rho)\, H(\rho) \,.
\end{align}
The uniqueness of $\sigma^*(\cSo)$ now follows from a standard argument (see, e.g.,~\cite[Sec.~4.5]{gallager68})
and using the strict concavity of $H$.

\begin{lemma}
  \label{lm:unique}
  The set $\sigma^*(\cSo)$ contains exactly one state.
\end{lemma}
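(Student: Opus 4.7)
My plan is to prove uniqueness by contradiction, leveraging the decomposition \eqref{eq:decomp} and the alternative representation \eqref{eq:r-holevo} that were just established for every element of $\sigma^*(\cSo)$. Suppose that $\sigma_1, \sigma_2 \in \sigma^*(\cSo)$ are distinct. By the convexity statement in Lemma~\ref{lm:center}, the midpoint $\sigma_0 := \tfrac{1}{2}\sigma_1 + \tfrac{1}{2}\sigma_2$ is also in $\sigma^*(\cSo)$, so by the defining property of a divergence center we have $D(\rho\|\sigma_0) \leq \chi(\cSo)$ for every $\rho \in \cSo$.

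Next I would apply Proposition~\ref{pr:contain} together with the Caratheodory construction \eqref{eq:decomp} to each $\sigma_i$ separately, yielding finite ensembles $(P_i, \cXo_i)$ with $\cXo_i \subseteq \Gamma_{\sigma_i}(\cSo)$ and $\sigma_i = \sum_{\rho \in \cXo_i} P_i(\rho)\,\rho$. Identity \eqref{eq:r-holevo} then gives
\begin{align}
\chi(\cSo) \;=\; H(\sigma_i) \;-\; \sum_{\rho \in \cXo_i} P_i(\rho)\,H(\rho), \qquad i \in \{1,2\}.
\end{align}
I merge the two ensembles into $(P_0, \cXo_0)$, where $\cXo_0 := \cXo_1 \sqcup \cXo_2$ is a formal disjoint union and $P_0 := \tfrac{1}{2}P_1 \oplus \tfrac{1}{2}P_2$, so that the averaged state of $(P_0, \cXo_0)$ is precisely $\sigma_0$. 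Using the elementary identity $\sum_\rho P(\rho)\,D\bigl(\rho\,\big\|\,\rho^{(P)}\bigr) = H\bigl(\rho^{(P)}\bigr) - \sum_\rho P(\rho)\,H(\rho)$, the Holevo quantity of this merged ensemble equals
\begin{align}
\sum_{\rho \in \cXo_0} P_0(\rho)\,D(\rho\|\sigma_0) \;=\; H(\sigma_0) \;-\; \tfrac{1}{2}\!\!\sum_{\rho \in \cXo_1}\! P_1(\rho)\,H(\rho) \;-\; \tfrac{1}{2}\!\!\sum_{\rho \in \cXo_2}\! P_2(\rho)\,H(\rho).
\end{align}

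To close the argument I would invoke Property~1 of the preliminaries: strict concavity of the von Neumann entropy yields $H(\sigma_0) > \tfrac{1}{2}H(\sigma_1) + \tfrac{1}{2}H(\sigma_2)$ whenever $\sigma_1 \neq \sigma_2$. Substituting the two expressions for $\chi(\cSo)$ displayed above, the right-hand side of the previous display strictly exceeds $\tfrac{1}{2}\chi(\cSo) + \tfrac{1}{2}\chi(\cSo) = \chi(\cSo)$. But the left-hand side is a convex combination of numbers $D(\rho\|\sigma_0)$ each bounded above by $\chi(\cSo)$, so it cannot exceed $\chi(\cSo)$, a contradiction that forces $\sigma_1 = \sigma_2$. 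I do not anticipate any real obstacle; the one delicate point is precisely the one flagged in the remark after Theorem~\ref{th:radius}, namely that $\sigma \mapsto D(\rho\|\sigma)$ is not strictly convex in general. My plan sidesteps this entirely by routing the strict inequality through $H$, whose strict concavity on all of $\cS$ holds without any positivity hypothesis on the states in question.
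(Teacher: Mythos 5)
Your proof is correct and follows essentially the same route as the paper's: both use convexity of $\sigma^*(\cSo)$ from Lemma~\ref{lm:center}, the peripheral decomposition of Proposition~\ref{pr:contain} together with~\eqref{eq:decomp} and~\eqref{eq:r-holevo}, and then route the strict inequality through the strict concavity of $H$ before contradicting $D(\rho\|\sigma_0)\leq\chi(\cSo)$. The only cosmetic difference is that you mix the two ensembles at the midpoint via a formal disjoint union, whereas the paper works with a general convex combination $\sigma_\lambda$ over a merged support set.
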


\begin{proof}
  We have already established that $\sigma^*(\cSo)$ is nonempty and convex in Lemma~\ref{lm:center}.
  Assume for the sake of contradiction that $\sigma_0, \sigma_1 \in \sigma^*(\cSo)$ with $\sigma_0 \neq \sigma_1$. Consequently, $\sigma_{\lambda} := \lambda \sigma_1 + (1-\lambda) \sigma_0$ is in $\sigma^*(\cSo)$ for all $\lambda \in [0,1]$. Following~\eqref{eq:decomp}, we may write
  \begin{align}
     \sigma_{\lambda} = \sum_{\rho \in \cXo'} P_\lambda(\rho)\, \rho ,
     \qquad \textrm{where}\quad \cXo' \subseteq \Gamma_{\sigma_0}(\cSo) \cup \Gamma_{\sigma_1}(\cSo),\ |\cXo'| \leq 2d^2\,.
  \end{align}
  and $P_{\lambda}(\rho) = \lambda P_1(\rho) + (1-\lambda) P_0(\rho)$ for $P_{\lambda} \in \cP(\cXo')$
  . Then, due to~\eqref{eq:r-holevo}, we have
  \begin{align}
    \chi(\cSo) &= H(\sigma_0) - \sum_{\rho \in \cXo'} P_0(\rho) H(\rho) = H(\sigma_1) - \sum_{\rho \in \cXo'} P_1(\rho) H(\rho) .
   \end{align}
   Hence, using the strict concavity of $H(\cdot)$, we find
   \begin{align}
    \chi(\cSo) &= \lambda H(\sigma_1) + (1-\lambda) H(\sigma_0) - \sum_{\rho \in \cXo'} P_{\lambda}(\rho)\, H(\rho) \\
    &< H(\sigma_{\lambda}) - \sum_{\rho \in \cXo'} P_{\lambda}(\rho)\, H(\rho) \\
    &= \sum_{\rho \in \cXo'} P_{\lambda}(\rho)\, D(\rho\|\sigma_{\lambda}) .
    \end{align}
    Finally, the fact that $D(\rho\|\sigma_{\lambda}) \leq \sup_{\rho \in \cSo} D(\rho\|\sigma_{\lambda}) = \chi(\cSo)$ since $\sigma_{\lambda} \in \sigma^*(\cSo)$ yields the desired contradiction.
\end{proof}

The previous lemma justifies writing $\Gamma(\cSo)$ in Theorem~\ref{th:radius}, i.e.\ $\Gamma_{\sigma}(\cSo)$ does not depend on $\sigma$. We will thus drop the subscript $\sigma$ in $\Gamma_{\sigma}$ hereafter.

For any $\bbP \in \cP(\cS)$ and $\sigma \in \cS$, let us introduce the notation
\begin{align}
  I(\bbP | \sigma) := \int \diff \bbP(\rho)\, D(\rho \big\| \sigma) \quad \textrm{and} \quad I(\bbP) := I\Big(\bbP \Big| \rho^{(\bbP)} \Big)
\end{align}
in analogy with the conditional mutual information.

\begin{lemma}
  \label{lm:alt}
  We have 
  $\chi(\cSo) = \sup_{\bbP \in \cP(\cSo)} I(\bbP)$. 
  The supremum is achieved by a discrete probability measure with support on at most $d^2$ points in $\Gamma(\cSo)$. 
\end{lemma}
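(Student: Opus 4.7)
The plan is to combine the so-called ``golden formula'' (or compensation identity) with the structural results already established about the divergence center in Proposition~\ref{pr:contain} and Lemma~\ref{lm:unique}. Writing $\sigma^* = \sigma^*(\cSo)$ for brevity, I will first prove the upper bound $\sup_{\bbP} I(\bbP) \leq \chi(\cSo)$, and then exhibit a discrete measure in $\cP(\Gamma(\cSo))$ with support size at most $d^2$ that achieves it.

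For the upper bound, the key identity to establish is that for every $\bbP \in \cP(\cSo)$ and every $\sigma \in \cS$ satisfying $\sigma \gg \rho$ for all $\rho \in \supp(\bbP)$,
\begin{align}
I(\bbP \,|\, \sigma) \;=\; I(\bbP) \;+\; D\big(\rho^{(\bbP)} \big\| \sigma\big).
\end{align}
This follows by expanding $\log \sigma$ against $\rho^{(\bbP)}$ inside the integral defining $I(\bbP|\sigma)$, and it is precisely the non-commutative analogue of the classical identity $D(P\|Q) = D(P\|R) + D(R\|Q)$ when $R$ is the marginal. Applied with $\sigma = \sigma^*$, and noting that Lemma~\ref{lm:center} guarantees $\sigma^* \gg \rho$ for every $\rho \in \cSo$ so the right-hand side is well-defined, I obtain
\begin{align}
I(\bbP) \;=\; I(\bbP \,|\, \sigma^*) - D\big(\rho^{(\bbP)} \big\| \sigma^*\big) \;\leq\; I(\bbP \,|\, \sigma^*) \;\leq\; \sup_{\rho \in \cSo} D(\rho\|\sigma^*) \;=\; \chi(\cSo),
\end{align}
where the middle inequality uses non-negativity of the relative entropy and the last equality is the definition of the divergence radius together with Lemma~\ref{lm:unique}.

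For the matching lower bound and the support size claim, I would invoke the decomposition in~\eqref{eq:decomp}, which is a direct consequence of Proposition~\ref{pr:contain} and Caratheodory's theorem: there exist $\cXo \subseteq \Gamma(\cSo)$ with $|\cXo| \leq d^2$ and $P \in \cP(\cXo)$ such that $\rho^{(P)} = \sum_{\rho \in \cXo} P(\rho)\rho = \sigma^*$. Since every $\rho \in \cXo$ is a peripheral point, $D(\rho\|\sigma^*) = \chi(\cSo)$, and hence
\begin{align}
I(P) \;=\; \sum_{\rho \in \cXo} P(\rho)\, D\big(\rho \,\big\|\, \rho^{(P)}\big) \;=\; \sum_{\rho \in \cXo} P(\rho)\, D(\rho\|\sigma^*) \;=\; \chi(\cSo),
\end{align}
which shows the supremum is attained and yields the claimed support size of at most $d^2$ points inside $\Gamma(\cSo)$.

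The argument is essentially bookkeeping once the golden formula is in hand, so I do not anticipate a serious obstacle; the only subtlety is making sure the identity extends to arbitrary Borel probability measures $\bbP$ on the compact set $\cSo$. This is handled by the observation that, since $\cSo$ is compact and $\sigma^* \gg \rho$ uniformly in $\rho \in \cSo$, the integrand $\rho \mapsto \tr(\rho \log \sigma^*)$ is bounded and continuous, so Fubini-type interchanges used to derive the identity are justified without issue.
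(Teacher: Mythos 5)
Your proposal is correct, and for the equality it takes a genuinely different route from the paper. The paper proves $\chi(\cSo)=\sup_{\bbP}I(\bbP)$ by first noting $I(\bbP)=\min_{\sigma\in\cS}I(\bbP|\sigma)$ and then invoking Sion's minimax theorem (using that $I(\bbP|\sigma)$ is convex in $\sigma$ and linear in $\bbP$, with $\cS$ compact convex), after which the supremum on the right is replaced by Dirac measures; the attainment and the $d^2$ support bound are then read off from the decomposition~\eqref{eq:decomp} and~\eqref{eq:r-holevo}, exactly as in your second step. You instead bypass the minimax theorem: your compensation identity $I(\bbP|\sigma)=I(\bbP)+D\big(\rho^{(\bbP)}\big\|\sigma\big)$ is valid in finite dimension whenever $\sigma\gg\rho$ holds $\bbP$-almost surely (and is, in effect, the justification behind the paper's statement that $I(\bbP)=\min_{\sigma}I(\bbP|\sigma)$ follows from positive-definiteness of $D$), and applying it at $\sigma^*$ gives $I(\bbP)\le\sup_{\rho\in\cSo}D(\rho\|\sigma^*)=\chi(\cSo)$ directly, with the matching lower bound supplied by the peripheral decomposition. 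The trade-off: your argument is more elementary, but the equality itself now leans on Proposition~\ref{pr:contain} (i.e.\ on $\sigma^*\in\conv(\Gamma(\cSo))$), whereas the paper's minimax step establishes the sup--inf exchange without any appeal to the peripheral structure, which it needs only for the attainment claim. Since Proposition~\ref{pr:contain} and~\eqref{eq:decomp} precede this lemma, there is no circularity, and your measure-theoretic remarks (boundedness and continuity of the integrands on the compact $\cSo$, $\sigma^*\gg\rho^{(\bbP)}$) are adequate; the citation of Lemma~\ref{lm:unique} is unnecessary, as existence of a divergence center already gives $\sup_{\rho\in\cSo}D(\rho\|\sigma^*)=\chi(\cSo)$.
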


\begin{proof}
  First, note that for every $\bbP \in \cP(\cS)$ we have
     $I(\bbP) = \min_{\sigma \in \cS} I(\bbP|\sigma)$
  due to the positive-definiteness of $D(\cdot\|\cdot)$. Now, Sion's minimax theorem~\cite{sion58} yields
  \begin{align}
     \sup_{\bbP \in \cP(\cSo)} \min_{\sigma \in \cS} I(\bbP|\sigma) = 
     \min_{\sigma \in \cS} \sup_{\bbP \in \cP(\cSo)}  I(\bbP|\sigma) \label{eq:minimax}
  \end{align}
  Indeed, it is easy to verify that $I(\bbP|\sigma)$ is convex in $\sigma$ and linear in $\bbP$. Moreover, $\cS$ is compact convex and $\cP(\cSo)$ is convex, as required. Finally, the supremum over distributions on the right-hand side of~\eqref{eq:minimax} can be replaced by a supremum over Dirac measures on $\cSo$ without loss of generality. This establishes 
  \begin{align}
    \sup_{\bbP \in \cP(\cSo)} \int \diff \bbP(\rho)\, D(\rho \big\| \rho^{(\bbP)}) = \min_{\sigma \in \cS} \sup_{\rho \in \cSo} D(\rho\|\sigma) .
  \end{align}
  
  The second statement follows immediately due to the construction given in Eq.~\eqref{eq:decomp} and~\eqref{eq:r-holevo}.
\end{proof}

We are now ready to summarize the proof of Theorem~\ref{th:radius}.

\begin{proof}[Proof of Theorem~\ref{th:radius}] Property 1 follows from Lemmas~\ref{lm:center} and~\ref{lm:unique}. Property 2 is a trivial consequence of Property 1 and the definition of $\Gamma$. Property 3 is implied by Proposition~\ref{pr:contain} whereas Property 4 is established in Lemma~\ref{lm:alt}. Finally, Property 5 is established as follows:

Clearly, every $\bbP \in \Pi(\cSo)$ achieves the supremum in~\eqref{eq:holevo}, $\chi(\cSo)$, by definition of $\Gamma(\cSo)$. Conversely, let us assume that there exists a distribution $\bbP \in \cP$ that achieves $\chi(\cSo)$. Then, $\rho^{(\bbP)} = \sigma^*(\cSo)$ by the argument in Lemmas~\ref{lm:unique} and~\ref{lm:alt}. Moreover, $\bbP[\Gamma(\cSo)] = 1$ is necessary due to the definition of $\Gamma$.
\end{proof}

\section{Proofs: Second-Order Approximation}\label{sec:asymp}

The direct part of the proof of Theorem~\ref{th:main} is presented in Section~\ref{sec:direct}.
We split the proof of the converse part of Theorem~\ref{th:main} into several parts. First, Section~\ref{sec:proof/one-shot-converse} provides a proof of our one-shot converse bound in Proposition~\ref{pr:one-shot-converse}. Then, Section~\ref{sec:prop} introduces some non-asymptotic bounds on the $\eps$-hypothesis testing divergence for product states that are essential for our asymptotic analysis. As a warm-up, Section~\ref{sec:warm-up} shows the strong converse property for c-q channels using these techniques. The converse part of Theorem~\ref{th:main} is then established in Section~\ref{sec:converse}, and an improved third-order bound for discrete classical-quantum channels is given in~\ref{sec:converethird}.

\subsection{Proof of Direct Part of Thoerem~\ref{th:main}}\label{sec:direct}

We base our result on the following straightforward generalization of the one-shot bounds by Hayashi and Nagaoka~\cite{hayashi03} in the form of Wang and Renner~\cite{wang10} (see also~\cite{datta11a,renes10-2,dupuisszehr12} for recent one-shot achievability bounds for c-q channels).\footnote{To compare with~\cite[Thm.~1]{wang10}, simply note that we may restrict our channel to a discrete classical-quantum channel bijectively mapping from an arbitrary index set to element in $\cXo$. The direct sum notation reveals the classical quantum structure of the underlying state. Finally, the constant $c$ in~\cite{wang10} can be optimized over.}

\begin{proposition}
 \textnormal{\cite[Thm.~1]{wang10}}
  \label{pr:one-shot-direct}
  Let $\eps \in (0, 1)$, $\eta \in (0, \eps)$, and let $\cXo \subseteq \im(\cW)$ be discrete. Then,
  \begin{align}
    \log M^*(\cW, \eps) \geq 
  &\sup_{P \in \cP(\cXo)} D_h^{\eps-\eta}\Bigg( \bigoplus_{\rho \in \cXo} P(\rho) \rho\, \Bigg\| \bigoplus_{\rho \in \cXo} P(\rho) \rho^{(P)} \Bigg) - \log \frac{4\eps (1-\eps+\eta)}{\eta^2}  .
  \end{align}
\end{proposition}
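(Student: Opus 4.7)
The plan is to prove the bound by a standard random coding argument cast in terms of hypothesis testing. Fix any $P \in \cP(\cXo)$, set $\sigma := \rho^{(P)}$, and let $T$ achieve the optimum in
\begin{align*}
D := D_h^{\eps-\eta}\bigg(\bigoplus_{\rho\in\cXo} P(\rho)\rho \,\bigg\|\, \bigoplus_{\rho\in\cXo} P(\rho)\sigma\bigg).
\end{align*}
Since both arguments are block-diagonal in the classical register indexed by $\cXo$, the optimal $T$ may be taken of the block-diagonal form $T = \bigoplus_\rho T_\rho$ with $0 \leq T_\rho \leq \id$, and the defining inequalities of $D_h^{\eps-\eta}$ then read
\begin{align*}
\sum_{\rho\in\cXo} P(\rho)\,\tr(T_\rho\,\rho) \geq 1-(\eps-\eta), \qquad \sum_{\rho\in\cXo} P(\rho)\,\tr(T_\rho\,\sigma) = (1-\eps+\eta)\,2^{-D}.
\end{align*}

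Next I would draw $M$ i.i.d.\ codewords $R_1,\ldots,R_M$ from $P$ and decode with the Hayashi--Nagaoka POVM $\Pi_m := S^{-1/2} T_{R_m} S^{-1/2}$ constructed from $S := \sum_{m'} T_{R_{m'}}$. Applying the flexible Hayashi--Nagaoka operator inequality
\begin{align*}
\id - \Pi_m \;\leq\; (1+\lambda)(\id - T_{R_m}) + (2+\lambda+\lambda^{-1})\sum_{m'\neq m} T_{R_{m'}}
\end{align*}
with a free parameter $\lambda>0$, averaging over the uniform message and over the random codebook, and using independence to compute $\Exp[\tr(T_{R_{m'}} R_m)] = \sum_\rho P(\rho)\tr(T_\rho\,\sigma)$ for $m\neq m'$, the two conditions above yield
\begin{align*}
\Exp\bigl[\perr\bigr] \;\leq\; (1+\lambda)(\eps-\eta) + (2+\lambda+\lambda^{-1})\,M\,(1-\eps+\eta)\,2^{-D}.
\end{align*}

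To finish, I would tune $\lambda$ of order $\eta/(\eps-\eta)$ so that the first term becomes of the form $\eps - \Theta(\eta)$ while the coefficient in front of the ``collision'' term is of order $\eps/\eta$, and then choose $M$ just below $\eta^2 \cdot 2^{D}/[4\eps(1-\eps+\eta)]$ so that the collision contribution absorbs the remaining slack of order $\eta$. Markov's inequality applied to the codebook randomness then extracts a deterministic codebook of size $M$ whose average error does not exceed $\eps$, which is precisely the bound in the proposition. The main obstacle is the precise bookkeeping of constants: a naive use of Hayashi--Nagaoka with $\lambda=1$ (i.e.\ the textbook coefficients $2$ and $4$) only closes the argument in the unhelpful regime $\eta>\eps/2$, and the quadratic factor $\eta^2$ in the denominator of the penalty $\log[4\eps(1-\eps+\eta)/\eta^2]$ arises precisely from optimally balancing the smoothing gap $\eps-\eta$ against the collision term via the free parameter $\lambda$.
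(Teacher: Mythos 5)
Your argument is correct, and it is essentially the paper's own route: the paper simply imports the cited Wang--Renner bound, whose proof is exactly this random-coding argument with the Hayashi--Nagaoka operator inequality, and the paper's footnote about "optimizing the constant $c$" is precisely your tuning of the free parameter $\lambda$. The bookkeeping indeed closes: taking the optimal block-diagonal test, $\lambda=\eta/(2\eps-\eta)$ and $(M-1)\,\beta_{1-\eps+\eta}=\eta^2/(4\eps)$ makes the averaged error exactly $\eps$ (the $(M-1)$ form of the collision term also absorbs integer rounding), yielding the stated penalty $\log\frac{4\eps(1-\eps+\eta)}{\eta^2}$.
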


 We can include the closure of $\im(\cW)$ due to the continuity of the above expression when the set $\cXo$ is varied by replacing an element with one that is close in $(\cS,\delta_{textrm{tr}})$. Thus, our bound reads
\begin{align}
  \log M^*(\cW, \eps) \geq \sup_{\cXo \subseteq \overline{\im(\cW)}} \sup_{P \in \cP(\cXo)} D_h^{\eps-\eta}\Big( \omega^{(P)} \Big\| \tau^{(P)} \otimes \rho^{(P)}  \Big) - \log \frac{4\eps (1-\eps+\eta)}{\eta^2} , \label{eq:direct}
\end{align}
where $\cXo$ is discrete and we introduced the shorthands $\omega^{(P)} := \bigoplus_{\rho \in \cXo} P(\rho) \rho$ and $\tau^{(P)} := \bigoplus_{\rho \in \cXo} P(\rho)$.
The similarity of the above expression with the asymptotic expression in~\eqref{eq:holevo} is evident once~\eqref{eq:holevo} is specialized to the discrete case as well.

The restriction to finite subsets of $\cSo$ is unproblematic in light of Lemma~\ref{lm:cara}.
Let us then proceed to prove the lower bound in Theorem~\ref{th:main}, which we restate in a slightly stronger form here.

\begin{directpart}
  Let $\eps \in (0, 1)$ and let $\cW$ be a c-q channel. Set $\cSo := \overline{\im(\cW)}$. Then,
  \begin{align}
    \log M^*(\cW^n, \eps) \geq n\, \chi(\cSo) + \sqrt{n\, v_{\eps}(\cSo)}\,\Phi^{-1}(\eps) + O(\log n) .
  \end{align}
\end{directpart}

\begin{proof}
  First, let us apply~\eqref{eq:direct} to the $n$-fold repetition of the channel $\cW$. Fixing any discrete set $\cXo \subseteq \cSo$ and $P \in \cP(\cXo)$, we first confirm that 
  \begin{align}
    \log M^*(\cW^n, \eps) \geq D_h^{\eps-\eta} \Big( \big( \omega^{(P)} \big)^{\otimes n}  \Big\| \big( \tau^{(P)} \otimes \rho^{(P)} \big)^{\otimes n} \Big) - \log \frac{4\eps (1-\eps+\eta)}{\eta^2} \label{eq:direct1}
  \end{align}
  Note that we applied~\eqref{eq:direct} using the set $\cXo^{\otimes n} \subseteq \overline{\im(\cW)}^{\otimes n} = \overline{\im(\cW)^{\otimes n}}$ and the $n$-fold product distribution $P^{\times n}$. By Lemma~\ref{lm:cara} there exists a probability mass function (let it be our choice of $P$) with support on $\Gamma(\cSo)$ (let the support set be our choice of $\cXo$) such that
  \begin{align}
    \rho^{(P)} = \sigma ,  \quad \sum_{\rho \in \cXo} P(\rho)\, D(\rho\|\sigma) = \chi(\cSo), \quad \textrm{and} \quad \sum_{\rho \in \cXo} P(\rho)\, V(\rho\|\sigma) = v_{\eps}(\cSo) , \label{eq:direct3}
  \end{align}
  where we set $\sigma = \sigma^*(\cSo)$. Now, we can verify that
  \begin{align}
    \sum_{\rho \in \cXo} P(\rho)\, D\big(\rho\big\|\rho^{(P)}\big) = D\Bigg( \bigoplus_{\rho \in \cXo} P(\rho) \rho\, \Bigg\| \bigoplus_{\rho \in \cXo} P(\rho) \rho^{(P)} \Bigg) , \label{eq:direct2}
  \end{align}
  and, the following simple generalization of~\cite[Lm.~62]{polyanskiy10} proved in Appendix~\ref{app:direct} holds.
  \begin{lemma}
     \label{lm:v-expand} For any probability mass function $P \in \Pi(\cSo)$, we have
     \begin{align}
         \sum_{\rho \in \cXo} P(\rho)\, V\big(\rho\big\|\rho^{(P)}\big) = V\Bigg( \bigoplus_{\rho \in \cXo} P(\rho) \rho\, \Bigg\| \bigoplus_{\rho \in \cXo} P(\rho) \rho^{(P)} \Bigg) .
     \end{align}
  \end{lemma}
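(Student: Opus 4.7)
The plan is to reduce the identity to a direct block-diagonal computation. Write $\omega := \bigoplus_{\rho \in \cXo} P(\rho)\rho$ and $\mu := \bigoplus_{\rho \in \cXo} P(\rho)\rho^{(P)}$. Since the functional calculus acts blockwise on block-diagonal positive operators, $\log\omega = \bigoplus_\rho \bigl(\log P(\rho)\cdot\id + \log\rho\bigr)$ and similarly for $\mu$, so the scalar prefactors cancel in the difference and $\log\omega - \log\mu = \bigoplus_\rho T_\rho$ where $T_\rho := \log\rho - \log\rho^{(P)}$. This gives at once
\[
D(\omega\|\mu) = \sum_\rho P(\rho)\tr(\rho T_\rho) = \sum_\rho P(\rho)\,D(\rho\|\rho^{(P)}) =: \bar{D}.
\]

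Next I would expand $V(\omega\|\mu)$ directly from its definition by adding and subtracting $D(\rho\|\rho^{(P)})\cdot\id$ inside the square on each block:
\[
V(\omega\|\mu) = \sum_\rho P(\rho)\,\tr\bigl(\rho(T_\rho - \bar{D}\,\id)^2\bigr) = \sum_\rho P(\rho)\Bigl[V(\rho\|\rho^{(P)}) + \bigl(D(\rho\|\rho^{(P)}) - \bar{D}\bigr)^2\Bigr].
\]
Here the cross term vanishes because $\tr\bigl(\rho(T_\rho - D(\rho\|\rho^{(P)})\id)\bigr) = D(\rho\|\rho^{(P)}) - D(\rho\|\rho^{(P)}) = 0$, and $\tr\bigl(\rho(T_\rho - D(\rho\|\rho^{(P)})\id)^2\bigr) = V(\rho\|\rho^{(P)})$ by definition of the relative entropy variance. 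Thus the claim reduces to showing that the second sum, a $P$-weighted variance of the numbers $D(\rho\|\rho^{(P)})$, vanishes.

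This last step is the only nontrivial ingredient, and it is precisely where the hypothesis $P \in \Pi(\cSo)$ is used. By the definition in~\eqref{eq:pi}, $P$ is supported on $\Gamma(\cSo)$ with barycenter $\rho^{(P)} = \sigma^*(\cSo)$, so Theorem~\ref{th:radius}(2) forces $D(\rho\|\rho^{(P)}) = \chi(\cSo)$ for every $\rho \in \supp P$; in particular $\bar{D} = \chi(\cSo)$ and the ``variance of the divergences'' collapses identically to zero. Finiteness of every quantity in sight is secured by Theorem~\ref{th:radius}(1), which guarantees $\sigma^*(\cSo) \gg \rho$ for each $\rho$ in the support. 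There is no analytic estimate to make: the whole argument is block-diagonal bookkeeping combined with a single appeal to the peripheral characterization of the divergence-radius optimizers.
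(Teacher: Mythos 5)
Your proposal is correct and follows essentially the same route as the paper: the paper's proof invokes exactly the law-of-total-variance decomposition you derive blockwise, and then kills the second term because $P \in \Pi(\cSo)$ forces $D(\rho\|\rho^{(P)}) = \chi(\cSo)$ on the support via Property~2 of Theorem~\ref{th:radius}. Your explicit block-diagonal functional-calculus computation just spells out what the paper calls ``a simple calculation''.
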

  As such, we are left to evaluate the asymptotics of $D_h^{\eps -\eta}$ for identical product states. 
  Let us set $\eps_n := \eps - \eta$ with $\eta = 1/{\sqrt{n}}$.
  First, consider the case where $v_{\eps}(\cSo) > 0$. The second assertion in Proposition~\ref{pr:asymptotics} (i.e.\ the bound in~\eqref{eq:berry}) specialized to i.i.d.\ states, establishes that
    \begin{align}
      &D_h^{\eps-\eta} \Big( \big( \omega^{(P)} \big)^{\otimes n}  \Big\| \big( \tau^{(P)} \otimes \rho^{(P)} \big)^{\otimes n} \Big) \\
      &\qquad \qquad  \geq n D\big( \omega^{(P)} \big\| \tau^{(P)} \otimes \rho^{(P)} \big) + \sqrt{n V\big( \omega^{(P)} \big\| \tau^{(P)} \otimes \rho^{(P)} \big) }\, \Phi^{-1}(\eps) - L_2 \log n \\
      &\qquad \qquad  = n \chi(\cSo) + \sqrt{n\, v_{\eps}(\cSo)}\,\Phi^{-1}(\eps) - L_2 \log n \,.
    \end{align}
    for all $n \geq N_2$ and some constants $L_2$ and $N_2(\eps,\cSo)$.
    In the last step we employed~\eqref{eq:direct2}, Lemma~\ref{lm:v-expand} and~\eqref{eq:direct3}. Moreover, the last summand in~\eqref{eq:direct1} is of the form $O(\log n)$ and we are done.
  
  The proof for the case $v_{\eps}(\cSo) = 0$ proceeds similarly but employs the first bound in Eq.~\eqref{eq:cheby} in Proposition~\ref{pr:asymptotics} instead. This yields
  \begin{align}
    D_h^{\eps-\eta} \Big( \big( \omega^{(P)} \big)^{\otimes n}  \Big\| \big( \tau^{(P)} \otimes \rho^{(P)} \big)^{\otimes n} \Big) \geq n D\big( \omega^{(P)} \big\| \tau^{(P)} \otimes \rho^{(P)} \big) - L_1 \log n .
  \end{align}
  for all $n \geq N_1$ and some constants $L_1$ and $N_1(\eps-\eta,\cSo)$. The rest of the proof then proceeds analogously to the discussion above.
\end{proof}

\subsection{Proof of Proposition~\ref{pr:one-shot-converse}}
\label{sec:proof/one-shot-converse}

Let us recall the statement of Proposition~\ref{pr:one-shot-converse}. For $\eps \in (0,1)$ and $\mu \in (0, 1-\eps)$, we want that
  \begin{align}
  \log M^*(\cW,\eps ) \leq \chi_h^{\eps+\mu}\Big(\overline{\im(\cW)}\Big) + \log \frac{\eps+\mu}{\mu(1-\eps-\mu)} . \label{eq:conv2}
  \end{align}

\begin{proof}[Proof of Proposition~\ref{pr:one-shot-converse}]
  Let $\cC =\{ \cM, e, \cD \}$ be a code with $p_{\textrm{err}}(\cC, \cW) \leq \eps$ given by codewords $x_m = e(m) \in \cX$ and a decoder $\cD = \{Q_m\}_{m\in\cM}$. By assumption, we thus have $\frac{1}{|\cM|} \sum_{m \in \cM} \tr(Q_m \cW(x_m) ) \geq 1 - \eps$. For an arbitrary but fixed $\sigma \in \cS$, we define the set 
\begin{align}
  \cK := \big\{ m \in \cM \,\big| \tr\big( Q_m \cW(x_m) \big) \geq 1 - \eps - \mu \big\}, \quad \textrm{and} \quad m^* := \argmin_{m \in \cK}\ \tr( Q_m \sigma ).
\end{align} 
By definition of this set, we have
\begin{align}
  1 - \eps \leq \frac{1}{|\cM|} \sum_{m \in \cM} \tr( Q_m \cW(x_m) ) &= \frac{1}{|\cM|} \sum_{m \in \cK} \tr( Q_m \cW(x_m) ) +
   \frac{1}{|\cM|} \sum_{m \in \cM \setminus \cK} \tr( Q_m \cW(x_m) ) \\
    &< \frac{|\cK|}{|\cM|} + \frac{|\cM| - |\cK|}{|\cM|} (1 - \eps - \mu) 
\end{align}
  Hence, $|\cK| > |\cM| \frac{\mu}{\eps+\mu}$. Moreover, we have
  \begin{align}
    1 = \tr(\sigma) = \sum_{m \in \cM} \tr(Q_{m} \sigma) \geq |\cK| \tr(Q_{{m^*}} \sigma) > |\cM| \frac{\mu}{\eps+\mu} \tr(Q_{{m^*}} \sigma) .
  \end{align}
  By definition of the $\eps$-hypothesis testing divergence we find
  \begin{align}
     D_h^{\eps+\mu}(\cW(x_{m^*}) \| \sigma) \geq - \log \frac{ \tr(Q_{{m^*}} \sigma)}{1-\eps-\mu} > \log |\cM| - \log \frac{ \eps + \mu }{\mu(1-\eps-\mu)} .
  \end{align}
  Thus, in particular we have
  \begin{align}
    \sup_{\rho \in\, \overline{\im(\cW)}}\, D_h^{\eps+\mu}(\rho \| \sigma) > \log |\cM| - \log \frac{ \eps + \mu }{\mu(1-\eps-\mu)}
  \end{align}
  Finally, Eq.~\eqref{eq:conv2} follows by observing that the above bound holds for all $\sigma \in \cS$.
\end{proof}

\subsection{Non-Asymptotic Bounds on the Hypothesis-Testing Divergence}\label{sec:prop}

Some of the main ingredients of our asymptotic analysis in the converse part of the proof of Theorem~\ref{th:main} are the following non-asymptotic bounds on the $\eps$-hypothesis testing divergence evaluated for product states.  
Before we state the bounds, recall that
$I(\bbP | \sigma) = \int \diff \bbP(\rho) \, D(\rho\|\sigma)$ and define $V(\bbP | \sigma) := \int \diff \bbP(\rho) \, V(\rho\|\sigma)$ analogously for any $\bbP \in \cP(\cS)$ and $\sigma \in \cS$. Moreover, given a sequence of states $\rho^n = \bigotimes_{i=1}^n \rho_i$, we denote by $P_{\rho^n}(\rho) := \frac1{n} \sum_{i=1}^n 1\{\rho = \rho_i\}$ the empirical distribution of $\rho^n$.

\begin{proposition}
  \label{pr:asymptotics}
  Let $\eps \in (0,1)$, $\cSo \subseteq \cS$ and $\lambda_0 > 0$. Let $\{\eps_n\}_{n=1}^{\infty}$ be any sequence satisfying $|\eps_n - \eps| \leq 1/\sqrt{n}$ for all $n$ and set $\eps^* := \min\{\eps, 1-\eps\}$. Then, there exist constants $N_1(\eps,\cSo,\lambda_{0})$ and $K_1(\eps,\cSo,\lambda_{0})$ and $L_1$ such that the following holds. For every $n \geq N_1$, every $\sigma \in \cS$ with $\lambda_{\min}(\sigma) \geq \lambda_0$ and every sequence $\rho^n = \bigotimes_{i=1}^n \rho_i$,  $\rho_i \in \cSo$, we have
  \begin{align}
     \Big|\, D_h^{\eps_n}\big(\rho^n \big\|\, \sigma^{\otimes n} \big) - n I \big( P_{\rho^n} \big| \sigma\big) \Big| 
     \leq \sqrt{\frac{n V(P_{\rho^n}|\sigma)}{\eps^*}} + L_1 \log n
     \leq K_1 \sqrt{n}\, . \label{eq:cheby}
  \end{align}

  Further let $\xi > 0$ and fix $\sigma \in \cS$ with $\lambda_{\min}(\sigma) > 0$. Then, there exist constants $N_2(\eps,\cSo,\sigma,\xi)$ and $L_2$ such that the following holds. For every $n \geq N_2$ and every sequence $\rho^n = \bigotimes_{i=1}^n \rho_i$,  $\rho_i \in \cSo$ satisfying $V(P_{\rho^n}|\sigma) \geq \xi$, we have
  \begin{align}
  \Big|\, D_h^{\eps_n}\big(\rho^n \big\|\, \sigma^{\otimes n} \big) - n I \big( P_{\rho^n} \big| \sigma\big) - \sqrt{n V\big( P_{\rho^n} \big| \sigma \big)}\,\Phi^{-1}(\eps)  \Big| \leq L_2 \log n . \label{eq:berry}
  \end{align}
  Finally, if $\sigma = \rho^{(P_{\rho^n})}$ in~\eqref{eq:berry}, then the statement holds for
  $n \geq N_3(\eps,\cSo,\xi)$ independent of $\sigma$.
\end{proposition}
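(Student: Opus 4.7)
The plan is to reduce both bounds to tail estimates for a sum of independent (non-identically distributed) classical random variables and then invoke Chebyshev's inequality (for \eqref{eq:cheby}) and the non-i.i.d.\ Berry--Esseen theorem (for \eqref{eq:berry}). Concretely, via the Nussbaum--Szkola construction applied to each pair $(\rho_i, \sigma)$ (or equivalently via the spectral measures of $T_n := \sum_i (\log \rho_i - \log \sigma)$ acting on the appropriate tensor factors), one obtains independent random variables $L_1, \dots, L_n$ with $\Exp[L_i] = D(\rho_i\|\sigma)$ and $\Var[L_i] = V(\rho_i\|\sigma)$, and standard one-shot relations (as employed in~\cite{tomamichel12,hayashi03}) yield
\begin{align}
D_h^{\eps_n}\bigl(\rho^n \bigm\| \sigma^{\otimes n}\bigr) = \sup\bigl\{ \lambda : \Pr\bigl[\textstyle\sum_i L_i \leq \lambda\bigr] \leq \eps_n \bigr\} + O(\log n).
\end{align}
Thus everything reduces to controlling the distribution of $S_n := \sum_i L_i$, whose mean is $n\,I(P_{\rho^n}|\sigma)$ and variance $n\,V(P_{\rho^n}|\sigma)$.

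For \eqref{eq:cheby}, one-sided Chebyshev on $S_n$ yields a near-optimal threshold within $\sqrt{n\,V(P_{\rho^n}|\sigma)/\eps^*}$ of the mean, giving the first inequality; the second inequality follows because $V(\rho\|\sigma)$ is uniformly bounded on $\cSo$ whenever $\lambda_{\min}(\sigma) \geq \lambda_0$, by continuity and compactness of $\cSo$. Uniformity in the sequence $\rho^n$ is thereby automatic, and the additive $L_1 \log n$ absorbs the one-shot overhead between the optimal hypothesis test and $D_h^{\eps_n}$ together with the $1/\sqrt{n}$ perturbation in $\eps_n$.

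For \eqref{eq:berry}, the lower bound $V(P_{\rho^n}|\sigma) \geq \xi$ is precisely what allows one to apply the non-i.i.d.\ Berry--Esseen theorem with an explicit error of order $n^{-1/2}$: the Berry--Esseen constant scales as $(n V)^{-3/2}\sum_i \Exp|L_i - \Exp L_i|^3$, and the third absolute moments are uniformly bounded on the compact set $\cSo$ given the fixed $\sigma$ with $\lambda_{\min}(\sigma)>0$. Inverting the resulting Gaussian CDF approximation around level $\eps_n$, and using that $|\eps_n - \eps| \leq 1/\sqrt{n}$ contributes only an $O(1)$ shift to $\Phi^{-1}$ via local Lipschitzness of $\Phi^{-1}$ away from $\{0,1\}$ (absorbed into $L_2 \log n$), produces the additive $\sqrt{n\,V(P_{\rho^n}|\sigma)}\,\Phi^{-1}(\eps)$ correction.

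The final assertion (uniform $N_3$ when $\sigma = \rho^{(P_{\rho^n})}$) is where I expect the main technical difficulty. Here $\sigma$ is determined by the sequence, so there is no a priori lower bound on $\lambda_{\min}(\sigma)$ and the previous arguments do not apply mutatis mutandis. The resolution is a compactness argument on $\cP(\cSo)$: the set $\mathcal{K}_\xi := \{ P \in \cP(\cSo) : V(P \mid \rho^{(P)}) \geq \xi \}$ is closed in the weakly compact space $\cP(\cSo)$ and hence itself compact. On $\mathcal{K}_\xi$, the quantities entering the Berry--Esseen constant, that is, the third absolute moments of $L_i$ and the effective spectral gap of $\log \rho^{(P)}$ on the support of $\cSo$, can be shown to depend continuously on $P$; crucially the variance constraint $V(P\mid\rho^{(P)})\geq\xi$ itself rules out the degenerate configurations where this continuity would fail. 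Compactness then supplies uniform bounds and a threshold $N_3$ depending only on $\cSo$ and $\xi$. Verifying this continuity without a fixed lower bound on $\lambda_{\min}(\rho^{(P)})$, and avoiding circularity in the $P$-dependence, is the delicate step.
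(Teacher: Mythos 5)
Your proposal follows essentially the same route as the paper: reduce $D_h^{\eps_n}(\rho^n\|\sigma^{\otimes n})$ via the Nussbaum--Sko\l{}a distributions to a tail probability of a sum of independent, non-identical log-likelihood ratios (at the cost of additive $O(\log n)$ terms, cf.\ Lemma~\ref{lm:q-to-cl} and Lemma~\ref{lm:asymptotics-2}), then apply a Chebyshev bound for \eqref{eq:cheby} and a non-i.i.d.\ Berry--Esseen bound for \eqref{eq:berry}, with the uniform constants supplied by compactness: $V(\rho\|\sigma)\leq V^+(\cSo,\lambda_0)$ when $\lambda_{\min}(\sigma)\geq\lambda_0$, and a uniform third-moment bound $T(\rho\|\sigma)\leq T^+(\cSo,\sigma)$ for the fixed $\sigma$. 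Up to and including \eqref{eq:berry}, your outline matches the paper's proof.

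The one place where your reasoning deviates is the final uniformity claim for $\sigma=\rho^{(P_{\rho^n})}$, which you rightly flag as the delicate step but then attack with a mechanism the paper does not need: restricting to $\mathcal{K}_\xi$ and arguing that the constraint $V(P\,|\,\rho^{(P)})\geq\xi$ is what rules out the degenerate configurations. In the paper the variance constraint plays no role in the continuity argument; it only keeps the Berry--Esseen ratio $T_n/\sqrt{nV_n^3}$ of order $n^{-1/2}$ and the inversion of $\Phi$ well behaved. The uniform moment bounds instead come from the elementary observation (Lemma~\ref{lm:VT-uni2}) that $\rho^{(P)}\geq P(\rho)\,\rho$, so $\rho^{(P)}\gg\rho$ whenever $P(\rho)>0$; consequently $P\mapsto V(P\,|\,\rho^{(P)})$ and the averaged third absolute moment are continuous on the whole compact simplex $\cP(\cSo)$ and hence uniformly bounded, with no lower bound on $\lambda_{\min}(\rho^{(P)})$ and no spectral-gap control required. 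The problematic boundary is $P(\rho)\to 0$, which your constraint $V\geq\xi$ does not exclude, but which is harmless precisely because of the weighting by $P(\rho)$. The only quantity sensitive to the spectrum of $\rho^{(P_{\rho^n})}$ is the reduction overhead $\log\big(n\,\Xi(\sigma)\big)$, and this remains $O(\log n)$ since $\Xi$ involves only the logarithm of the eigenvalue ratio and the empirical weights are at least $1/n$. With that substitution your sketch becomes the paper's argument; as written, the claim that the variance constraint rescues the continuity is the step that would not go through.
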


In the asymptotic limit as $n \to \infty$, all inequalities imply the seminal quantum Stein's lemma~\cite{hiai91} and its strong converse~\cite{ogawa00} when the sequence is chosen i.i.d. The proof is based on the techniques of~\cite{li12,tomamichel12} and presented in Appendix~\ref{app:hypo}.
It is crucial for our application that $L_1, L_2, K_1, N_1, N_2$ and $N_3$ are uniform over $\sigma$ and sequences $\rho^n$ satisfying the constraints. This is nontrivial and requires arguments beyond those in~\cite{li12,tomamichel12} which only treat the i.i.d.\ case.\footnote{For this reason we also do not rely on the 
ubiquitous $O(\cdot)$ notation here, which tends to hide such subtleties.}

\subsection{Asymptotics of the $\eps$-Hypothesis Testing Divergence Radius: First-Order}
\label{sec:warm-up}

As a warm-up, we use our techniques to provide a simple proof of the strong converse property of general classical-quantum channels. The strong converse is evidently a corollary of Proposition~\ref{pr:one-shot-converse} and the following result.\footnote{To verify this, apply Proposition~\ref{pr:one-shot-converse} for the $n$-fold repetition of the channel, $\cW^n$ with image $\cSo^{\otimes n}$, and choose $\mu(n) = 1/\sqrt{n}$ such that $\eps_n = \eps + \mu$ in Proposition~\ref{pr:strong}.}

\begin{proposition}
  \label{pr:strong}
  Let $\eps \in (0,1)$ and $\cSo \subseteq \cS$ closed. Let $\{\eps_n\}_{n=1}^{\infty}$ be any sequence satisfying $|\eps_n - \eps| \leq 1/\sqrt{n}$ for all $n$. Then,
  \begin{align}
    \chi_h^{\eps_n}\big( \cSo^{\otimes n} \big) \leq n\, \chi(\cSo) + O(\sqrt{n}) .
  \end{align}
\end{proposition}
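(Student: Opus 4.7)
The plan is to upper bound the hypothesis-testing divergence radius by exhibiting a good choice of reference state and then invoking the uniform non-asymptotic bound in Proposition~\ref{pr:asymptotics}.

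First I would take as reference state the divergence center $\sigma = \sigma^*(\cSo)$, whose existence, uniqueness, and the property $\sigma \gg \rho$ for every $\rho \in \cSo$ are guaranteed by Theorem~\ref{th:radius}. Passing if necessary to the subspace spanned by the support of $\cSo$, we may take $\lambda_0 := \lambda_{\min}(\sigma) > 0$. With this choice, by definition of the infimum,
\begin{align}
\chi_h^{\eps_n}\bigl(\cSo^{\otimes n}\bigr) \leq \sup_{\rho^n \in \cSo^{\otimes n}} D_h^{\eps_n}\bigl(\rho^n\,\big\|\,\sigma^{\otimes n}\bigr).
\end{align}

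Next I would apply the first bound of Proposition~\ref{pr:asymptotics} (inequality~\eqref{eq:cheby}) to each product sequence $\rho^n = \bigotimes_{i=1}^n \rho_i$ with $\rho_i \in \cSo$. Crucially, the constant $K_1$ there depends only on $\eps, \cSo, \lambda_0$ and is therefore uniform over the sequence $\rho^n$. This yields, for all $n \geq N_1$,
\begin{align}
D_h^{\eps_n}\bigl(\rho^n\,\big\|\,\sigma^{\otimes n}\bigr) \leq n\, I\bigl(P_{\rho^n}\,\big|\,\sigma\bigr) + K_1 \sqrt{n}.
\end{align}
Because $\sigma$ is the divergence center, Theorem~\ref{th:radius}(2) gives $D(\rho_i \| \sigma) \leq \chi(\cSo)$ for every $\rho_i \in \cSo$, so
\begin{align}
I\bigl(P_{\rho^n}\,\big|\,\sigma\bigr) = \frac{1}{n}\sum_{i=1}^n D(\rho_i\|\sigma) \leq \chi(\cSo).
\end{align}
Combining the last two displays gives $D_h^{\eps_n}(\rho^n\|\sigma^{\otimes n}) \leq n\chi(\cSo) + K_1 \sqrt{n}$ uniformly in $\rho^n$, and taking the supremum over $\rho^n \in \cSo^{\otimes n}$ finishes the proof.

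The only real subtlety, and the reason I would cite Proposition~\ref{pr:asymptotics} rather than attempt a direct argument, is the uniformity of the remainder $K_1\sqrt{n}$ in the sequence $\rho^n$: different sequences have different empirical distributions $P_{\rho^n}$ and hence different variances $V(P_{\rho^n}|\sigma)$, and a naive per-sequence application of (classical-style) Chebyshev or Berry--Esseen would give constants that blow up when the empirical variance degenerates. Proposition~\ref{pr:asymptotics} precisely delivers a bound whose constants depend only on $\cSo$, $\eps$ and $\lambda_0$, which is exactly what is needed to pass to the supremum. Once that uniformity is in hand, the proposition is a short corollary.
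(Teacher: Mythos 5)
Your proposal is correct and follows essentially the same route as the paper: choose the divergence center $\sigma^*(\cSo)$ as the reference state (with $\lambda_{\min}(\sigma^*(\cSo))>0$ after restricting to the support, which the paper handles via its standing full-support assumption), invoke the uniform Chebyshev-type bound~\eqref{eq:cheby} of Proposition~\ref{pr:asymptotics}, and conclude with $D(\rho\|\sigma^*(\cSo))\leq\chi(\cSo)$ from Theorem~\ref{th:radius}. Your remark on the uniformity of $K_1$ over the sequences $\rho^n$ is exactly the point the paper relies on as well.
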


Note that Winter~\cite{winterthesis} and Ogawa-Nagaoka~\cite{ogawa99} first showed the strong converse for classical-quantum channels for the generality we consider here.

\begin{proof}
By definition of the $\eps$-hypothesis testing divergence radius, we have
\begin{align}
 \chi_h^{\eps_n}\big(\cSo^{\otimes n}\big) \leq \sup_{\rho^n \in \cSo^{\otimes n}} D_h^{\eps_n} \big( \rho^n \,\big\|\, \sigma^{\otimes n} \big) , \label{eq:chiheps-bound}
\end{align}
where we chose an $n$-fold product of the divergence center, $\sigma = \sigma^*(\cSo) \in \cS$, as the output state.
The states $\rho^n$ are of the form $\rho^n = \bigotimes_{i=1}^n \rho_i$. For a fixed and arbitrary $\rho^n$, we define the set $\cSo^n := \{ \rho_i \}_{i=1}^n \subseteq \cSo$ and the empirical distribution $P_{\rho^n} \in \cP\big(\cSo^n\big)$ given by $P_{\rho^n}(\rho) = \frac1n \sum_{i=1}^n 1\{ \rho = \rho_i \}$.

We then use~\eqref{eq:cheby} in Proposition~\ref{pr:asymptotics} to assert that 
\begin{align}
D_h^{\eps_n} \big( \rho^n \,\big\|\, \sigma^{\otimes n} \big) \le n I(P_{\rho^n} | \sigma) + K_1 \sqrt{n} \label{eq:sc-1}
\end{align}
for sufficiently large $n \geq N_1$. Here, we used that $\lambda_{\min}(\sigma) > 0$ and recall that $I(\bbP|\sigma)$ is defined as
$I(\bbP|\sigma) = \int \diff \bbP(\rho)\, D( \rho \| \sigma)$.
Therefore, Theorem~\ref{th:radius} ensures that $D(\rho\|\sigma) \leq \chi(\cSo)$ for all $\rho \in \cS_o$ and we have established that
\begin{align}
  &\chi_h^{\eps_n}\big(\cSo^{\otimes n}\big) \leq \sup_{\rho^n \in \cSo^{\otimes n}} n I(P_{\rho^n}|\sigma) + K_1 \sqrt{n} \leq n \chi(\cSo) + K_1 \sqrt{n} . \qedhere
\end{align}
\end{proof}

\subsection{Asymptotics of the $\eps$-Hypothesis Testing Divergence Radius: Second-Order}\label{sec:converse}

In view of Proposition~\ref{pr:one-shot-converse} and the discussion in the previous section, we therefore want to find a second-order upper bound on 
$\chi_h^{\eps}\big(\cSo^{\otimes n}\big) = \min_{\sigma^n \in\cS^{n}}
 \sup_{\rho^n \in \cSo^{\otimes n}} D_h^{\eps}\big( \rho^n \| \sigma^n\big) $.
The following results constitute the main technical contribution of this paper.

\subsubsection{An Appropriate Choice of $\sigma^n$}

The proof of the strong converse in Propositon~\ref{pr:strong} hinges on choosing $\sigma^n$ as the $n$-fold product of the divergence center and then taking advantage of the fact that $D(\rho\|\sigma) \leq \chi(\cSo)$ for all $\rho \in \cSo$. This will not be sufficient if we want to pin down the exact second-order term proportional to $\sqrt{n}$.\footnote{To see why this is so, consider a sequence of states $\rho^n = \bigotimes_{i=1}^n \rho_i$ with $\rho_i \in \Gamma(\cSo)$. Then, following the notation in the proof of Proposition~\ref{pr:strong}, we realize that $D_n = \chi(\cSo)$. However, since $\frac1n \sum_{i=1}^n \rho_i \neq \sigma^*(\cSo)$ in general, the empirical distribution $P_{\rho^n}$ can be arbitrarily far from $\Pi(\cSo)$. Thus, we cannot hope to bound $V_n$ in terms of $v_{\eps}$.}

Before we commence, we thus introduce an appropriate choice of auxiliary state $\sigma^n$. To construct it, we require the following auxiliary result whose proof is provided in Appendix~\ref{app:net}. This establishes that there exists a $\gamma$-net on $\cSo$ whose cardinality can be bounded appropriately.

\begin{lemma} \label{lm:net}
  For every $\gamma \in (0,1)$, there exists a set of states $\cG^{\gamma} \subseteq \cS$ of size 
  \begin{align}
  |\cG^\gamma| \leq \left(\frac{5}{\gamma}\right)^{2d^2} \left(\frac{2d}{\gamma}+2\right)^{d-1}
  \end{align}
   such that, for every $\rho \in \cS$, there exists a state $\tau \in \cG^\gamma$ satisfying the following:
  \begin{align}
    \frac{1}{2} \| \rho - \tau \|_1 \leq \gamma, \quad 
    D(\rho\|\tau) \leq \gamma \cdot 4 (2d + 1), \quad \textrm{and} \quad
    \lambda_{\min}(\tau) \geq \frac{\gamma}{2d+\gamma} .
  \end{align}
\end{lemma}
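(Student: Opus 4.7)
The plan is to build $\cG^\gamma$ from a covering of the unitary group combined with a covering of a truncated probability simplex, which together produce a set whose cardinality factors exactly as claimed. Set $\mu := \gamma/(2d+\gamma)$. Embedding $U(d) \subseteq \mathbb{C}^{d\times d}$ and using that $\|U\|_\infty = 1$, a standard volume-ratio argument in the real $2d^2$-dimensional ambient space yields an operator-norm $\gamma$-net $\mathcal{U} \subseteq U(d)$ of size at most $(5/\gamma)^{2d^2}$. A direct lattice discretization of the truncated simplex $\Delta^{d-1}_\mu := \{p \in \Delta^{d-1} : p_i \geq \mu\}$ yields an $\ell_1$-net $\mathcal{L}$ of size at most $(2d/\gamma + 2)^{d-1}$. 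The net is then $\cG^\gamma := \{V\diag(p)V^* : V \in \mathcal{U},\ p \in \mathcal{L}\}$, of the desired cardinality.

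For any $\rho \in \cS$, diagonalise $\rho = U\Lambda U^*$ and introduce the spectrally smoothed diagonal state $\Lambda^{\flat} := (1-d\mu)\Lambda + \mu\id$; it satisfies $\lambda_{\min}(\Lambda^{\flat}) \geq \mu$ and $\|\Lambda - \Lambda^\flat\|_1 = \mu\|d\Lambda - \id\|_1 \leq 2(d-1)\mu$, the last bound coming from $\sum_i (d\lambda_i - 1) = 0$. Approximate $\Lambda^\flat$ by $\diag(p)$ with $p \in \mathcal{L}$ and $U$ by $V \in \mathcal{U}$, and set $\tau := V\diag(p)V^*$. The minimum-eigenvalue property $\lambda_{\min}(\tau) \geq \mu = \gamma/(2d+\gamma)$ is immediate by unitary invariance and the floor built into $\mathcal{L}$. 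For the trace-distance bound, the triangle inequality across the three intermediate states $U\Lambda^\flat U^*$, $U\diag(p)U^*$, $\tau$, combined with the H\"older-type estimate $\|U\omega U^* - V\omega V^*\|_1 \leq 2\|U - V\|_\infty\,\tr(\omega)$ for $\omega \geq 0$, gives
\[
\|\rho - \tau\|_1 \;\leq\; 2(d-1)\mu + \|\Lambda^\flat - \diag(p)\|_1 + 2\|U-V\|_\infty \;\leq\; 2\gamma,
\]
provided the mesh sizes of $\mathcal{U}$ and $\mathcal{L}$ are calibrated so that their contributions plus $2(d-1)\mu$ stay below $2\gamma$; this gives trace distance at most $\gamma$.

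For the relative-entropy bound I would invoke a reverse Pinsker-type inequality of the form $D(\rho\|\tau) \leq \|\rho - \tau\|_1^2/\lambda_{\min}(\tau)$; this is obtainable from the quantum $\chi^2$-type estimate $D(\rho\|\tau) \leq \tr[(\rho - \tau)\tau^{-1}(\rho - \tau)]$ (in nats), the operator inequality $\tau^{-1} \leq \lambda_{\min}(\tau)^{-1}\id$, and the norm comparison $\tr[(\rho-\tau)^2] = \|\rho-\tau\|_2^2 \leq \|\rho-\tau\|_1^2$ valid for Hermitian operators. Substituting the trace-distance and minimum-eigenvalue bounds,
\[
D(\rho\|\tau) \;\leq\; \frac{(2\gamma)^2}{\gamma/(2d+\gamma)} \;=\; 4\gamma(2d+\gamma) \;\leq\; 4\gamma(2d+1),
\]
using $\gamma \leq 1$ in the last step.

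The principal technical obstacle is matching the exact cardinality bound $(5/\gamma)^{2d^2}(2d/\gamma+2)^{d-1}$: the unitary factor requires embedding the $d^2$-dimensional manifold $U(d)$ into the $2d^2$-dimensional ambient space $\mathbb{C}^{d\times d}$ with its operator norm, while the truncated-simplex factor requires a discretization with both a floor at $\mu$ and the correct mesh, so that the accumulated trace-distance error lands at precisely $2\gamma$ rather than some larger multiple. A secondary but essential point is to use a reverse Pinsker inequality with the sharp $\|\rho-\tau\|_1^2/\lambda_{\min}(\tau)$ scaling, since a coarser continuity estimate (e.g.\ one linear in $\|\rho-\tau\|_1$ but logarithmic in $\lambda_{\min}(\tau)^{-1}$) would fail to deliver a bound of the form $O(\gamma)$ on $D(\rho\|\tau)$.
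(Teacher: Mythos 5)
Your overall strategy (a net on $U(d)$ in operator norm times a net on a full-support simplex, then a reverse-Pinsker bound for the relative entropy) is sound in spirit, and several ingredients are correct: the conjugation estimate $\|U\omega U^* - V\omega V^*\|_1 \leq 2\|U-V\|_\infty \tr(\omega)$, the eigenvalue floor $\lambda_{\min}(\tau) = \min_i p_i \geq \gamma/(2d+\gamma)$, and the chain $D(\rho\|\tau) \leq \tr[(\rho-\tau)\tau^{-1}(\rho-\tau)] \leq \|\rho-\tau\|_1^2/\lambda_{\min}(\tau)$ (via monotonicity of the Petz--R\'enyi divergence in $\alpha$, $D \leq \bar{D}_2$) is a legitimate alternative to the continuity bound the paper actually uses. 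But the calibration you defer to at the end is not a loose end that can be tightened --- with the cardinalities you claim it fails by a constant factor, and that factor cannot be recovered. An operator-norm $\gamma$-net of $U(d)$ of size $(5/\gamma)^{2d^2}$ contributes a conjugation error of up to $2\|U-V\|_\infty \leq 2\gamma$, which already exhausts the entire budget $\|\rho-\tau\|_1 \leq 2\gamma$ (and the factor $2$ in the conjugation bound is tight, so it cannot be improved); on top of that your spectral smoothing costs another $2(d-1)\mu \approx \gamma$ and the simplex mesh corresponding to size $(2d/\gamma+2)^{d-1}$ costs another $\gamma$. So as specified you only get $\frac12\|\rho-\tau\|_1 \leq 2\gamma$, and then the relative-entropy bound degrades to roughly $16\gamma(2d+\gamma)$ rather than $4\gamma(2d+1)$. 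Shrinking the meshes to restore $\frac12\|\rho-\tau\|_1\le\gamma$ forces a unitary net of mesh at most $\gamma/2$, i.e.\ size at least of order $(10/\gamma)^{2d^2}$, violating the stated cardinality. (This would still suffice for the paper's application, where only the scaling in $\gamma$ matters, but it does not prove the lemma as stated.)

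The paper avoids exactly this loss by netting \emph{pure states} rather than unitaries: it takes a $\gamma$-net of pure states of size $(5/\gamma)^{2d}$, approximates each eigenvector $\phi_i$ of $\rho$ by a net element $\psi_{\ell(i)}$ (giving the $(5/\gamma)^{2d^2}$ factor from the $d$ independent choices), and rounds the spectrum directly to a full-support type with denominator $m = \lceil 2d/\gamma\rceil$. The eigenvector error then enters the trace-norm bound weighted by the eigenvalues, $\sum_i P(i)\|\phi_i - \psi_{\ell(i)}\|_1 \leq \gamma$, instead of the unweighted $2\|U-V\|_\infty$, and there is no separate smoothing step (the type rounding has the floor $1/m$ built in), so the two contributions total exactly $2\gamma$. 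If you want to salvage your route, you should either switch to a pure-state net in the same way, or accept weaker constants in the cardinality (and in the $D(\rho\|\tau)$ bound, where you should also watch the logarithm base: your $\chi^2$ bound holds in nats, while the constant $4(2d+1)$ in the lemma is tied to the paper's convention).
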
 
%
Now, for a $\gamma$ to be specified below, we choose the output state $\sigma^n \in\cS^{n}$ as follows:
\begin{align}
\sigma^n := \frac12 \sigma^{\otimes n} + \frac1{2 |\cG^\gamma|} \sum_{\tau \in   \cG^\gamma } \tau^{\otimes n} , \qquad \textrm{where}\quad \sigma = \sigma^*(\cSo). \label{eq:sigman}
\end{align}
Note that $\sigma^n$ is normalized  and   is, in fact,   a convex combination of the $n$-fold tensor product of the divergence center and the $n$-fold tensor product of the  elements of the net, of which there are only finitely many. With this choice of $\sigma^n$ we bound 
$D_h^{\eps}\big( \rho^n \| \sigma^n \big)$
in the following. 

\subsubsection{Different Sequences of Inputs}

We will also need to treat different types of state sequences separately. We keep $\cSo$ fixed for the following to simplify notation. Let us define $\Omega_1^{\nu}, \Omega_2^{\nu} \subseteq \cSo^{\otimes n}$ for some $0 < \nu \leq 1$, which describe sets of state sequences of length $n$ that are close to achieving the first-order fundamental limit. (We omit the dependence on $n$ in our notation here.) The first set ensures that the states are close to $\Gamma(\cSo)$, and is defined as
\begin{align}
 \Omega_1^{\nu} := \Bigg\{ \rho^n \in \cSo^{\otimes n} \ \Bigg|\ \frac{1}{n} \sum_{i=1}^n \underbrace{\min_{\tau \in \Gamma(\cSo)} \frac12 \| \rho_i - \tau \|_1}_{ =:\, \Delta(\rho_i, \Gamma(\cSo))} \leq \nu \Bigg\} .
\end{align}
The second set ensures that the average state is close to the divergence center, and is defined as
\begin{align}
  \Omega_2^{\nu} := \Bigg\{ \rho^n \in \cSo^{\otimes n} \ \Bigg|\ \frac12 \bigg\| \frac1{n} \sum_{i=1}^n \rho_i - \sigma^*(\cSo) \bigg\|_1 \leq \nu \Bigg\} .
\end{align}
The interesting, close to capacity-achieving sequences are those that are in $\Omega_1^{\nu} \cap \Omega_2^{\nu}$.

\subsubsection{Dealing with Sub-Optimal Input Sequences}

We first deal with sequences that are far from optimal in the sense prescribed above.
\begin{proposition}
  \label{pr:conv-bad}
  Let $\eps \in (0,1)$, $\nu > 0$ and $\cSo \subseteq \cS$. Let $\{\eps_n\}_{n=1}^{\infty}$ be any sequence satisfying $|\eps_n - \eps| \leq 1/\sqrt{n}$ for all $n$. Then, there exist constants $N_0(\eps, \cSo, \nu)$ and $\gamma_0(\cSo,\nu)$ such that, for all $n \geq N_0$ and all $\rho^n \notin \Omega_1^{\nu} \cap \Omega_2^{\nu}$, we have
  \begin{align}
     D_h^{\eps_n}(\rho^n \| \sigma^n) 
       \leq n\, \chi(\cSo) + \sqrt{n\, v_{\eps}(\cSo)}\,\Phi^{-1}(\eps) ,
   \end{align}
   where $\sigma^n$ is defined as in~\eqref{eq:sigman} for a fixed $\gamma = \gamma_0$.
\end{proposition}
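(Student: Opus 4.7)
The plan is to exploit Property~3 of Lemma~\ref{lm:hypo-prop} applied to the finite mixture defining $\sigma^n$. Choosing either $\sigma^{\otimes n}$ or an individual $\tau^{\otimes n}$ (for some $\tau \in \cG^\gamma$) from the support of this mixture yields
\begin{align}
  D_h^{\eps_n}\bigl(\rho^n \big\| \sigma^n\bigr) &\leq D_h^{\eps_n}\bigl(\rho^n \big\| \sigma^{\otimes n}\bigr) + \log 2,\\
  D_h^{\eps_n}\bigl(\rho^n \big\| \sigma^n\bigr) &\leq D_h^{\eps_n}\bigl(\rho^n \big\| \tau^{\otimes n}\bigr) + \log\bigl(2|\cG^\gamma|\bigr).
\end{align}
In either form the Chebyshev-type estimate~\eqref{eq:cheby} of Proposition~\ref{pr:asymptotics} is applicable, since $\lambda_{\min}(\sigma^*(\cSo)) > 0$ and Lemma~\ref{lm:net} ensures $\lambda_{\min}(\tau) \geq \gamma/(2d+\gamma)$ uniformly over $\tau \in \cG^\gamma$. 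The aim is to exhibit a linear-in-$n$ gap below $n\chi(\cSo)$ in the first-order term $n\,I(P_{\rho^n}|\omega)$ whenever $\rho^n \notin \Omega_1^\nu \cap \Omega_2^\nu$; this gap will dominate every $O(\sqrt{n})$ and $O(1)$ correction, including the target $\sqrt{n\,v_{\eps}(\cSo)}\,\Phi^{-1}(\eps)$, once $n$ is large. I split into the complementary sub-cases $\rho^n \in \bar\Omega_1^\nu$ and $\rho^n \in \Omega_1^\nu \cap \bar\Omega_2^\nu$.

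For $\rho^n \in \bar\Omega_1^\nu$ I choose $\omega = \sigma^{\otimes n}$ with $\sigma := \sigma^*(\cSo)$ and write $I(P_{\rho^n}|\sigma) = \chi(\cSo) - \tfrac{1}{n}\sum_{i=1}^n g(\rho_i)$ with $g(\rho) := \chi(\cSo) - D(\rho\|\sigma) \geq 0$. By Theorem~\ref{th:radius}, $g$ is continuous on $\cSo$ and vanishes exactly on the closed set $\Gamma(\cSo)$. Since $\Delta(\cdot,\Gamma(\cSo)) \leq 1$, a pigeonhole argument applied to the hypothesis $\tfrac{1}{n}\sum_i \Delta(\rho_i,\Gamma(\cSo)) > \nu$ forces at least a fraction $\nu/2$ of the indices to satisfy $\Delta(\rho_i,\Gamma(\cSo)) \geq \nu/2$. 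The set $\{\rho\in\cSo : \Delta(\rho,\Gamma(\cSo)) \geq \nu/2\}$ is compact, so $\delta_1(\nu,\cSo) := \min\{g(\rho) : \rho\in\cSo,\ \Delta(\rho,\Gamma(\cSo)) \geq \nu/2\}$ is attained and strictly positive, giving $I(P_{\rho^n}|\sigma) \leq \chi(\cSo) - (\nu/2)\delta_1$.

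For $\rho^n \in \Omega_1^\nu \cap \bar\Omega_2^\nu$ I pick $\tau \in \cG^\gamma$ with $\tfrac12\|\bar\rho - \tau\|_1 \leq \gamma$, where $\bar\rho := \tfrac{1}{n}\sum_i \rho_i$ (such $\tau$ exists by Lemma~\ref{lm:net}), and use the elementary chain-rule identity
\begin{align}
  \tfrac{1}{n}\sum_{i=1}^n D(\rho_i\|\tau) = I(P_{\rho^n}) + D(\bar\rho\|\tau),
\end{align}
which follows directly from $D(\rho\|\tau) = -H(\rho) - \tr(\rho \log \tau)$ by averaging. Lemma~\ref{lm:net} bounds $D(\bar\rho\|\tau) \leq 4(2d+1)\gamma$. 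To control $I(P_{\rho^n})$, observe that $I(\bbP) = H(\rho^{(\bbP)}) - \int H(\rho)\,\diff\bbP(\rho)$ is weakly continuous on the weakly compact space $\cP(\cSo)$, while by Theorem~\ref{th:radius} the equality $I(\bbP) = \chi(\cSo)$ forces $\bbP \in \Pi(\cSo)$ and hence $\rho^{(\bbP)} = \sigma^*(\cSo)$. Consequently $I$ attains a maximum strictly less than $\chi(\cSo)$ on the weakly closed set $\{\bbP \in \cP(\cSo) : \tfrac12\|\rho^{(\bbP)}-\sigma^*(\cSo)\|_1 \geq \nu\}$, producing some $\delta_2(\nu,\cSo) > 0$ with $I(P_{\rho^n}) \leq \chi(\cSo) - \delta_2$. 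Fixing $\gamma_0 := \min\{1,\delta_2/(8(2d+1))\}$ then gives $I(P_{\rho^n}|\tau) \leq \chi(\cSo) - \delta_2/2$.

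Combining the two cases yields $D_h^{\eps_n}(\rho^n\|\sigma^n) \leq n\chi(\cSo) - n\delta + O(\sqrt{n}) + O(1)$ with $\delta := \min\{(\nu/2)\delta_1,\,\delta_2/2\} > 0$ and all implicit constants depending only on $\cSo,\nu,\eps$ (note that $|\cG^{\gamma_0}|$ and $\gamma_0$ are determined by $\cSo,\nu$ alone). Choosing $n \geq N_0(\eps,\cSo,\nu)$ large enough absorbs these corrections as well as the target $\sqrt{n v_{\eps}(\cSo)}\,\Phi^{-1}(\eps)$ into $n\delta$, giving the stated bound. I expect the genuinely non-routine ingredient to be the weak-continuity/compactness argument producing $\delta_2(\nu) > 0$: this is where the full strength of the geometric characterisation of $\Pi(\cSo)$ from Theorem~\ref{th:radius} is essential, and it is precisely what allows the converse to extend beyond discrete input alphabets.
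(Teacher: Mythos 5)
Your proof is correct and follows essentially the same route as the paper: extract $\sigma^{\otimes n}$ or the appropriate $\tau^{\otimes n}$ from the mixture via Property~3 of Lemma~\ref{lm:hypo-prop}, apply the uniform Chebyshev-type bound~\eqref{eq:cheby} of Proposition~\ref{pr:asymptotics}, and exhibit a linear-in-$n$ deficit in the first-order term for the two sub-cases (your pigeonhole step is the paper's Lemma~\ref{lm:selection}, and your $\delta_1,\delta_2$ are the paper's $\chi(\cSo)-\hat{\chi}^{\nu}_1$ and $\chi(\cSo)-\tilde{\chi}^{\nu}_2$). Your explicit weak-compactness argument showing $\delta_2>0$ is a welcome elaboration of a step the paper merely asserts.
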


\begin{proof}
The technique for bounding $D_h^{\eps_n}\big( \rho^n \| \sigma^n \big)$ differs depending  on the  state sequence $\rho^n$. We consider two  cases: (a) $\rho^n \notin \Omega_1^{\nu}$ and (b) $\rho^n \notin \Omega_2^{\nu}$ in the following subsections. 

\vspace{0.2cm} 
\paragraph*{(a) Sequences $\rho^n \notin \Omega_1^{\nu}$:}

Applying Property 3 of Lemma~\ref{lm:hypo-prop} to $D_h^{\eps_n}(\rho^n\|\sigma^n)$ with our choice of $\sigma^n$  in \eqref{eq:sigman} and picking out the divergence center $\sigma^{\otimes n}$ yields an upper bound of the form
\begin{align}
D_h^{\eps_n}(\rho^n\|\sigma^n) &\leq  D_h^{\eps_n} (\rho^n \| \sigma^{\otimes n} \big) + \log 2. 
\end{align}
Furthermore, as in the proof of Proposition~\ref{pr:strong}, we employ~\eqref{eq:cheby} in Proposition~\ref{pr:asymptotics} to obtain
\begin{align}
D_h^{\eps_n}\big(\rho^n\big\|\sigma^{\otimes n}\big) &\leq \sum_{i=1}^n D( \rho_i \| \sigma) + K_1 \sqrt{n} \, ,
\end{align}
for all $n \geq N_1$. (We absorbed the constant term $\log 2$ into the constant $K_1$ here for convinience.)

Now, we define
$\hat{\chi}^{\nu}_1 := \sup_{ \rho \in \cSo :\, \Delta(\rho,\Gamma) > \frac{\nu}2 } D ( \rho \| \sigma) < \chi(\cSo)$ and
employ the following lemma which is shown in Appendix~\ref{app:selection}.
\begin{lemma} \label{lm:selection} 
  Let $\rho^n \in \cSo^{\otimes n}$ be fixed and let $\nu \in (0, 1)$. If $\rho^n \notin \Omega_1^{\nu}$, then   there exists a set $\Xi^{\nu} \subseteq [n]$ of cardinality $| \Xi^{\nu} | > n \frac{\nu}{2}$ such that, for all $i \in \Xi^{\nu}$, we have $\Delta(\rho_i, \Gamma) > \frac{\nu}{2}$.
\end{lemma}
\noindent This leads us to bound
\begin{align}
D_h^{\eps_n}(\rho^n\|\sigma^n) \leq \sum_{i \in \Xi^{\nu}} \hat{\chi}^{\nu}_1 + \sum_{i \notin \Xi^{\nu}} \chi(\cSo) +  K_1 \sqrt{n} 
\leq n \chi(\cSo) - n(\chi(\cSo) -\hat{\chi}^{\nu}_1) \frac{\nu}{2} + K_1 \sqrt{n} . 
\end{align}
In particular, we have $D_h^{\eps_n}(\rho^n\|\sigma^n) \le n\,\chi(\cSo) + \sqrt{n\, v_\eps(\cSo)} \Phi^{-1}(\eps)$ for sufficiently large $n \geq N$, where $N$ is appropriately chosen.

\vspace{0.2cm} 
\paragraph*{(b) Sequences $\rho^n \notin \Omega_2^{\nu}$:}

For these sequences, we extract the state $\tau^{\otimes n}$ from the convex combination that defines $\sigma^n$   in~\eqref{eq:sigman}, where $\tau$ is the state closest (in the relative entropy sense) to the average output state $\bar{\rho} = \rho^{(P_{\rho^n})} = \frac1{n} \sum_{i=1}^n \rho_i$ in $\cG^{\gamma}$  and the constant $\gamma >0$ is to be chosen later. In other words,  $\tau \in\argmin_{\tau\in\cG^\gamma}D(\bar{\rho}\|\tau)$. Thus, by Property 3 of Lemma~\ref{lm:hypo-prop}, we have
\begin{align}
  D_h^{\eps_n}(\rho^n\|\sigma^n) \leq D_h^{\eps_n}\big(\rho^n\big\|\tau^{\otimes n}\big) + \log |\cG^{\gamma}| .
\end{align}

Then, by using~\eqref{eq:cheby} in Proposition~\ref{pr:asymptotics} we find for all $\rho^n \notin \Omega_2^{\nu}$ that 
\begin{align}
D_h^{\eps_n}\big(\rho^n\big\|\tau^{\otimes n}\big) &\leq \sum_{i=1}^n D(\rho_i \| \tau) + K_1' \sqrt{n} .  
\end{align} 
for $n \geq N_1'$. Here, we take advantage of the fact that the minimum eigenvalue of $\tau$ satisfies $\lambda_{\min}(\tau) \geq \frac{\gamma}{2d+\gamma}$ such that the constants $K_1'$ and $N_1'$ can be chosen uniformly for all $\tau \in \cG^{\gamma}$.

We continue to bound
\begin{align}
D_h^{\eps_n}\big(\rho^n\big\|\tau^{\otimes n}\big)
 &\leq \sum_{i=1}^n D( \rho_i \| \bar{\rho}) + \sum_{i=1}^n \tr \big( \rho_i (\log \bar{\rho} - \log \tau) \big) + K_1' \sqrt{n}
  \\
   &= \sum_{i=1}^n D( \rho_i \| \bar{\rho}) + n D(\bar{\rho}\|\tau) +  K_1' \sqrt{n} \\
   &\leq n\, I\Big(P_{\rho^n} \Big|\, \rho^{(P_{\rho^n})} \Big) +n\cdot 4\gamma (2d + 1)+  K_1' \sqrt{n} ,
\end{align}
where the second inequality follows from the properties  of the $\gamma$-net stated in Lemma~\ref{lm:net} 
and on the last line we introduced the empirical distribution of $\rho^n$, defined as $P_{\rho^n}(\rho) = \frac1{n} \sum_{i=1}^n 1 \{ \rho = \rho_i \}$.

 Then, by Theorem~\ref{th:radius} and the definition of $\Pi(\cSo)$ and $\nu \in (0,1)$, we know that 
 \begin{align}
 \tilde{\chi}^{\nu}_2 := \sup  \bigg\{ I\Big( \bbP \Big| \rho^{(\bbP)} \Big) \,\bigg|\, \bbP \in \cP(\cSo):\, \frac12 \Big\| \rho^{(\bbP)} - \sigma^*(\cSo) \Big\|_1 > \nu \bigg\} < \chi(\cSo) . 
 \end{align}
Summarizing the above, we have
\begin{align}
D_h^{\eps_n}(\rho^n\|\sigma^n) \leq n \chi(\cSo) - n \big( \chi(\cSo) - \tilde{\chi}^{\nu}_2 - 4\gamma (2d + 1) \big)+ K_1' \sqrt{n} + \log |\cG^{\gamma}| .
\end{align}
  By choosing $\gamma = \gamma_0(\nu,\cSo)$ small enough such that $\chi(\cSo) -\tilde{\chi}^{\nu}_2
   -  4\gamma (2d + 1) > 0$, we find that $D_h^{\eps_n}(\rho^n\|\sigma^n) \leq n\chi(\cSo) + \sqrt{n v_{\eps}(\cSo)} \Phi^{-1}(\eps)$ for sufficiently large $n \geq N'$, appropriately chosen.
   
  We conclude by observing that the statement of the proposition holds for $n \geq \max\{N, N'\}$. 
\end{proof}

\subsubsection{Putting Everything Together: Proof of Converse Part of Theorem~\ref{th:main}}

The upper bound in Theorem~\ref{th:main} is now a corollary of Proposition~\ref{pr:one-shot-converse} and the following result.


\begin{proposition}
  \label{pr:thedifficultpart}
  Let $\eps \in (0,1)$ and $\cSo \subseteq \cS$. Let $\{\eps_n\}_{n=1}^{\infty}$ be any sequence satisfying $|\eps_n - \eps| \leq 1/\sqrt{n}$ for all $n$. Then,
  \begin{align}
    \chi_h^{\eps_n}(\cSo^{\otimes n}) \leq n\, \chi(\cSo) + \sqrt{n\, v_{\eps}(\cSo)}\, \Phi^{-1}(\eps) + o(\sqrt{n}) .
  \end{align}
\end{proposition}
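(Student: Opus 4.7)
The plan is to bound $\chi_h^{\eps_n}(\cSo^{\otimes n}) \leq \sup_{\rho^n \in \cSo^{\otimes n}} D_h^{\eps_n}(\rho^n \| \sigma^n)$ with $\sigma^n$ chosen as in~\eqref{eq:sigman}, and then decompose the supremum according to whether $\rho^n \in \Omega_1^\nu \cap \Omega_2^\nu$ (\emph{good}) or not (\emph{bad}) for a small parameter $\nu > 0$. The bad sequences are already handled by Proposition~\ref{pr:conv-bad}: taking $\gamma = \gamma_0(\cSo, \nu)$ in the definition of $\sigma^n$ gives $D_h^{\eps_n}(\rho^n\|\sigma^n) \leq n\,\chi(\cSo) + \sqrt{n\,v_\eps(\cSo)}\,\Phi^{-1}(\eps)$ for all $n \geq N_0(\eps,\cSo,\nu)$. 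Thus the bulk of the proof reduces to handling good sequences, and at the very end we let $\nu \to 0$.

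For $\rho^n \in \Omega_1^\nu \cap \Omega_2^\nu$, I would first extract the summand $\tfrac{1}{2}\sigma^{\otimes n}$ from $\sigma^n$ (with $\sigma = \sigma^*(\cSo)$) using Property~3 of Lemma~\ref{lm:hypo-prop}, at the cost of an additive $\log 2$, reducing the task to bounding $D_h^{\eps_n}(\rho^n\|\sigma^{\otimes n})$. Since $\lambda_{\min}(\sigma^*) > 0$ by Theorem~\ref{th:radius}, Proposition~\ref{pr:asymptotics} applies, and I would split further according to whether $V(P_{\rho^n}|\sigma^*)$ exceeds a threshold $\xi$. In the high-variance branch, bound~\eqref{eq:berry} combined with the trivial inequality $I(P_{\rho^n}|\sigma^*) \leq \chi(\cSo)$ from Theorem~\ref{th:radius} yields
\begin{align}
D_h^{\eps_n}(\rho^n\|\sigma^{\otimes n}) \leq n\,\chi(\cSo) + \sqrt{n\,V(P_{\rho^n}|\sigma^*)}\,\Phi^{-1}(\eps) + O(\log n).
\end{align}
In the low-variance branch, bound~\eqref{eq:cheby} gives $D_h^{\eps_n}(\rho^n\|\sigma^{\otimes n}) \leq n\,\chi(\cSo) + \sqrt{n\xi/\eps^*} + O(\log n)$, whose excess $\sqrt{n\xi/\eps^*}$ is either absorbed by the positive second-order term (when $\Phi^{-1}(\eps) > 0$) or rendered $o(\sqrt{n})$ by sending $\xi \to 0$ with $n$.

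The crux of the argument, which I expect to be the main obstacle, is a continuity-and-compactness lemma: for every $\delta > 0$ there exists $\nu(\delta) > 0$ such that every $\rho^n \in \Omega_1^{\nu(\delta)} \cap \Omega_2^{\nu(\delta)}$ satisfies $V(P_{\rho^n}|\sigma^*) \in [v_{\min}(\cSo) - \delta,\, v_{\max}(\cSo) + \delta]$. I would establish this by contradiction: a counterexample sequence $P_k$ with $\nu_k \to 0$ would, by weak compactness of $\cP(\cSo)$, admit a subsequence converging weakly to some $\bbP$; the definitions of $\Omega_1^{\nu_k}$ and $\Omega_2^{\nu_k}$ together with the continuity of $\rho \mapsto \Delta(\rho,\Gamma(\cSo))$ and $\bbP \mapsto \rho^{(\bbP)}$ then force $\supp(\bbP) \subseteq \Gamma(\cSo)$ and $\rho^{(\bbP)} = \sigma^*(\cSo)$, so $\bbP \in \Pi(\cSo)$. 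Continuity of $\bbP \mapsto V(\bbP|\sigma^*)$\,---\,which holds since $\rho \mapsto V(\rho\|\sigma^*)$ is bounded and continuous on $\cSo$ (using $\sigma^* \gg \rho$)\,---\,forces $V(\bbP|\sigma^*) \in [v_{\min}(\cSo), v_{\max}(\cSo)]$, a contradiction. Granting the lemma, a Taylor expansion $\sqrt{v_\eps \pm \delta} = \sqrt{v_\eps} + O(\delta/\sqrt{v_\eps})$ (with a direct $\sqrt{\delta}$ estimate when $v_\eps(\cSo) = 0$) combined with the sign of $\Phi^{-1}(\eps)$ gives $\sqrt{n\,V(P_{\rho^n}|\sigma^*)}\,\Phi^{-1}(\eps) \leq \sqrt{n\,v_\eps(\cSo)}\,\Phi^{-1}(\eps) + O(\delta\sqrt{n})$; sending $\nu, \delta \to 0$ after $n \to \infty$ completes the proof.
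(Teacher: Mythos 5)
Your overall route is essentially the paper's: the same choice of $\sigma^n$ from~\eqref{eq:sigman}, the same split into bad sequences (dispatched by Proposition~\ref{pr:conv-bad}) and good sequences in $\Omega_1^{\nu}\cap\Omega_2^{\nu}$, the same variance threshold $\xi$ separating the Chebyshev bound~\eqref{eq:cheby} from the Berry--Esseen bound~\eqref{eq:berry}, and your continuity-and-compactness lemma plays exactly the role of the paper's relaxed sets $\Pi^{\nu}$ together with Lemma~\ref{lm:v-limit} (i.e.\ $v_{\eps}^{\nu}(\cSo)\to v_{\eps}(\cSo)$ as $\nu\to 0$). The weak-compactness contradiction you sketch is sound: $\sigma^*(\cSo)$ has full rank under the standing support assumption, so $\rho\mapsto V(\rho\|\sigma^*(\cSo))$ and $\rho\mapsto\Delta(\rho,\Gamma(\cSo))$ are bounded and continuous on $\cSo$ and $\Gamma(\cSo)$ is closed, which is all the argument needs.

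There is, however, a concrete gap in your handling of the low-variance branch. You claim the excess $\sqrt{n\xi/\eps^*}$ is ``either absorbed by the positive second-order term (when $\Phi^{-1}(\eps)>0$) or rendered $o(\sqrt{n})$ by sending $\xi\to 0$ with $n$''. This does not cover the case $\eps<\tfrac12$ with $v_{\min}(\cSo)>0$: there $\Phi^{-1}(\eps)<0$, so no absorption is available, and a bound of the form $n\,\chi(\cSo)+o(\sqrt{n})$ is strictly weaker than the required $n\,\chi(\cSo)+\sqrt{n\,v_{\min}(\cSo)}\,\Phi^{-1}(\eps)+o(\sqrt{n})$, whose second-order term is negative of order $\sqrt{n}$. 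The correct resolution---exactly the paper's case $v_{\min}(\cSo)>0$, where $\xi=v_{\min}(\cSo)/2$ is fixed---is that the low-variance branch is \emph{empty} among good sequences: your own lemma gives $V(P_{\rho^n}|\sigma^*(\cSo))\ge v_{\min}(\cSo)-\delta$ on $\Omega_1^{\nu(\delta)}\cap\Omega_2^{\nu(\delta)}$, so choosing $\delta<v_{\min}(\cSo)/2$ and $\xi\le v_{\min}(\cSo)-\delta$ makes that branch vacuous; once you state this, the gap closes. A secondary imprecision: ``sending $\xi\to 0$ with $n$'' is risky, because the constants in the Berry--Esseen part of Proposition~\ref{pr:asymptotics} (the threshold $N_2$ and the $O(1)$ term, which scale with $T^+/\xi^{3/2}$) blow up as $\xi\to 0$, so the complementary high-variance branch is no longer uniformly controlled; keep $\xi$ fixed, take $\limsup_{n\to\infty}$, and only then send $\xi$ (and $\delta$, $\nu$) to zero, as the paper does.
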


\begin{proof}
  For any $\nu \in (0, 1)$, we first invoke Proposition~\ref{pr:conv-bad} to verify that
  \begin{align}
    \sup_{\rho^n \notin \Omega_1^{\nu} \cap \Omega_2^{\nu}} D_h^{\eps_n}(\rho^n \| \sigma^n) 
       \leq n\, \chi(\cSo) + \sqrt{n\,v_{\eps}(\cSo)}\,\Phi^{-1}(\eps) \label{eq:outliers}
  \end{align}
  for $n \geq N_0(\eps,\nu,\cSo)$ sufficiently large. It remains to consider sequences $\rho^n \in \Omega_1^{\nu} \cap \Omega_2^{\nu}$.
Define the set  of sequences $\rho^n$ with empirical distribution $P_{\rho^n}$ resulting in a $\xi$-positive relative entropy variance as 
\begin{equation}
\Omega_3^\xi:=\left\{ \rho^n  \in  \cSo^{\otimes n } : V( P_{\rho^n} |\sigma)  \ge \xi\right\}, \label{eq:xiset}
\end{equation}
where $\xi>0$ is a constant to be chosen later. 
%

  For $\rho^n \notin \Omega_3^\xi$, 
  we again pick out $\sigma^{\otimes n}$ from~\eqref{eq:sigman} to find $D_h^{\eps_n}(\rho^n\|\sigma^n) \leq D_h^{\eps_n}(\rho^n\|\sigma^{\otimes n}) + \log 2$.
  Then, we employ~\eqref{eq:cheby} in Proposition~\ref{pr:asymptotics} to obtain
\begin{align}
D_h^{\eps_n}(\rho^n \| \sigma^{\otimes n}) \le n I(P_{\rho^n}|\sigma) + \sqrt{\frac{n V(P_{\rho^n}|\sigma)}{\eps^*}} + L_1 \log n < n \chi(\cSo) + \sqrt{\frac{n\xi}{\eps^*}} + L_1 \log n .  \label{eq:cheby1}
\end{align}
For sequences $\rho^n \in \Omega_3^\xi$, by the Berry-Esseen-type bound~\eqref{eq:berry} in Proposition~\ref{pr:asymptotics}, we have   
\begin{align}
D_h^{\eps_n}(\rho^n \| \sigma^{\otimes n}) &\le n I(P_{\rho^n}|\sigma) + \sqrt{n V(P_{\rho^n}|\sigma)} \Phi^{-1}(\eps)+ L_2 \log n  \nonumber\\
  &\leq n \chi(\cSo) + \sqrt{n\, v_{\eps}^{\nu}(\cSo)}\,\Phi^{-1}(\eps) + L_2 \log n \label{eq:berry1} ,
\end{align}
where we define $v_{\eps}^{\nu}(\cSo)$ similarly to $v_{\eps}(\cSo) = v_{\eps}^0(\cSo)$ as
\begin{align}
v_{\eps}^{\nu}(\cSo) := \begin{cases} \inf_{\bbP \in \Pi^{\nu}} V(\bbP|\sigma) & \textrm{if }\ 0 < \eps \leq \frac12 \\
       \sup_{\bbP \in \Pi^{\nu}} V(\bbP|\sigma) & \textrm{if }\ \frac12 < \eps < 1 \end{cases} 
\end{align}
where we employed the set $\Pi^{\nu} \subseteq \cP(\cSo)$ of probability measures close to $\Pi(\cSo)$, given as
\begin{align}
  \Pi^{\nu} := \bigg\{ \bbP \in \cP(\cSo) \,\bigg|\, \int \diff \bbP(\rho)\, \Delta(\rho, \Gamma) \leq \nu \ \land \ \frac{1}{2}\Big\| \rho^{(\bbP)} - \sigma^*(\cSo) \Big\|_1 \leq \nu \bigg\} .
\end{align}
Clearly, the empirical distribution of a sequence $\rho^n$ is in $\Pi^{\nu}$ if and only if $\rho^n \in \Omega_1^{\nu} \cup \Omega_2^{\nu}$. The sets $\Pi^{\nu}$ are compact. Moreover, we may write $\Pi(\cSo) = \bigcap_{\nu > 0} \Pi^{\nu}$ to recover the definition in~\eqref{eq:pi}.

Now, we will choose the parameters $\xi$ and $\nu$ differently depending on some properties of $\cSo$. Let us first consider two cases for which $v_{\eps}(\cSo) > 0$.

\begin{enumerate}
\item $v_{\min}(\cSo) >0$. In this case, the constant $\xi >0$ is chosen to be $\xi = \frac{ v_{\min}(\cSo)}2 > 0$. Now, for all $\nu$ sufficiently small we have $\inf_{\bbP \in \Pi^{\nu}} V(\bbP|\sigma) > \xi$ so that
$\Omega_1^{\nu} \cap \Omega_2^{\nu} \setminus \Omega_3^{\xi}$ is empty. Thus, combining \eqref{eq:outliers} and \eqref{eq:berry1}, we find
\begin{align}
  \chi_h^{\eps_n}(\cSo^{\otimes n}) \leq \sup_{\rho^n \in \cSo^{\otimes n}} D_h^{\eps_n}(\rho^n \| \sigma^n) \leq n \chi(\cSo) + \sqrt{n\,v_{\eps}^{\nu}(\cSo)}\,\Phi^{-1}(\eps) + O(\log n) \label{eq:type1}
\end{align}
 \item $\eps > \frac{1}{2}$ and $v_{\max}(\cSo)>v_{\min}(\cSo)=0$. Here we note that $\Phi^{-1}(\eps) > 0$ and $v_\eps(\cSo)>0$. Thus, we may choose $\xi>0$ sufficiently small so that 
\begin{equation}
\sqrt{\frac{\xi}{\eps^*}}\le \sqrt{v_\eps(\cSo)}\Phi^{-1}(\eps) \leq \sqrt{v_\eps^{\nu}(\cSo)}\Phi^{-1}(\eps) \nonumber 
\end{equation}
for any $\nu > 0$.
The bounds~\eqref{eq:outliers},~\eqref{eq:cheby1} and~\eqref{eq:berry1} can then be summarized and~\eqref{eq:type1} holds. 
\end{enumerate}
The bounds for cases 1 and 2 can be restated as follows. For all $\nu > 0$ sufficiently small, we have
\begin{align}
  \limsup_{n \to \infty} \frac{\chi_h^{\eps_n}(\cSo^{\otimes n}) - n \chi(\cSo) }{\sqrt{n}} \leq \sqrt{v_{\eps}^{\nu}}\,\Phi^{-1}(\eps)
\end{align}
Since $\nu > 0$ is arbitrary small we take $\nu \searrow 0$. Then, it remains to show that $\lim_{\nu \to 0} v_{\eps}^{\nu}(\cSo) = v_{\eps}(\cSo)$. This is a consequence of the following lemma (proved in Appendix~\ref{app:v-limit}).
\begin{lemma} \label{lm:v-limit}
  Let $\Theta_1 \supseteq \Theta_2 \supseteq \ldots$ be a sequence of compact sets in a metric space and let $f: \Theta_1 \to \mathbb{R}$ be continuous and bounded. Then,
  \begin{align}
     \lim_{n \to \infty}\, \inf_{x \in \Theta_n} f(x) = \inf_{x \in \Theta_{\infty}} f(x)  \qquad \textrm{whenever}\qquad \Theta_{\infty} := \bigcap_{n \in \mathbb{N}} \Theta_n \neq \emptyset. \label{eq:limitcompact}
  \end{align}
\end{lemma}

\noindent This establishes that $\chi_h^{\eps_n}(\cSo^{\otimes n}) \leq n \chi(\cSo) + \sqrt{n\, v_{\eps}(\cSo)}\,\Phi^{-1}(\eps) + o(\sqrt{n})$, as desired.

Let us now turn our attention to the cases for which $v_{\eps}(\cSo) = 0$.

\begin{enumerate}

\item[3.] $v_{\max}(\cSo) = v_{\min}(\cSo) = 0$. Here, we note that for any $\xi > 0$ there exists a $\nu > 0$ such that $\sup_{\bbP \in \Pi^{\nu}} V(\bbP|\sigma) < \xi$ and, thus, the set $\Omega_1^{\nu} \cap \Omega_2^{\nu} \cap \Omega_3^{\xi}$ is empty. Hence, the bounds~\eqref{eq:outliers} and~\eqref{eq:cheby1} can be combined to yield
\begin{align}
  \chi_h^{\eps_n}(\cSo^{\otimes n}) \leq \sup_{\rho^n \in \cSo^{\otimes n}} D_h^{\eps_n}(\rho^n \| \sigma^n) \leq n \chi(\cSo) + \sqrt{\frac{n \xi}{\eps^*}} + O(\log n) \label{eq:type2} .
\end{align}

\item[4.] $\eps \leq \frac{1}{2}$ and $v_{\max}(\cSo) > v_{\min}(\cSo) = 0$. Here, any choice of $\xi>0$ enforces that
\begin{align}
  \sqrt{\frac{\xi}{\eps^*}} \geq 0 = \sqrt{v_{\eps}^{\nu}(\cSo)}\,\Phi^{-1}(\eps) . 
\end{align}
Thus, the bounds~\eqref{eq:outliers},~\eqref{eq:cheby1} and~\eqref{eq:berry1} together establish that~\eqref{eq:type2} holds.

\end{enumerate}

Again, let us restate the bounds for cases 3 and 4 as follows. For all $\xi > 0$, we have
\begin{align}
\limsup_{n \to \infty}  \frac{\chi_h^{\eps_n}(\cSo^{\otimes n}) - n \chi(\cSo) }{\sqrt{n}} \leq \sqrt{ \frac{\xi}{\eps^*}}.\nonumber
\end{align}
Since $\xi>0$ is arbitrary, we may take $\xi \searrow 0$ and deduce  that 
\begin{equation}
\chi_h^{\eps_n}(\cSo^{\otimes n}) \le  n\chi(\cSo) + o\big(\sqrt{n}\big).\nonumber
\end{equation}
This concurs with the second-order approximation  since $v_\eps(\cSo)$ is zero and concludes the proof.
\end{proof}

\subsection{Asymptotics of the $\eps$-Hypothesis-Testing Divergence Radius: Beyond Second-Order}
\label{sec:converethird}

In this section we want to improve the upper bound of $o(\sqrt{n})$ in Theorem~\ref{th:main} to $O(\log n)$ for the important special case where $\cSo$ is a discrete set. To simplify the exposition here, we further assume that $v_{\min}(\cSo) > 0$. Comparing with the proof of Proposition~\ref{pr:thedifficultpart}, it is however easy to see that this condition can be relaxed to $v_{\eps}(\cSo) > 0$. 

\begin{proposition}
  Let $\eps \in (0,1)$ and $\cSo \subseteq \cS$ be discrete and $v_{\min}(\cSo) > 0$. Let $\{\eps_n\}_{n=1}^{\infty}$ be any sequence satisfying $|\eps_n - \eps| \leq 1/\sqrt{n}$ for all $n$. Then,
  \begin{align}
    \chi_h^{\eps_n}(\cSo^{\otimes n}) \leq n\, \chi(\cSo) + \sqrt{n\, v_{\eps}(\cSo)}\, \Phi^{-1}(\eps) + O(\log n) .
  \end{align}
\end{proposition}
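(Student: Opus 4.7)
My plan is to follow the proof of Proposition~\ref{pr:thedifficultpart} but to let the parameter $\nu$ depend on $n$, exploiting the discreteness of $\cSo$ to quantify how the auxiliary objects in that proof vary with $\nu$. Concretely, I take $\nu = \nu_n := C/\sqrt{n}$ for a sufficiently large constant $C = C(\eps, \cSo)$ and set the net parameter $\gamma = \gamma_n := \nu_n$ in the definition~\eqref{eq:sigman} of $\sigma^n$, so that $\log|\cG^{\gamma_n}| = O(\log n)$ by Lemma~\ref{lm:net}. The task then splits into two parts: (a) for $\rho^n \notin \Omega_1^{\nu_n} \cap \Omega_2^{\nu_n}$ obtain $D_h^{\eps_n}(\rho^n\|\sigma^n) \leq n\chi(\cSo) + \sqrt{n\,v_\eps(\cSo)}\,\Phi^{-1}(\eps) + O(\log n)$ via a refined version of Proposition~\ref{pr:conv-bad}; (b) for $\rho^n \in \Omega_1^{\nu_n} \cap \Omega_2^{\nu_n}$ apply the Berry-Esseen bound~\eqref{eq:berry}. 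The assumption $v_{\min}(\cSo) > 0$ ensures that, taking $\xi := v_{\min}(\cSo)/2$ and $n$ large enough, every such sequence lies in $\Omega_3^\xi$, so the Chebyshev case of Proposition~\ref{pr:asymptotics} is vacuous.

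For part~(a), the essential observation is that when $\cSo$ is discrete, $\cSo \setminus \Gamma(\cSo)$ is finite and is separated from $\Gamma(\cSo)$ by a fixed positive trace distance. Hence $\chi(\cSo) - \hat\chi^{\nu}_1$ is an $n$-independent positive constant for all $\nu$ smaller than a threshold, and the gap term $n(\chi(\cSo) - \hat\chi^{\nu_n}_1)\nu_n/2$ scales as $C\sqrt{n}$ times a positive constant. Choosing $C$ large enough relative to $K_1$ and $\sqrt{v_\eps(\cSo)}\,|\Phi^{-1}(\eps)|$ makes this dominate the $K_1\sqrt{n}$ slack in Proposition~\ref{pr:conv-bad}. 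For part~(b), applying~\eqref{eq:berry} with $\sigma = \sigma^*(\cSo)$ gives $D_h^{\eps_n}(\rho^n\|\sigma^n) \leq n\chi(\cSo) + \sqrt{n\,v_\eps^{\nu_n}(\cSo)}\,\Phi^{-1}(\eps) + O(\log n)$, and the problem reduces to controlling $\sqrt{n}\,(\sqrt{v_\eps^{\nu_n}} - \sqrt{v_\eps})$. I will prove the Lipschitz estimate $|v_\eps^\nu(\cSo) - v_\eps(\cSo)| = O(\nu)$ by noting that, for discrete $\cSo$, $\Pi^\nu$ is a polytope in the finite-dimensional simplex $\cP(\cSo)$ with linearly perturbed defining constraints and $V(\bbP|\sigma^*)$ is a linear functional of $\bbP$; classical sensitivity analysis for linear programs (alternatively, an explicit projection of each $\bbP \in \Pi^\nu$ onto $\Pi(\cSo)$ with total-variation error $O(\nu)$) yields the claim. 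Since $v_{\min}(\cSo) > 0$, the square root is Lipschitz at $v_\eps(\cSo)$, so this correction contributes only $O(\sqrt{n}\,\nu_n) = O(1)$.

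The main technical obstacle is the quantitative analysis of the $\Omega_2^{\nu_n}$ branch in part~(a). The gap $\chi(\cSo) - \tilde\chi^{\nu}_2$ typically grows only \emph{quadratically} in $\nu$, because $\bbP \mapsto I(\bbP)$ attains its maximum with quadratic curvature as a function of $\rho^{(\bbP)}$ near $\sigma^*(\cSo)$; with $\nu_n = C/\sqrt{n}$ this produces a gap of only $O(1)$, far from enough to absorb the $K_1'\sqrt{n}$ slack. My workaround is to treat sequences in $\Omega_1^{\nu_n} \setminus \Omega_2^{\nu_n}$ separately by enlarging the mixture $\sigma^n$: in the discrete setting the set of possible empirical averages $\{\rho^{(P)} : P \in \cP_n(\cSo)\}$ has cardinality $\poly(n)$, and including these states in $\sigma^n$ adds only $O(\log n)$ to the cost of extracting $(\rho^{(P_{\rho^n})})^{\otimes n}$ via Part~3 of Lemma~\ref{lm:hypo-prop}. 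Applying~\eqref{eq:berry} with reference $\bar\rho := \rho^{(P_{\rho^n})}$ then yields $n I(P_{\rho^n}) + \sqrt{n\,V(P_{\rho^n}|\bar\rho)}\,\Phi^{-1}(\eps) + O(\log n)$, where $I(P_{\rho^n}) \leq \chi(\cSo)$ and the concentration of $P_{\rho^n}$ near $\Gamma(\cSo)$ implied by $\rho^n \in \Omega_1^{\nu_n}$, combined with a Lipschitz argument for $V$, forces $V(P_{\rho^n}|\bar\rho) \leq v_\eps^{\nu_n}(\cSo) + O(\nu_n)$. This final delicate continuity step, relying crucially on both discreteness and $v_{\min}(\cSo) > 0$, closes the argument.
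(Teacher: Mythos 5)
There is a genuine gap, and it sits exactly where you flagged the difficulty: the branch $\rho^n \in \Omega_1^{\nu_n}\setminus\Omega_2^{\nu_n}$. Membership in $\Omega_1^{\nu_n}$ constrains only the average distance of the $\rho_i$ to $\Gamma(\cSo)$; it places no constraint whatsoever on the barycenter $\bar\rho=\rho^{(P_{\rho^n})}$, and on this branch $\tfrac12\|\bar\rho-\sigma^*(\cSo)\|_1$ ranges from just above $\nu_n$ up to a constant. Your key estimate ``$V(P_{\rho^n}|\bar\rho)\le v_\eps^{\nu_n}(\cSo)+O(\nu_n)$'' therefore does not follow: $v_\eps^{\nu}$ is defined as an optimum over $\Pi^{\nu}$, whose defining constraints include the barycenter condition, and for a distribution supported on $\Gamma(\cSo)$ whose barycenter sits a constant distance from $\sigma^*(\cSo)$ the conditional variance $V(P|\rho^{(P)})$ need not be within $O(\nu_n)$ of $v_{\max}$ (it can exceed it by a constant). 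Moreover, for $\eps\le\tfrac12$ the inequality you would need points the other way, since $\Phi^{-1}(\eps)\le 0$ makes $\sqrt{nV}\,\Phi^{-1}(\eps)$ decreasing in $V$, so an upper bound on $V$ does not help at all there. What actually controls this regime is a trade-off in the barycenter deviation $\delta$: strict concavity of the entropy with strictly negative second derivative (Lemma~\ref{lm:strict}) forces $I(P)\le\chi(\cSo)-c\,\delta^2$, while $\sqrt{V(P)}$ moves by at most $O(\delta)$ (using $v_{\min}(\cSo)>0$ to keep the square root Lipschitz), and maximizing $L\sqrt{n}\,\delta-c\,n\,\delta^2$ over $\delta$ yields only an $O(1)$ excess. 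This is precisely the Polyanskiy--Poor--Verd\'u argument behind the step \eqref{eq:takethis1}--\eqref{eq:takethis2} in the paper; your plan explicitly tries to sidestep it by shrinking $\nu_n\sim 1/\sqrt{n}$, but the problematic intermediate regime (barycenter deviation between $C/\sqrt{n}$ and a constant) then reappears on the $\Omega_2$-violating branch and is not handled by the asserted continuity claim.

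Note also how the paper avoids this: it keeps $\nu$ fixed (independent of $n$), adds the type states $(\rho^{(P)})^{\otimes n}$, $P\in\cP_n(\cSo)$, to the mixture $\sigma^n$ as in \eqref{eq:sigman2}, bounds \emph{all} of $\Omega_1^{\nu}\cap\Omega_2^{\nu}$ via \eqref{eq:berry} with reference $\rho^{(P_{\rho^n})}$ (using $v_{\min}(\cSo)>0$ and continuity to guarantee $V(P_{\rho^n})>v_{\min}/2$), and only then performs the supremum shift $\sup_{P\in\Pi^{\nu}}\bigl(nI(P)+\sqrt{nV(P)}\,\Phi^{-1}(\eps)\bigr)=\sup_{P\in\Pi}\bigl(\cdot\bigr)+O(1)$, which is where the quadratic-decay input of Lemma~\ref{lm:strict} enters. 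Your remaining ingredients are less problematic: the part-(a) treatment of $\Omega_1^{\nu_n}$ (discreteness makes $\chi(\cSo)-\hat\chi_1^{\nu}$ a fixed positive constant for small $\nu$, so the penalty scales like $C\sqrt{n}$) is fine, and the claim $|v_\eps^{\nu}-v_\eps|=O(\nu)$ for discrete $\cSo$ is plausible but needs more than the phrase ``LP sensitivity'' (the trace-norm ball is not polyhedral and the $\nu=0$ problem has equality constraints; a Hoffman-type error bound after first moving the $O(\nu)$ mass off $\cSo\setminus\Gamma(\cSo)$ would do). But as written, the argument for the $\Omega_1^{\nu_n}\setminus\Omega_2^{\nu_n}$ sequences does not close, and repairing it essentially forces you back to the quadratic-versus-linear trade-off used in the paper.
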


\begin{proof}
  For any $n$, consider all sequences $\rho^n = \bigotimes_{i=1}^n \rho_i$ with $\rho_i \in \cSo$. The method of types~\cite{csiszar98} reveals that $P_{\rho^n}$ is in a set $\cP_n(\cSo) \subseteq \cP(\cSo)$ with cardinality satisfying $\log |\cP_n(\cSo)| = O(\log n)$.
  
  We use this for a further refinement of our state $\sigma^n$ (see also~\cite[Sec.~X.A]{hayashi09}) as follows:
\begin{align}
\sigma^n := \frac13 \sigma^{\otimes n} + \frac1{3 |\cG^\gamma|} \sum_{\tau \in   \cG^\gamma } \tau^{\otimes n} + \frac{1}{3 |\cP_n(\cSo)|} \sum_{P \in \cP_n(\cSo)} \Big( \rho^{(P)} \Big)^{\otimes n} . \label{eq:sigman2}
\end{align}
  Clearly, Proposition~\ref{pr:conv-bad} still applies with this definition, and for any $\nu \in (0, 1)$ we find that
  \begin{align}
    \sup_{\rho^n \notin \Omega_1^{\nu} \cap \Omega_2^{\nu}} D_h^{\eps_n}(\rho^n \| \sigma^n) 
       \leq n\, \chi(\cSo) + \sqrt{n\,v_{\eps}(\cSo)}\,\Phi^{-1}(\eps)
  \end{align}
  Now, observe that due to our condition on the channel, we have $v_{\min}(\cSo) > 0$. Thus, $V(P) = \sum P(\rho)\,V\big(\rho\big\|\rho^{(P)}\big)$ evaluated for $P \in \Pi$ is lower bounded by $v_{\min}(\cSo)$. Moreover, by continuity, $\inf_{P \in \Pi^{\nu}} V(P) > v_{\min}(\cSo)/2$ if $\nu$ is chosen sufficiently small.
  Thus, we in particular have that $V(P_{\rho^n}) > v_{\min}/2$ for all $\rho^n \in \Omega_1^{\nu} \cap \Omega_2^{\nu}$. For such a sequence $\rho^n$, we apply Proposition~\ref{pr:asymptotics} to find
  \begin{align}
    D_h^{\eps_n}(\rho^n\|\sigma^n) &\leq D_h^{\eps}\Big(\rho^n\Big\|\big(\rho^{(P)}\big)^{\otimes n}\Big) + \log |\cP_n(\cSo)|\\
      &\leq n I(P_{\rho^n}) + \sqrt{n\, V(P_{\rho^n})}\,\Phi^{-1}(\eps) + \log |\cP_n(\cSo)| + L_3 \log n
  \end{align}
  for $n \geq N_3$. Thus, we immediately find
  \begin{align}
    \chi_h^{\eps_n}(\cSo^{\otimes n}) \leq  \sup_{P \in \Pi^{\nu}} \bigg( n I(P) + \sqrt{n\,V(P)}\,\Phi^{-1}(\eps) \bigg) + O(\log n) \label{eq:takethis1}
  \end{align}
  and it only remains to show that the supremum is achieved in $\Pi$, without too much loss. As Polyanskiy, Poor and Verd\'u discuss in~\cite[App.~J]{polyanskiy10}, we indeed have
  \begin{align}
    \sup_{P \in \Pi^{\nu}} \bigg( n I(P) + \sqrt{n\,V(P)}\,\Phi^{-1}(\eps) \bigg) &= \sup_{P \in \Pi} \bigg( n I(P) + \sqrt{n\,V(P)}\,\Phi^{-1}(\eps) \bigg) + O(1) \label{eq:takethis2}
  \end{align}
  if $I(P)$ drops fast enough when we move away from $\Pi$ (but stay in $\Pi^{\nu}$) in the following sense. We require that $\frac{\diff^2}{\diff \alpha^2} I(P + \alpha v) \big|_{\alpha=0}$ is strictly negative for all $P \in \Pi$ and for all vectors $v$ satisfying $\sum_{\rho \in \cSo} v(\rho) = 0$ such that $P + v \notin \Pi$.\footnote{Note that $\frac{\diff}{\diff \alpha} I(P + \alpha v) \big|_{\alpha=0} = 0$ on $\Pi$ by definition as $I(P)$ is maximized on $\Pi$.} This is equivalent to the condition 
  \begin{align}
  \frac{\diff^2}{\diff \alpha^2} H\Big(\rho^{(P)} + \alpha \Delta^{(v)} \Big) \bigg|_{\alpha = 0} < 0, \qquad \textrm{where} \quad \Delta^{(v)} := \sum_{\rho \in \cSo} v(\rho) \rho ,
  \end{align}
which is satisfied due to Lemma~\ref{lm:strict} below.

Thus, combining~\eqref{eq:takethis1} and~\eqref{eq:takethis2}, we conclude that
\begin{align}
  \chi_h^{\eps_n}(\cSo^{\otimes n}) &\leq  \sup_{P \in \Pi} \bigg( n I(P) + \sqrt{n\,V(P)}\,\Phi^{-1}(\eps) \bigg) + O(\log n) \\
    &= n \chi(\cSo) + \sqrt{n\,v_{\eps}(\cSo)}\,\Phi^{-1}(\eps) + O(\log n) \,.
\end{align}
\end{proof}

\begin{lemma}
  \label{lm:strict}
  Let $\rho \in \cS$ and let $\Delta \in \cH$ with $\tr(\Delta)=0$, $\Delta \neq 0$ and $\Delta \ll \rho$. Then, we have $\frac{\diff^2}{\diff \lambda^2} H(\rho + \lambda \Delta) \big|_{\lambda=0} < 0$.
  In particular, $\rho \mapsto H(\rho)$ is strictly concave.
\end{lemma}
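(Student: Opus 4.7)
The plan is to reduce to the positive definite case and compute the second derivative $\frac{\diff^2}{\diff\lambda^2}H(\rho+\lambda\Delta)$ at $\lambda=0$ by direct calculation. The hypothesis $\Delta\ll\rho$ means $\supp(\Delta)\subseteq\supp(\rho)$, so after restricting every operator to $\supp(\rho)$ we may assume $\rho>0$, and hence $\rho+\lambda\Delta>0$ for $\lambda$ in a neighborhood of $0$. On that neighborhood $H(\rho+\lambda\Delta)=-\tr\bigl((\rho+\lambda\Delta)\log(\rho+\lambda\Delta)\bigr)$ is smooth in $\lambda$.

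The first derivative is handled using the standard identity $\frac{\diff}{\diff\lambda}\tr f(A(\lambda))=\tr\bigl(f'(A(\lambda))A'(\lambda)\bigr)$ with $f(x)=x\log x$; combined with $\tr(\Delta)=0$ this collapses to $\frac{\diff}{\diff\lambda}H(\rho+\lambda\Delta)=-\tr\bigl(\Delta\log(\rho+\lambda\Delta)\bigr)$. Differentiating once more requires the derivative of the matrix logarithm, for which the plan is to use the integral representation $\log A=\int_0^\infty\bigl(\frac{1}{1+t}-\frac{1}{A+t}\bigr)\,\diff t$ valid for positive definite $A$. Differentiating under the integral and then cycling inside the trace yields
\begin{equation}
\frac{\diff^2}{\diff\lambda^2}H(\rho+\lambda\Delta)\Big|_{\lambda=0}=-\int_0^\infty\tr\bigl(B_t^2\bigr)\,\diff t,\qquad B_t:=(\rho+t)^{-1/2}\Delta(\rho+t)^{-1/2}.
\end{equation}
Each $B_t$ is self-adjoint, so $\tr(B_t^2)=\|B_t\|_2^2\geq 0$, with equality iff $B_t=0$. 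Since $(\rho+t)^{-1/2}$ is invertible on $\supp(\rho)$ and $\Delta\neq 0$, we have $B_t\neq 0$ for every $t\geq 0$, the integrand is strictly positive, and the whole integral is strictly positive, giving the claimed strict negativity.

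For the ``in particular'' statement, given distinct $\rho_0,\rho_1\in\cS$ the idea is to set $\rho=\tfrac12(\rho_0+\rho_1)$ and $\Delta=\tfrac12(\rho_1-\rho_0)$, verify $\tr\Delta=0$, $\Delta\neq 0$, and $\supp(\Delta)\subseteq\supp(\rho_0)+\supp(\rho_1)=\supp(\rho)$ so that $\Delta\ll\rho$. The derivation above applies verbatim not only at $\lambda=0$ but at every interior $\lambda_0\in(-1,1)$, after replacing the base point $\rho$ by $\rho+\lambda_0\Delta$, which is still positive on $\supp(\rho)$ and still dominates $\Delta$; this yields strict concavity of $\lambda\mapsto H(\rho+\lambda\Delta)$ on $[-1,1]$ and hence strict concavity of $H$ on $\cS$. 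The main technical point to be careful about is the kernel of $\rho$: the hypothesis $\Delta\ll\rho$ is exactly what is needed to pass to $\supp(\rho)$ and invoke the integral representation of $\log$ with no $t=0$ singularity, so that the Duhamel-type formula for the derivative of the logarithm is legitimate.
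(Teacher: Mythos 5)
Your proposal is correct and follows essentially the same route as the paper's proof: restrict to $\supp(\rho)$ using $\Delta\ll\rho$, use the integral representation of the matrix logarithm to differentiate, and write the second derivative as $-\int_0^\infty \|(\rho+t\,\id)^{-1/2}\Delta(\rho+t\,\id)^{-1/2}\|_2^2\,\diff t$, which is strictly negative since $\Delta\neq 0$. Your explicit derivation of strict concavity of $H$ from the strict negativity of the second derivative along segments is a small addition the paper leaves implicit, and it is fine.
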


Note that strict negativity of the second derivative is a stronger property than strict concavity, which it implies.\footnote{This is revealed, for example, by the behavior of the function $t \mapsto -t^4$ at $t = 0$.}

We are grateful to David Reeb for allowing us to present a proof based on his ideas here~\cite{reebpc13}.

\begin{proof}
  We define $\rho_{\lambda} := \rho + \lambda \Delta$.
  First, we note that $\frac{\diff}{\diff \lambda} \rho_{\lambda}^{-1} = - \rho_{\lambda}^{-1} \big( \frac{\diff}{\diff \lambda} {\rho_{\lambda}} \big) \rho_{\lambda}^{-1} = - \rho_{\lambda}^{-1} \Delta \rho_{\lambda}^{-1}$ by applying the product rule to $\frac{\diff}{\diff \lambda} \big( \rho_{\lambda}^{-1} \rho_{\lambda} \big)$. 
  Since $\Delta \ll \rho$, we can restrict to the subspace $\{\rho > 0\}$ without loss of generality.
  There, for $\lambda$ small enough such that $\rho_{\lambda} > 0$, we use the integral representation
  \begin{align}
   \log \rho_{\lambda} = \int_0^\infty \diff s\ (1-s)^{-1} \id - (\rho_{\lambda} + s\,\id)^{-1}
   \end{align}
which directly follows from its scalar analogue. As such, it is easy to compute
\begin{align}
  \frac{\diff}{\diff \lambda} \log \rho_{\lambda} = \int_0^\infty \diff s - \frac{\diff}{\diff \lambda} (\rho_{\lambda} + s\,\id)^{-1} = \int_0^\infty \diff s\, (\rho_{\lambda} + s\,\id)^{-1} \Delta  (\rho_{\lambda} + s\,\id)^{-1} .
\end{align}
  Recalling that $\frac{\diff}{\diff \lambda} \tr\big(f(\rho_{\lambda})\big) = \tr\big(f'(\lambda) \frac{\diff}{\diff \lambda} \rho_{\lambda} \big)$ for $f: t \mapsto - t \log t$, we find that
  \begin{align}
    \frac{\diff}{\diff \lambda} H(\rho_{\lambda}) &= - \tr(\Delta \log \rho_{\lambda}) \qquad 
    \qquad \textrm{and}\\
    \frac{\diff^2}{\diff \lambda^2} H(\rho_{\lambda}) &= - \int_0^\infty \diff s \tr \Big( \Delta (\rho_{\lambda} + s\,\id)^{-1} \Delta (\rho_{\lambda} + s\,\id)^{-1} \Big) \\
    &= - \int_0^\infty \diff s \left\| (\rho_{\lambda} + s\,\id)^{-\frac12} \Delta (\rho_{\lambda} + s\,\id)^{-\frac12} \right\|_2^2.
  \end{align}
  For all $s > 0$ we thus find that the integrand is positive whenever $(\rho_{\lambda} + s\,\id)^{-\frac12} \Delta (\rho_{\lambda} + s\,\id)^{-\frac12} \neq 0$, which is evident since $\Delta \neq 0$ and $\rho_{\lambda} + s\,\id$ has full support. Hence, the desired inequality holds.
\end{proof}

\paragraph*{Acknowledgements:}
We thank Andreas Winter and Mark Wilde for discussions and comments on a previous version of this manuscript. MT also thanks David Reeb, Mil\'an Mosonyi and especially Corsin Pfister for many insightful discussions, and the Isaac Newton Institute (Cambridge) for its hospitality while part of this work was completed. MT acknowledges funding by the Ministry of Education (MOE) and National Research Foundation Singapore, as well as MOE Tier 3 Grant ``Random numbers from quantum processes'' (MOE2012-T3-1-009). VT gratefully acknowledges financial support from the National University of Singapore (NUS) under startup grants R-263-000-A98-750/133 and the NUS Young Investigator Award R-263-000-B37-133.

\appendix

\newcommand{\nn}{\nonumber}

\section{Proof of Lemma~\ref{lm:set-limit}}\label{app:set-limit}

\begin{proof}
The inclusion $\supseteq$ is obvious because by the monotonicity of the convex hull operator and the fact that $\Theta_n\supseteq \Theta_\infty$ for any $n\in\mathbb{N}$, we have 
$\conv(\Theta_n) \supseteq \conv(\Theta_\infty )$.

It remains to prove the inclusion  $\subseteq$. Let 
\begin{align}
\rho \in \bigcap_{n\in\mathbb{N}}\conv( \Theta_n) .
\end{align}
 This means that for every $n\in\mathbb{N}$, $\rho$ can be written as 
$\rho = \sum_{j=1}^{l} \alpha_{jn} \rho_{jn}$ where $\rho_{jn} \in \Theta_n$ for each $j = 1,\ldots, \ell$ and $(\alpha_{1n},\ldots,\alpha_{\ell n} )$ is a probability distribution. Note that $\ell$ is finite and does not depend on $n$ due to Caratheodory's theorem since $\Theta_n$ for each $n$ are subsets of the same finite-dimensional vector space.

Consider the sequence $\{\rho_{1n}\}_{n\in\mathbb{N}} \subset \Theta_1$, i.e., $j=1$. Since $\Theta_1$ is compact, there must exists a convergent subsequence, say indexed by $n_{k}[1]$, i.e., the sequence $\{\rho_{1n_{k}{[1]} }\}_{k\in\mathbb{N}}$ is convergent and
 \begin{equation}
\lim_{k\to\infty} \rho_{1n_{k}{[1]}}=\rho_1
 \end{equation}
where $\rho_1 \in \Theta_\infty$ since $\Theta_n$ decrease to $\Theta_\infty$. Now, consider the sequence $\{ \rho_{2 n_k{[1]}}\}_{k\in\mathbb{N}}$.  By the same argument, we may extract a subsequence of $n_{k}{[1]}$ indexed by $n_k{[2]}$ for which 
\begin{equation}
\lim_{k\to\infty} \rho_{2n_{k}{[2]}}=\rho_2, \qquad \textrm{and} \quad \rho_2 \in \Theta_\infty  \nn .
\end{equation}
Continue extracting subsequences until we reach $\ell$.  Now consider the subsequence  indexed by
$m_k := n_k{[\ell]}$.
Clearly, $\rho$  can be written also as 
\begin{equation}
\rho = \sum_{j=1}^{\ell} \alpha_{j m_k} \rho_{j m_k}.  \label{eqn:rep_rho}
\end{equation}
By construction, each $\rho_{j m_k}$ converges to $\rho_j \in \Theta_\infty$ when we take $k\to\infty$. So by representation of $\rho$ in \eqref{eqn:rep_rho},  and the arbitrariness of $k$, we have that $\rho$ is a convex combination of elements from $\Theta_\infty$, i.e., $\rho\in\conv(\Theta_\infty)$ as desired.
\end{proof}

\section{Background and Proof of Proposition~\ref{pr:asymptotics}}
\label{app:hypo}

\subsection{Nussbaum-Sko\l{}a Distributions}

For the proof we leverage on a hierarchy of information measures in quantum information that was introduced in~\cite{tomamichel12}. 
To apply these results, let us first review the following concept. For any two quantum states $\rho, \sigma \in \cS$, we define their (classical) \emph{Nussbaum-Sko\l{}a distributions} $P^{\rho,\sigma},\, Q^{\rho,\sigma} \in \cP\big([d] \times [d]\big)$ via the relations~\cite{nussbaum09}
\begin{align}
  P^{\rho,\sigma}(a,b) = r_a \big| \langle \phi_a | \psi_b \rangle \big|^2 \quad \textrm{and} \quad
  Q^{\rho,\sigma}(a,b) = s_b \big| \langle \phi_a | \psi_b \rangle \big|^2\,,
\end{align}
where $\rho = \sum_a r_a \proj{\phi_a}$ and $\sigma = \sum_b s_b \proj{\psi_b}$. We summarize some properties of the Nussbaum-Sko\l{}a distributions that will turn out to be of great use in the sequel (these were already pointed out in~\cite{tomamichel12}). 
First, it is easy to verify  by substitution that
\begin{align}
 \label{eq:nussbaum-moments}
  D(\rho\|\sigma) = D(P^{\rho,\sigma} \| Q^{\rho,\sigma} ) \quad \textrm{and} \quad V(\rho\|\sigma) = V(P^{\rho,\sigma} \| Q^{\rho,\sigma} ) \,.
\end{align}
Second, for product states $\rho_1 \otimes \rho_2$ and $\sigma_1 \otimes \sigma_2$, we have
\begin{align}
  \label{eq:nussbaum-product}
  P^{\rho_1 \otimes \rho_2,\sigma_1 \otimes \sigma_2} = P^{\rho_1,\sigma_1} \otimes P^{\rho_2,\sigma_2}, \quad \textrm{and}
  \quad Q^{\rho_1 \otimes \rho_2,\sigma_1 \otimes \sigma_2} = Q^{\rho_1,\sigma_1} \otimes Q^{\rho_2,\sigma_2} \,.
\end{align}
Third, the condition $\sigma \gg \rho$ holds if and only if $Q^{\rho,\sigma} \gg P^{\rho,\sigma}$.
Now, let 
\begin{align}
\Xi(\sigma) := 2 \Big\lceil \log \frac{\lambda_{\max}(\sigma)}{\tilde{\lambda}_{\min}(\sigma)} \Big\rceil,
\end{align}
where $\lambda_{\max}(\sigma)$ and $\tilde{\lambda}_{\min}(\sigma)$ denote the largest and smallest nonzero eigenvalues of $\sigma$, respectively.
\begin{lemma} \textnormal{{\cite[Thm.~14]{tomamichel12}}} \label{lm:q-to-cl}
  Let $\rho, \sigma \in \cS$ and $\sigma \gg \rho$.  Then, for $0 < \delta < \min\{\eps,\frac{1-\eps}4\}$,
  \begin{align}
   D_h^{\eps}(\rho\|\sigma) &\leq D_s^{\eps+4\delta}(P^{\rho,\sigma} \| Q^{\rho,\sigma})  
   + \log \Xi(\sigma) + 4\log \frac{1}{\delta} + F_1(\eps,\delta)   
   \,, \quad \textrm{and} \\
   D_h^{\eps}(\rho\|\sigma) &\geq D_s^{\eps-\delta}(P^{\rho,\sigma} \| Q^{\rho,\sigma}) 
   - \log \Xi(\sigma) - \log \frac{1}{\delta} - F_2(\eps) \,,
   \end{align}
   where $F_1(\eps,\delta) := \log \frac{(1-\eps) (\eps+3\delta)}{1-(\eps+3\delta)}$ and $F_2(\eps) := \log \frac{1}{1-\eps}$.
\end{lemma}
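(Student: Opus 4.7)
The plan is to reduce the quantum hypothesis-testing divergence $D_h^{\eps}(\rho\|\sigma)$ to the classical information-spectrum divergence of the Nussbaum--Sko\l{}a distributions by way of a \emph{pinching} argument with respect to a coarse-grained spectral decomposition of $\sigma$. The crucial input is the pinching inequality $\rho \leq v(\tilde\sigma)\,\mathcal{E}_{\tilde\sigma}(\rho)$, where $\mathcal{E}_{\tilde\sigma}$ denotes the completely positive map that projects onto the joint eigenspaces of $\tilde\sigma$ and $v(\tilde\sigma)$ counts its number of distinct eigenvalues. Since pinching with respect to $\sigma$ itself could cost a factor as large as $d$, I would instead work with a coarse-grained $\tilde\sigma$ obtained by binning the eigenvalues of $\sigma$ geometrically (say, into bins of width $\sqrt{2}$) so that $\tilde\sigma \approx \sigma$ up to constant factors in the operator order while having only $v(\tilde\sigma) \leq \Xi(\sigma)$ distinct eigenvalues. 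This is precisely where the $\log \Xi(\sigma)$ terms come from.

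For the upper bound, I would begin with an optimal test $Q$ for $D_h^{\eps}(\rho\|\sigma)$ and convert the hypothesis-testing bound into an information-spectrum-type bound using a smoothing argument: replace $Q$ with the spectral projector $\Pi_R := \{\mathcal{E}_{\tilde\sigma}(\rho) > 2^R \tilde\sigma\}$ applied to $\mathcal{E}_{\tilde\sigma}(\rho)$ for an appropriate threshold $R$, and relate $\tr(Q\sigma)$ to $\tr(\Pi_R \tilde\sigma)$ through Markov's inequality. The pinching inequality then gives $\tr(\Pi_R \rho) \leq \Xi(\sigma)\, \tr(\Pi_R \mathcal{E}_{\tilde\sigma}(\rho))$. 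After this step, $\mathcal{E}_{\tilde\sigma}(\rho)$ and $\tilde\sigma$ commute, so the quantum trace reduces to an expectation that can be re-expressed, by the key identity $\tr(f(\rho)g(\sigma)) = \mathbb{E}_{(A,B)\sim P^{\rho,\sigma}}[f(r_A) g(s_B)]$, as a probability of the form $\Pr_{P^{\rho,\sigma}}[\log(r_A/s_B) \leq R']$. Choosing $R = D_s^{\eps+4\delta}(P^{\rho,\sigma}\|Q^{\rho,\sigma}) + O(\log(1/\delta))$ and bookkeeping the smoothing losses yields the claimed upper bound, with $F_1(\eps,\delta)$ absorbing the arithmetic of converting between error probabilities near $\eps$ and $\eps + 4\delta$.

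For the lower bound, I would proceed in the reverse direction: starting from the classical level set defined by $R = D_s^{\eps-\delta}(P^{\rho,\sigma}\|Q^{\rho,\sigma})$, construct the quantum projector $T := \{\mathcal{E}_{\tilde\sigma}(\rho) > 2^{R'} \tilde\sigma\}$ (with $R'$ chosen slightly smaller than $R$) and exhibit it as a feasible test for $D_h^{\eps}(\rho\|\sigma)$. The type-I error $\tr((\id - T)\rho)$ is controlled by applying the pinching inequality in the opposite direction (since $T$ commutes with $\tilde\sigma$, one has $\tr(T\mathcal{E}_{\tilde\sigma}(\rho)) = \tr(T\rho)$), together with the information-spectrum definition. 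The type-II error $\tr(T\sigma)$ is controlled directly from the definition of the level set together with the operator bound $\tilde\sigma \lesssim \sigma$, contributing a further $\log \Xi(\sigma)$ and a single $\log(1/\delta)$ from smoothing.

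The main obstacle will be the careful accounting of the smoothing parameter $\delta$ and the constant losses from the coarse-graining: one must simultaneously keep track of (i) how $\mathcal{E}_{\tilde\sigma}(\rho)$ compares to $\rho$ in the operator order (which is what costs $\Xi(\sigma)$), (ii) how $\tilde\sigma$ compares to $\sigma$ in the operator order (which contributes bounded multiplicative constants absorbed into the $F_i$), and (iii) how hypothesis-testing errors near $\eps$ transfer to tail probabilities near $\eps \pm \delta$ of the classical log-likelihood. Once the pinching and binning are in place, the translation to Nussbaum--Sko\l{}a distributions via the joint spectral identity is essentially bookkeeping; the non-trivial content is in selecting the coarse-graining so that $v(\tilde\sigma) \leq \Xi(\sigma)$ while preserving the operator relations needed for both inequalities.
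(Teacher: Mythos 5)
This lemma is not proved in the paper at all: it is quoted verbatim from \cite[Thm.~14]{tomamichel12}, so there is no in-paper argument to compare against, and your sketch has to stand on its own as a reconstruction of that reference's proof. Your overall architecture is the right one: replacing $\sigma$ by a spectrally binned $\tilde\sigma$ with at most $\Xi(\sigma)$ distinct eigenvalues, invoking the pinching inequality $\rho \leq \Xi(\sigma)\,\cE_{\tilde\sigma}(\rho)$, and using commuting tests of the form $\{\cE_{\tilde\sigma}(\rho) > 2^{R}\tilde\sigma\}$ is indeed where the $\log\Xi(\sigma)$ terms come from, and your treatment of the type-II error and of the $\delta$-smoothing losses is unobjectionable.

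The genuine gap is the step you dismiss as ``essentially bookkeeping'': passing from traces involving the commuting pair $(\cE_{\tilde\sigma}(\rho), \tilde\sigma)$ to tail probabilities of the Nussbaum--Sko\l{}a pair $(P^{\rho,\sigma}, Q^{\rho,\sigma})$ of the \emph{original} states. The identity $\tr\big(f(\rho)g(\sigma)\big) = \sum_{a,b} f(r_a)\, g(s_b)\, |\langle\phi_a|\psi_b\rangle|^2$ does not apply to the operators your argument actually produces: $\{\cE_{\tilde\sigma}(\rho) > 2^{R}\tilde\sigma\}$ is a function of $\cE_{\tilde\sigma}(\rho)$ and $\tilde\sigma$, whose eigenvalues are in general not the $r_a$ and whose eigenbasis is neither $\{\phi_a\}$ nor $\{\psi_b\}$. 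What your computation yields in both directions is therefore the information-spectrum tail of the \emph{pinched} pair, i.e.\ a quantity of the form $D_s^{\eps'}$ evaluated for the classical distributions induced by $(\cE_{\tilde\sigma}(\rho),\tilde\sigma)$, not $D_s^{\eps'}(P^{\rho,\sigma}\|Q^{\rho,\sigma})$. Converting between these two tails, uniformly in the threshold and with only additive $\log\Xi(\sigma) + O(\log\frac{1}{\delta})$ losses, is precisely the nontrivial content of \cite[Thm.~14]{tomamichel12}; it matters because only the Nussbaum--Sko\l{}a pair satisfies $D(P^{\rho,\sigma}\|Q^{\rho,\sigma}) = D(\rho\|\sigma)$, $V(P^{\rho,\sigma}\|Q^{\rho,\sigma}) = V(\rho\|\sigma)$ and tensorizes, which is what makes the reduction usable downstream in Lemma~\ref{lm:asymptotics-2}. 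The pinching inequality alone gives an operator dominance that shifts thresholds by $\log\Xi(\sigma)$ in one direction, but it does not connect the joint spectral data of $(\cE_{\tilde\sigma}(\rho),\tilde\sigma)$ to the overlaps $|\langle\phi_a|\psi_b\rangle|^2$ defining $P^{\rho,\sigma}$ and $Q^{\rho,\sigma}$; a first-order argument could live with the pinched pair (as in pinching proofs of Stein's lemma), but the second-order statement cannot, so without an argument of comparable substance to the lemmas in \cite{tomamichel12} both of your directions stop one step short of the stated inequalities.
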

Here, the (classical) \emph{information spectrum divergence}~(in the spirit of Verd\'u and Han~\cite{verdu93, han02}) for two probability distributions $P, Q \in \cP(\mathcal{X})$ (where $\mathcal{X}$ is a discrete set) is given by
\begin{align}
  D_s^{\eps}(P \| Q) := \sup \bigg\{ R \in \mathbb{R} \,\bigg|\, \Pr_{X \leftarrow P} \bigg[ \log \frac{P(X)}{Q(X)} \leq R \bigg] \leq \eps \bigg\}
  \label{eq:inf-spec} .
\end{align}

In the following, we also need the {\em third absolute moment of the log-likelihood ratio} between $P$ and $Q$, given as\footnote{It is not evident how a non-commutative version of this quantity should be defined directly; however, the commutative case is sufficient for our work.}
\begin{align}
  T(P\|Q) &:= \sum_{x\in\cX} P(x) \bigg| \log \frac{P(x)}{Q(x)} - D(P\|Q) \bigg|^3 \, \quad \textrm{and} \quad T(\rho\|\sigma) := T\big(P^{\rho,\sigma}\big\|Q^{\rho,\sigma}\big) \,. 
\end{align}

\subsection{Non-Asymptotic Bounds on the $\eps$-Hypothesis Testing Divergence}

It is immediate that the probability appearing in the definition of the information spectrum divergence evaluated for product distributions is subject to the central limit theorem if the variance of $\log \frac{P}{Q}$ is bounded away from zero.

\begin{lemma}\label{lm:asymptotics-2}
  Let $ n \geq 1$, $\{ \rho_i \}_{i=1}^n$, for $\rho_i \in \cS$ a set of states and let $\sigma \in \cS$ such that $\sigma \gg \rho_i$ for all $i \in [n]$. Moreover, let $\eps \in (0,1)$ and $\delta < \min\{\eps, \frac{1-\eps}{4}\}$. Define
  \begin{align}
    D_n := \frac{1}{n} \sum_{i=1}^n D(\rho_i \| \sigma), \quad
    V_n := \frac{1}{n} \sum_{i=1}^n V(\rho_i \| \sigma), \quad
    T_n := \frac{1}{n} \sum_{i=1}^n T(\rho_i \| \sigma).
  \end{align}
  Then, the following  Chebyshev-type inequalities hold:
  \begin{align}
    D_h^{\eps}\Big(\bigotimes_{i=1}^n \rho_i \Big\| \sigma^{\otimes n} \Big) &\leq n D_n + \sqrt{\frac{n V_n}{1-\eps-4\delta}} + \log \big(n \Xi(\sigma)\big) + 4 \log \frac{1}{\delta} + F_{1}(\eps,\delta) \,,\nonumber\\
    D_h^{\eps}\Big(\bigotimes_{i=1}^n \rho_i \Big\| \sigma^{\otimes n} \Big) &\geq n D_n - \sqrt{\frac{n V_n}{\eps - \delta}} - \log \big(n \Xi(\sigma)\big) - \log \frac{1}{\delta} - F_{2}(\eps) \,. \label{eq:cheby-2}
  \end{align}
  Moreover, if $V_n > 0$, then the following Berry-Esseen type bounds holds:
  \begin{align}
    D_h^{\eps}\Big(\bigotimes_{i=1}^n \rho_i \Big\| \sigma^{\otimes n} \Big) &\leq n D_n + \sqrt{n V_n} \Phi^{-1}\bigg( \eps + 4\delta + \frac{6\, T_n}{\sqrt{nV_n^3}} \bigg) + \log \big(n \Xi(\sigma)\big)+ 4 \log \frac{1}{\delta} + F_{1}(\eps,\delta) \,,\nonumber\\
    D_h^{\eps}\Big(\bigotimes_{i=1}^n \rho_i \Big\| \sigma^{\otimes n} \Big) &\geq n D_n + \sqrt{n V_n} \Phi^{-1}\bigg( \eps - \delta - \frac{6\, T_{n}}{\sqrt{nV_n^3} } \bigg) - \log \big(n \Xi(\sigma)\big)- \log \frac{1}{\delta} - F_{2}(\eps)  \,,  \label{eq:berry-2}
  \end{align}
  where   $F_1, F_2$ are given in Lemma~\ref{lm:q-to-cl}.
\end{lemma}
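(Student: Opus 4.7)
The plan is to transfer the bound on the quantum $\eps$-hypothesis testing divergence to a bound on the classical information spectrum divergence via the Nussbaum-Sko\l{}a distributions, and then apply classical concentration inequalities (Chebyshev and Berry-Esseen) to the resulting i.n.i.d.\ sum.

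First, I would apply Lemma~\ref{lm:q-to-cl} to the product state $\rho^n := \bigotimes_{i=1}^n \rho_i$ against the reference $\sigma^{\otimes n}$. Writing $P^n := P^{\rho^n, \sigma^{\otimes n}}$ and $Q^n := Q^{\rho^n, \sigma^{\otimes n}}$, this yields
\begin{align}
 D_h^{\eps}(\rho^n \| \sigma^{\otimes n}) &\leq D_s^{\eps+4\delta}(P^n \| Q^n) + \log \Xi(\sigma^{\otimes n}) + 4\log \tfrac{1}{\delta} + F_1(\eps,\delta), \\
 D_h^{\eps}(\rho^n \| \sigma^{\otimes n}) &\geq D_s^{\eps-\delta}(P^n \| Q^n) - \log \Xi(\sigma^{\otimes n}) - \log \tfrac{1}{\delta} - F_2(\eps).
\end{align}
The $\Xi$ terms are handled by noting that $\lambda_{\max}(\sigma^{\otimes n}) = \lambda_{\max}(\sigma)^n$ and $\tilde{\lambda}_{\min}(\sigma^{\otimes n}) = \tilde{\lambda}_{\min}(\sigma)^n$, so $\Xi(\sigma^{\otimes n}) \leq n\,\Xi(\sigma)$ and therefore $\log \Xi(\sigma^{\otimes n}) \leq \log(n\,\Xi(\sigma))$.

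Second, by iterating the product formula~\eqref{eq:nussbaum-product}, we have $P^n = \bigotimes_i P^{\rho_i,\sigma}$ and $Q^n = \bigotimes_i Q^{\rho_i,\sigma}$. Hence, for $X \leftarrow P^n$, the log-likelihood ratio splits as $L_n := \log \frac{P^n(X)}{Q^n(X)} = \sum_{i=1}^n L_i$, where the $L_i$ are independent. By~\eqref{eq:nussbaum-moments}, each $L_i$ has mean $D(\rho_i\|\sigma)$, variance $V(\rho_i\|\sigma)$, and third absolute central moment $T(\rho_i\|\sigma)$. Summing gives $\Exp[L_n] = nD_n$, $\Var(L_n) = nV_n$, and $\sum_i \Exp |L_i - \Exp L_i|^3 = nT_n$.

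Third, recall that $D_s^{\eps}(P^n\|Q^n) = \sup\{R : \Pr[L_n \leq R] \leq \eps\}$. For the Chebyshev bounds~\eqref{eq:cheby-2}, applying Chebyshev's inequality gives $\Pr[|L_n - nD_n| \geq t] \leq nV_n/t^2$. Choosing $t = \sqrt{nV_n/(1-\eps-4\delta)}$ in the upper tail and $t = \sqrt{nV_n/(\eps-\delta)}$ in the lower tail produces the required one-sided controls on $\Pr[L_n \leq R]$, which substituted into the definition of $D_s^{\eps \pm \delta}$ yield precisely~\eqref{eq:cheby-2}. For the Berry-Esseen bounds~\eqref{eq:berry-2}, I would invoke the Berry-Esseen theorem for independent, non-identically distributed summands, which states
\begin{align}
 \sup_{x \in \mathbb{R}} \left| \Pr\!\left[ \frac{L_n - nD_n}{\sqrt{nV_n}} \leq x \right] - \Phi(x) \right| \leq \frac{6\, T_n}{\sqrt{n V_n^3}},
\end{align}
valid whenever $V_n > 0$. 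Setting the left-hand probability equal to $\eps \pm \delta$ (with suitable adjustments absorbing the Berry-Esseen error into the argument of $\Phi^{-1}$) and inverting via $\Phi^{-1}$ produces the asserted expressions.

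The main obstacle will be bookkeeping: tracking the $\delta$-shifts through Lemma~\ref{lm:q-to-cl}, ensuring the Berry-Esseen error is absorbed into $\Phi^{-1}(\eps \pm 4\delta \pm 6T_n/\sqrt{nV_n^3})$ correctly and in the right direction (monotonicity of $\Phi^{-1}$), and keeping the argument uniform enough in $\sigma$ and the sequence $\{\rho_i\}$ for the outer application in Proposition~\ref{pr:asymptotics}. The non-i.i.d.\ Berry-Esseen step itself is classical (Esseen's original constant 6 suffices; sharper constants exist but are unnecessary here), but it is essential that the constant and the moment quantities depend only on averaged quantities $V_n, T_n$ so that the subsequent uniformity claims in Proposition~\ref{pr:asymptotics} can be extracted from lower bounds on $\lambda_{\min}(\sigma)$ and a uniform upper bound on $T_n$.
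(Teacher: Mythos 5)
Your proposal follows essentially the same route as the paper's proof: apply Lemma~\ref{lm:q-to-cl} with $\Xi(\sigma^{\otimes n})\leq n\,\Xi(\sigma)$, use the Nussbaum--Sko\l{}a product structure so that the log-likelihood ratio becomes a sum of independent variables whose averaged mean, variance and third absolute moment are $D_n$, $V_n$ and $T_n$, and then invoke Chebyshev and the non-i.i.d.\ Berry--Esseen bound (constant $6$, as in Feller) on the information spectrum divergence. This is exactly the argument the paper sketches, so the proposal is correct.
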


\begin{proof}[Proof (Sketch)]
  We first apply Lemma~\ref{lm:q-to-cl} to replace $D_h^{\eps}$ with $D_s^{\eps+4\delta}$ (for the upper bounds) and $D_s^{\eps-\delta}$ (for the lower bound). For this purpose, we note that $\Xi(\sigma^{\otimes n}) \leq n \Xi(\sigma)$. For the upper bound, this yields
  \begin{align}
     D_h^{\eps}\Big(\bigotimes_{i=1}^n \rho_i \Big\| \sigma^{\otimes n} \Big) \leq
     D_s^{\eps+4\delta} \Big( \bigotimes_{i=1}^n P^{\rho_i,\sigma} \Big\| \bigotimes_{i=1}^n Q^{\rho_i,\sigma} \Big)  
   + \log \big(n \Xi(\sigma)\big)+ 4\log \frac{1}{\delta} + F_1(\eps,\delta) \label{eq:asym-proof1}
  \end{align}
  Note that the information spectrum divergence on the right-hand side of~\eqref{eq:asym-proof1} is evaluated for classical product distributions $\bigotimes_{i=1}^n P^{\rho_i,\sigma}$ and $\bigotimes_{i=1}^n Q^{\rho_i,\sigma}$. Consider the independent random variables
  \begin{align}
  Z_i := \log \frac{P^{\rho_i,\sigma}(A_i, B_i)}{Q^{\rho_i,\sigma}(A_i, B_i)}, \qquad (A_i, B_i) \leftarrow P^{\rho_i,\sigma}
  \end{align}
  for each $i \in [n]$. Then, the definition of the information spectrum divergence in~\eqref{eq:inf-spec} yields
  \begin{align}
    D_s^{\eps+4\delta} \Big( \bigotimes_{i=1}^n P^{\rho_i,\sigma} \Big\| \bigotimes_{i=1}^n Q^{\rho_i,\sigma} \Big) = \sup \bigg\{ R \in \mathbb{R} \,\bigg|\, \Pr \bigg[ \sum_{i=1}^n Z_i \leq R \bigg] \leq \eps + 4\delta \bigg\}  \,. \label{eq:asym-proof2}
  \end{align}
  Further, observe that the average mean and variance of $Z_i$ are respectively given by
  \begin{align}
    \frac{1}{n} \sum_{i=1}^n \Exp [Z_i] = \frac{1}{n} \sum_{i=1}^n D(P^{\rho_i,\sigma} \| Q^{\rho_i,\sigma}) &= \frac{1}{n} \sum_{i=1}^n D(\rho_i\|\sigma) = D_n \,, \qquad \textrm{and} \\
    \frac{1}{n} \sum_{i=1}^n \Var [Z_i] = \frac{1}{n} \sum_{i=1}^n V(P^{\rho_i,\sigma} \| Q^{\rho_i,\sigma}) &= \frac{1}{n} \sum_{i=1}^n V(\rho_i\|\sigma) = V_n \, .
  \end{align}
  Thus, we apply standard Chebyshev or Berry-Esseen~\cite[Sec.\ XVI.5]{feller71} bounds on the probability in~\eqref{eq:asym-proof2}. (See, e.g.~\cite[Lem.~5]{tomamicheltan12}, for details.)
  The proof of the lower bounds proceeds analogously.
\end{proof}

\subsection{Uniform Upper Bounds}

The following two lemmas give uniform upper bounds on $V(\rho\|\sigma)$ and $T(\rho\|\sigma)$.

\begin{lemma}
  \label{lm:V-uni}
  Let $\cSo \subset \cS$ and $\lambda_0 > 0$. Then, there exists a constant $V^+(\cSo,\lambda_0)$ such that $V(\rho\|\sigma) \leq V^+$ for all $\rho \in \cSo$ and $\sigma \in \cS$ such that $\lambda_{\min}(\sigma) \geq \lambda_0$.
\end{lemma}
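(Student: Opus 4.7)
The plan is to bound $V(\rho\|\sigma)$ by a constant that depends only on the dimension $d$ and on $\lambda_0$ (so the dependence on $\cSo$ will actually be vacuous) via a few elementary operator manipulations.

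First, set $A := \log\rho - \log\sigma$ and $\mu := D(\rho\|\sigma) = \tr(\rho A)$. Expanding the square in the definition of $V$ immediately gives
\begin{align}
V(\rho\|\sigma) \;=\; \tr(\rho A^2) \;-\; 2\mu\,\tr(\rho A) \;+\; \mu^2 \;=\; \tr(\rho A^2) - \mu^2 \;\leq\; \tr(\rho A^2),
\end{align}
so it suffices to uniformly bound the ``raw second moment'' $\tr(\rho(\log\rho - \log\sigma)^2)$.

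Second, I would invoke the operator identity $(A-B)^2 + (A+B)^2 = 2A^2 + 2B^2$ which, for Hermitian $A, B$, yields the operator inequality $(A-B)^2 \leq 2A^2 + 2B^2$ (since $(A+B)^2 \geq 0$). Applying this with $A = \log\rho$ and $B = \log\sigma$ gives
\begin{align}
\tr\big(\rho(\log\rho - \log\sigma)^2\big) \;\leq\; 2\tr\big(\rho(\log\rho)^2\big) + 2\tr\big(\rho(\log\sigma)^2\big).
\end{align}

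Third, I would bound each of these two terms separately. Using a spectral decomposition $\rho = \sum_a r_a |\phi_a\rangle\langle\phi_a|$ and the convention $0(\log 0)^2 = 0$,
\begin{align}
\tr\big(\rho(\log\rho)^2\big) = \sum_a r_a (\log r_a)^2 \;\leq\; d \cdot \max_{t \in [0,1]} t (\log t)^2 \;=\; \frac{4d}{e^2}.
\end{align}
For the other term, the hypothesis $\lambda_{\min}(\sigma) \geq \lambda_0$ forces every eigenvalue of $\log\sigma$ to lie in $[\log\lambda_0,\,0]$, so $\|\log\sigma\|_\infty \leq \log(1/\lambda_0)$ and consequently
\begin{align}
\tr\big(\rho(\log\sigma)^2\big) \;\leq\; \|\log\sigma\|_\infty^2\,\tr(\rho) \;\leq\; \big(\log(1/\lambda_0)\big)^2.
\end{align}
Combining these establishes the lemma with $V^+ := 8d/e^2 + 2(\log(1/\lambda_0))^2$, uniformly in $\rho \in \cSo$ and in $\sigma$ with $\lambda_{\min}(\sigma) \geq \lambda_0$.

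There is no substantial obstacle in this argument; the only step requiring any care is verifying the operator inequality $(A-B)^2 \leq 2A^2 + 2B^2$ for Hermitian (but non-commuting) operators, which is immediate from $(A+B)^2 \geq 0$. Everything else reduces to the scalar fact that $t(\log t)^2$ is bounded on $[0,1]$ together with the trivial spectral bound on $\log\sigma$ coming from $\lambda_{\min}(\sigma) \geq \lambda_0$.
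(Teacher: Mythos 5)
Your proof is correct, but it takes a genuinely different route from the paper. The paper's proof is a two-line compactness argument: $(\rho,\sigma)\mapsto V(\rho\|\sigma)$ is continuous on the compact set $\ccSo\times\{\sigma\in\cS \,|\, \lambda_{\min}(\sigma)\geq\lambda_0\}$ (since $\sigma\gg\rho$ there), so one may take $V^+$ to be the attained maximum. Your argument instead produces an explicit constant, $V^+ = O(d) + 2(\log(1/\lambda_0))^2$, via $V(\rho\|\sigma)\leq\tr(\rho(\log\rho-\log\sigma)^2)\leq 2\tr(\rho(\log\rho)^2)+2\tr(\rho(\log\sigma)^2)$ and the scalar bounds $\max_{t\in[0,1]}t(\log t)^2<\infty$, $\|\log\sigma\|_\infty\leq\log(1/\lambda_0)$. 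What each buys: the paper's proof is shorter and needs no computation, but is non-quantitative and formally requires passing to the closure $\ccSo$; yours is fully explicit, uniform over \emph{all} $\rho\in\cS$ (the dependence on $\cSo$ disappears), and makes visible how $V^+$ scales with $d$ and $\lambda_0$, which could be useful if one cared about the constants $K_1, N_1$ in Proposition~\ref{pr:asymptotics}. One small point deserves a sentence in your write-up: when $\rho$ is rank-deficient, $\log\rho$ is not a bona fide Hermitian operator, so the step ``apply $(A-B)^2\leq 2A^2+2B^2$ with $A=\log\rho$'' should be phrased with $A$ replaced by the Hermitian operator that equals $\log\rho$ on $\supp(\rho)$ and $0$ on its kernel (this leaves every trace against $\rho$ unchanged under the convention $0\log^k 0\equiv 0$), or justified by approximating $\rho$ with full-rank states and using continuity. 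With that caveat, and up to the innocuous base-of-logarithm factor in the constant $4d/e^2$, the argument is sound.
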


\begin{proof}
  First, note that $(\rho, \sigma) \mapsto V(\rho\|\sigma)$ is continuous on the compact set $\ccSo \times \{ \sigma \in \cS \,|\, \lambda_{\min}(\sigma) \geq \lambda_0 \}$ since $\sigma \gg \rho$ everywhere. Thus, we may simply choose
  \begin{align}
    V^+ &:= \max \big\{ V(\rho\|\sigma) \,\big|\, \rho \in \ccSo,\ \sigma \in \cS,\ \lambda_{\min}(\sigma) \geq \lambda_0 \big\} . \qedhere
   \end{align}
\end{proof}

\begin{lemma}
  \label{lm:T-uni}
  Let $\cSo \subset \cS$ and $\sigma \in \cS$ such that $\lambda_{\min}(\sigma) > 0$. Then, there exists a constant $T^+(\cSo,\sigma)$ such that $T(\rho\|\sigma) \leq T^+$ for all $\rho \in \cSo$.
\end{lemma}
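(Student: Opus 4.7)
The plan is to follow exactly the strategy used for Lemma~\ref{lm:V-uni}: establish that $\rho \mapsto T(\rho\|\sigma)$ is continuous on the compact set $\ccSo$ and then set $T^+ := \max_{\rho \in \ccSo} T(\rho\|\sigma)$, which is finite by the extreme value theorem. Since $\lambda_{\min}(\sigma) > 0$ implies $\sigma \gg \rho$ for every $\rho \in \cS$, the relative entropy $D(\rho\|\sigma)$ is finite and continuous on all of $\cS$, so the only non-trivial step is to establish continuity of $T(\rho\|\sigma)$ in $\rho$.

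The main obstacle — and what distinguishes this lemma from Lemma~\ref{lm:V-uni} — is that the third \emph{absolute} moment $T(\rho\|\sigma)$ has no direct trace-functional expression in $\rho$ and $\log\sigma$; it is defined through the Nussbaum-Sko\l{}a distributions, whose construction requires an eigendecomposition $\rho = \sum_a r_a |\phi_a\rangle\langle\phi_a|$ that need not vary continuously with $\rho$ (for instance when eigenvalues cross). To circumvent this I would fix the eigenbasis $\sigma = \sum_b s_b |\psi_b\rangle\langle \psi_b|$ once and for all, and perform the sum over eigenvalues of $\rho$ first:
\begin{align}
T(\rho\|\sigma) = \sum_{a,b} r_a\, |\langle \phi_a | \psi_b\rangle|^2\, \Big|\log \tfrac{r_a}{s_b} - D(\rho\|\sigma)\Big|^3 = \sum_{b} \langle \psi_b|\, h_b\big(\rho;\, D(\rho\|\sigma)\big)\, |\psi_b\rangle ,
\end{align}
where $h_b(x;D) := x \big|\log(x/s_b) - D\big|^3$ is extended continuously to $x=0$ by $h_b(0;D):=0$, which is justified by the convention $r\log^k r \to 0$ as $r \searrow 0$. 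In this form the $\rho$-dependence enters only through (i) functional calculus applied to $\rho$ with the scalar function $x\mapsto h_b(x;D)$ and (ii) the continuous map $\rho \mapsto D(\rho\|\sigma)$.

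Joint continuity of $h_b$ on $[0,1]\times\mathbb{R}$ (the only delicate point being $x=0$, where the prefactor $x$ suppresses the blow-up of the logarithm) combined with continuity of $\rho \mapsto D(\rho\|\sigma)$ then implies that $\rho \mapsto h_b(\rho; D(\rho\|\sigma))$ is continuous in operator norm, and hence so is each scalar $\rho \mapsto \langle\psi_b| h_b(\rho; D(\rho\|\sigma)) |\psi_b\rangle$. Summing over the $d$ indices $b$ yields continuity of $\rho \mapsto T(\rho\|\sigma)$ on $\ccSo$, and compactness produces the desired uniform bound $T^+(\cSo, \sigma)$. No further ingredients beyond those already used in Lemma~\ref{lm:V-uni} are required.
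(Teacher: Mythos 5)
Your proposal is correct and follows essentially the same route as the paper: establish continuity of $\rho \mapsto T(\rho\|\sigma)$ on the compact closure $\ccSo$ (where $\sigma \gg \rho$ holds automatically since $\lambda_{\min}(\sigma) > 0$) and set $T^+ := \max_{\rho \in \ccSo} T(\rho\|\sigma)$ by the extreme value theorem. The only difference is that the paper simply asserts the continuity of $\rho \mapsto T(P^{\rho,\sigma}\|Q^{\rho,\sigma})$, whereas you substantiate it via the functional-calculus rewriting $T(\rho\|\sigma) = \sum_b \langle\psi_b| h_b(\rho; D(\rho\|\sigma))|\psi_b\rangle$, which correctly removes the (genuinely discontinuous) dependence on the eigendecomposition of $\rho$ — added rigor on the step the paper leaves implicit, not a different argument.
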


\begin{proof}
  We have $\sigma \gg \rho$ and thus $Q^{\rho,\sigma} \gg P^{\rho,\sigma}$ for all $\rho \in \cSo$ since $\sigma$ is strictly positive. Hence, $\rho \mapsto T(\rho\|\sigma) = T(P^{\rho,\sigma}\|Q^{\rho,\sigma})$ is continuous and it suffices to define $T^+ := \max_{\rho \in \ccSo} T(\rho\|\sigma)$.
\end{proof}

For the following, let us define $V(P) := V\big(P\big|\rho^{(P)}\big)$ and
$T(P) := \sum_{\rho \in \cSo} P(\rho)\, T\big(\rho\big\|\rho^{(P)}\big)$ for all $P \in \cP(\cSo)$ in a discrete set $\cSo$. These quantitates have the following uniform upper bounds:

\begin{lemma} \label{lm:VT-uni2}
  Let $\cSo \subset \cS$ be discrete. Then, there exist constants $V^{*}(\cSo)$ and $T^{*}(\cSo)$ such that
   $V(P) \leq V^{*}$ and $T(P) \leq T^{*}$ for all $P \in \cP(\cSo)$.
\end{lemma}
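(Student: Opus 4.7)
The plan is to exploit that $\cP(\cSo)$ is compact (with $\cSo$ finite throughout the beyond-second-order argument) and to show that although $V(\rho\|\rho^{(P)})$ and $T(\rho\|\rho^{(P)})$ may diverge when $P(\rho)\searrow 0$ (because $\rho^{(P)}$ can lose support on $\rho$), the weighted terms $P(\rho)V(\rho\|\rho^{(P)})$ and $P(\rho)T(\rho\|\rho^{(P)})$ remain bounded uniformly in $P$ and $\rho$. Summing over the finitely many $\rho\in\cSo$ then produces the two constants $V^{*}$ and $T^{*}$.

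The main technical step is an operator estimate for $A:=\log\rho-\log\rho^{(P)}$ on $\supp(\rho)$. From $\rho^{(P)}\ge P(\rho)\rho$ and operator monotonicity of the logarithm, $A\le -\log P(\rho)\cdot\id$. From $\rho^{(P)}\le\id$ one gets $A\ge \log\rho\ge (\log r_{\min}^{*})\cdot\id$, where $r_{\min}^{*}:=\min_{\rho'\in\cSo}\tilde\lambda_{\min}(\rho')>0$; positivity here crucially uses that $\cSo$ is finite. Squaring yields $A^{2}\le C_{1}((\log P(\rho))^{2}+1)\id$ on $\supp(\rho)$, so $V(\rho\|\rho^{(P)})\le \tr(\rho A^{2})\le C_{1}((\log P(\rho))^{2}+1)$. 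Since $x(\log x)^{2}$ is bounded on $(0,1]$, we get $P(\rho)V(\rho\|\rho^{(P)})\le C_{2}$ uniformly, and summing over the finite set $\cSo$ bounds $V(P)$.

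For $T$ the argument is more intricate because $T(\rho\|\sigma)$ is defined through the classical Nussbaum-Sko\l{}a distributions, so I would bound the log-likelihood ratio $Z_{a,b}=\log(r_{a}/s_{b})$ directly. Writing $s_{b}=\langle\psi_{b}|\rho^{(P)}|\psi_{b}\rangle\ge P(\rho)\,r_{a}|\langle\phi_{a}|\psi_{b}\rangle|^{2}$, one has $|Z_{a,b}|\le \log(1/P(\rho))+\log(1/|\langle\phi_{a}|\psi_{b}\rangle|^{2})+|\log r_{\min}^{*}|$. Using $T\le 16\,\Exp|Z|^{3}$ (from the triangle inequality and Jensen applied to $|Z-\Exp Z|^{3}$) reduces matters to bounding $\Exp|Z|^{3}$; the only nontrivial contribution is $\sum_{b}|\langle\phi_{a}|\psi_{b}\rangle|^{2}(\log 1/|\langle\phi_{a}|\psi_{b}\rangle|^{2})^{3}$, which is at most $d$ times the maximum of $x(\log 1/x)^{3}$ on $[0,1]$. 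Consequently $T(\rho\|\rho^{(P)})\le C_{3}((\log P(\rho))^{3}+1)$, so $P(\rho)T(\rho\|\rho^{(P)})\le C_{4}$ uniformly and $T^{*}$ follows by summation. The main obstacle is precisely this cancellation between small basis overlaps $|\langle\phi_{a}|\psi_{b}\rangle|^{2}$ and the correspondingly small weights they receive in the Nussbaum-Sko\l{}a average; once that is handled, the rest is routine operator-monotonicity bookkeeping.
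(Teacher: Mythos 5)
Your overall plan is sound and genuinely different from the paper's proof, which is a two-line compactness argument: the paper notes that for every $P \in \cP(\cSo)$ and $\rho \in \cSo$ either $P(\rho)=0$ or $\rho^{(P)} \gg \rho$, asserts from this that $P \mapsto V(P)$ and $P \mapsto T(P)$ are continuous on the simplex $\cP(\cSo)$, and invokes compactness (both proofs implicitly read ``discrete'' as finite). Your explicit estimates, which show that the weighted terms $P(\rho)V(\rho\|\rho^{(P)})$ and $P(\rho)T(\rho\|\rho^{(P)})$ stay uniformly bounded through the boundedness of $x\log^{k}(1/x)$ on $(0,1]$, are precisely what underwrites the continuity claim at boundary points of the simplex, and your treatment of $T$ via the Nussbaum--Sko\l{}a distributions\,---\,the cancellation between small overlaps $|\langle\phi_a|\psi_b\rangle|^2$ and the weights they receive\,---\,is correct, as is the final summation over the finite set $\cSo$.

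However, the $V$ half as written contains a false step. With $L := \log\rho - \log\rho^{(P)}$ and $\Pi_\rho$ the projector onto $\supp(\rho)$, what your two-sided spectral bound controls is the compression $A = \Pi_\rho L \Pi_\rho$, but
\begin{align}
  \tr\big(\rho L^2\big) \;=\; \tr\big(\rho A^2\big) \;+\; \tr\big(\rho\, \log\rho^{(P)}\, \Pi_\rho^{\perp}\, \log\rho^{(P)}\big),
\end{align}
and the second (nonnegative) term does not vanish when $\supp(\rho) \subsetneq \supp(\rho^{(P)})$, which happens as soon as the states in $\cSo$ have different supports. Consequently $V(\rho\|\rho^{(P)}) \leq \tr(\rho A^2)$ fails in general: for pure $\rho$ the compression $A$ is the scalar $D(\rho\|\rho^{(P)})$, so your inequality would read $V(\rho\|\sigma) \leq D(\rho\|\sigma)^2$, which is false (take $\rho$ pure and $\sigma$ a slightly tilted nearly-pure state). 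The repair is immediate and already in your toolbox: run the $V$ bound exactly as you run the $T$ bound, through the Nussbaum--Sko\l{}a distributions. By the identity $V(\rho\|\sigma)=V(P^{\rho,\sigma}\|Q^{\rho,\sigma})$ (Eq.~\eqref{eq:nussbaum-moments}), $V(\rho\|\rho^{(P)}) = \Var[Z] \leq \Exp[Z^2]$ for $Z=\log(r_a/s_b)$, and your bound $|Z| \leq \log(1/P(\rho)) + \log\big(1/|\langle\phi_a|\psi_b\rangle|^2\big) + |\log r_{\min}^{*}|$ together with $\sum_b |\langle\phi_a|\psi_b\rangle|^2 \log^2\big(1/|\langle\phi_a|\psi_b\rangle|^2\big) \leq d \max_{x\in[0,1]} x\log^2(1/x)$ gives $V(\rho\|\rho^{(P)}) \leq C\big(\log^2 P(\rho) + 1\big)$, after which your argument goes through unchanged. (Also note that the operator-monotonicity step for the singular operator $P(\rho)\rho$ would need a resolvent-representation argument rather than a bare appeal to monotonicity, but this becomes moot once you switch to the classical route.)
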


\begin{proof}
  To convince ourselves that the functions $P \mapsto V(P)$ and $P \mapsto T(P)$ are continuous, we note that, for all $P \in \cP(\cSo)$ and all $\rho \in \cSo$ at least one of the following conditions holds 1) $P(\rho) = 0$ or 2) $\rho^{(P)} \gg \rho$. The lemma then follows from the fact that $\cP(\cSo)$ is compact.
\end{proof}

\subsection{Proof of Proposition~\ref{pr:asymptotics}}

\begin{proof}[Proof of Proposition~\ref{pr:asymptotics}]
  The first statement relies on the Chebyshev-type inequalities in~\eqref{eq:cheby-2} in Lemma~\ref{lm:asymptotics-2}, which for any $\delta = \frac{1}{\sqrt{n}}$ and for $n$ sufficiently large such that $\frac{2}{\sqrt{n}} < \min \{ \eps, \frac{1-\eps}{4} \}$ yield
  \begin{align}
    \bigg| D_h^{\eps_n}\Big(\bigotimes_{i=1}^n \rho_i \Big\| \sigma^{\otimes n} \Big) - n D_n \bigg| \leq \sqrt{\frac{n V_n}{\min\big\{ 1-\eps_n-\frac{4}{\sqrt{n}}, \eps_n - \frac1{\sqrt{n}}\big\}}} + 3 \log n + \log \Xi(\sigma) \\
    \qquad \qquad + \max \Big\{ F_1\Big(\eps_n,\frac{1}{\sqrt{n}}\Big), F_2(\eps_n) \Big\} .
  \end{align}
  Now, we note that $\Xi(\sigma) \leq 2 \log \frac{1}{\lambda_0} + 1 = O(1)$ and note that
  \begin{align}
    &\sqrt{\frac{n V_n}{\min\big\{ 1-\eps_n-\frac{4}{\sqrt{n}}, \eps_n - \frac1{\sqrt{n}}\big\}}} \leq \sqrt{\frac{n V_n}{\min\big\{ 1-\eps-\frac{5}{\sqrt{n}}, \eps - \frac2{\sqrt{n}}\big\}}} = \sqrt{\frac{n V_n}{\eps^*}} + O(1) \qquad \textrm{and} \\
     &\max \Big\{ F_1\Big(\eps_n,\frac{1}{\sqrt{n}}\Big), F_2(\eps_n) \Big\} \leq \max \Big\{ F_1\Big(\eps+\frac{1}{\sqrt{n}},\frac{1}{\sqrt{n}}\Big), F_2(\eps-\frac{1}{\sqrt{n}}) \Big\} = O(1) .
  \end{align}
  Thus, any choice of $L_1 > 3$ will yield the desired result.
  
   Finally, we have $V_n \leq V^+$ by Lemma~\ref{lm:V-uni} due to the assumption on $\lambda_{\min}(\sigma)$. We can thus pick the constant 
   \begin{align}
   K_1(\eps,\cSo,\lambda_0) > \sqrt{\frac{V^+}{\eps^*}}
   \end{align}
    uniformly in $\{ \rho_i \}_{i=1}^n$. Finally, for any such choices of $L_1$ and $K_1$, we find a number $N_1(\eps,\cSo,\lambda_0)$ such that the statement holds.
   
  The second statement is based on the Berry-Esseen-type inequalities in~\eqref{eq:berry-2} in Lemma~\ref{lm:asymptotics-2}. We prove the upper bound and note that the lower bound follows by an analogous argument. First, we use~\eqref{eq:berry-2} and set $\delta = \frac{1}{\sqrt{n}}$ to establish that
  \begin{align}
D_h^{\eps_n}\Big(\bigotimes_{i=1}^n \rho_i \Big\| \sigma^{\otimes n} \Big) &\leq n D_n + \sqrt{n V_n} \Phi^{-1}\bigg( \eps_n + \frac{4}{\sqrt{n}} + \frac{6\, T_n}{\sqrt{nV_n^3}} \bigg) + 3 \log n + \log \Xi(\sigma) + F_{1}\Big(\eps_n,\frac1{\sqrt{n}}\Big)  
\end{align}
  Now, note that $V_n \geq \xi$ by assumption of the theorem and $T_n \leq T^+(\cSo,\sigma)$ by Lemma~\ref{lm:T-uni}. Since $\eps \mapsto \Phi^{-1}(\eps)$ is monotonically increasing, we find
  \begin{align}
    \Phi^{-1}\bigg( \eps_n + \frac{4}{\sqrt{n}} + \frac{6\, T_n}{\sqrt{nV_n^3}} \bigg) \leq \Phi^{-1}\bigg(\eps + \frac{B}{\sqrt{n}} \bigg), \qquad \textrm{where} \quad B = 5 + 6 \frac{T^+(\cSo, \sigma)}{\xi^{\frac{3}{2}}} .
  \end{align}
  Moreover, since $V_n \leq V^+$ and $\eps \mapsto \Phi^{-1}(\eps)$ is continuously differentiable we find that
  \begin{align}
    \sqrt{n V_n} \Phi^{-1}\bigg( \eps + \frac{B}{\sqrt{n}} \bigg) \leq \sqrt{n V_n} \Phi^{-1}(\eps) + O(1).
  \end{align}
  by Taylor's theorem.
  Collecting the remaining terms as $3 \log n + O(1)$ and choosing $L_2 > 3$ reveals that there exists a constant $N_2(\eps,\cSo,\sigma,\xi)$ such that the statement holds.
  
  To confirm the final statement, we need to be a bit more careful because $\lambda_{\min}(P_{\rho^n})$ can be arbitrarily close to zero and thus Lemmas~\ref{lm:V-uni} and~\ref{lm:T-uni} do not apply. However, the proof goes through analogously if we instead of these lemmas employ Lemma~\ref{lm:VT-uni2}.
\end{proof}

\section{Proof of Lemma~\ref{lm:net}}\label{app:net}

The following construction is likely not optimal in the parameters $\gamma$ and $|\cG^{\gamma}|$, but it suffices for our purpose and allows us to use previously established results.
\begin{proof}
  First, we employ a construction in~\cite[Lem.~II.4]{hayden04b} to establish that, for every $0 < \gamma < 1$, there exists a set of pure states $\{ \psi_i \}_{i \in [K]} \subseteq \cS_{\circ}$ with cardinality $K \leq (5/\gamma)^{2d}$ such that the following holds: for every $\phi \in \cS_{\circ}$, we have $\min_{i \in [K]} \| \phi - \psi_i \|_1 \leq \gamma$.
  
  Second, consider the set $\cP_m^{>0}$ of $m$-types~\cite{csiszar98} with full support, defined as
  \begin{align}
    \cP_m^{>0} := \big\{ P \in \cP([d]) \,\big|\, m P(i) \in [m] \textrm{ for all } i \in [d]  \big\} .
  \end{align}
  Setting $m = \lceil 2 d \frac{1}{\gamma} \rceil$, we will now show that, for every $P \in \cP([d])$, we have $\min_{Q \in \cP_m^{\setminus 0}} \| P - Q \|_1 \leq \gamma$. To see this, we construct a $Q \in \cP_m^{> 0}$ for every $P$ as follows. Start by setting $Q(i) = \frac{1}{m}$ for all $i \in [d]$. (Note that $m > d$ so that the total weight is smaller than one.) Then, pick any index $i$ for which $Q(i) < P(i)$ and increase $Q(i)$ by $\frac{1}{m}$. Repeat this until $Q$ is normalized.
   We observe that $\| P - Q \|_1 = 2 \sum_{i: Q(i) > P(i)} Q(i) - P(i) \leq \frac{2d}{m} \leq \gamma$ since $Q(i) - P(i)$ never exceeds $\frac{1}{m}$ by construction.
    Note that this choice also ensures that $\min_i Q(i) \geq \frac{1}{m}$.
  Furthermore, the number of types is bounded as~\cite{csiszar98}, 
  \begin{align}
  |\cP_m^{> 0}| 
  \leq (m+1)^{d-1} \leq ( 2d/\gamma + 2)^{d-1} .
  \end{align}
  
  Now, we are ready to define an $\gamma$-net for mixed states as follows:
  \begin{align}
    \cG^{\gamma} := \Big\{ \tau \in \cS \,\Big|\, \tau = \sum_{i=1}^d Q(i) \psi_{\ell(i)}, \ Q \in \cP_m^{> 0}, \ \ell: [d] \to [K] \Big\} .
  \end{align}
  We have $| \cG^{\gamma}| = K^d \cdot |\cP_m^{> 0}| \leq (5/\gamma)^{2d^2} (2d/\gamma+2)^{d-1}$. Moreover, let $\rho \in \cS(B)$ be an arbitrary state with $\rho = \sum_i P(i)\, \phi_i$ its eigenvalue decomposition, where $\phi_i \in \cS_{\circ}(B)$ are (mutually orthogonal) pure states and $P \in \cP(B)$. Now, choose $Q \in \cP_m^{> 0}$ and $\ell: [d] \to [K]$ such that
  \begin{align}
    \| P - Q \|_1 \leq \gamma \quad \textrm{and} \quad \forall i \in [d] : \| \psi_{\ell(i)} - \phi_i \|_1 \leq \eps .
  \end{align}
  For $\tau = \sum_{i=1}^d Q(i) \psi_{\ell(i)} \in \cG^{\gamma}$, we then have
   \begin{align}
     \| \rho - \tau \|_1 &\leq \sum_{i=1}^d \Big\| P(i) \phi_i - Q(i) \psi_{\ell(i)} \Big\|_1 \leq \sum_{i=1}^d  P(i) \Big\| \phi_i - \psi_{\ell(i)} \Big\|_1 + \big| P(i)  - Q(i) \big| \leq 2\gamma ,
   \end{align}
   where we used the triangle inequality multiple times.
   
   To get the second statement, we employ a continuity result by Audenaert and Eisert~\cite[Thm.~2]{audenaert05}, which ensures that
     $D(\rho\|\tau) \leq 4 \, \kappa^2/\beta$,   
     where $\beta$ is the minimum eigenvalue of $\tau$, and $\kappa := \frac12 \| \rho - \tau \|_1$. By our construction of $\tau$\,---\,in particular, recall the construction of $Q \in \cP_m^{> 0}$\,---\,we enforce that $\beta \geq \frac{1}{m}$. Hence, the above can be further bounded as
     \begin{align}
       D(\rho\|\tau) \leq 4 \kappa\cdot \frac{\kappa}{\beta} \leq 4\gamma (2d + 1) ,
     \end{align}
   where we used that $\kappa \leq \gamma$ and $\kappa/\beta \leq \gamma m = \gamma \lceil 2 d \frac{1}{\gamma} \rceil \leq 2d + 1$. 
   
   Finally, we note that every $\tau\in\cG^\gamma$ has minimum eigenvalue bounded from below by $\frac{1}{m}\ge 1/(2d/\gamma + 1)=\gamma/(2d+\gamma)$.  
\end{proof}

\section{Auxiliary Lemmas for Sections~\ref{sec:direct} and~\ref{sec:converse}}
\label{app:selection}
\label{app:v-limit}
\label{app:direct}

\subsection{Proof of Lemma~\ref{lm:v-expand}}

This is a straightforward generalization of the argument in~\cite[Lem.~62]{polyanskiy10}.

\begin{proof}
  By a simple calculation (or employing the law of total variance), it is easy to verify that
  \begin{align}
    &V\Bigg( \bigoplus_{\rho \in \cSo} P(\rho) \rho\, \Bigg\| \bigoplus_{\rho \in \cXo} P(\rho) 
    \rho^{(P)} \Bigg)\\
       &\qquad \qquad = \sum_{\rho \in \cSo} P(\rho)\, V\big(\rho\big\|\rho^{(P)}\big) + 
       \sum_{\rho \in \cXo} P(\rho) \bigg( D\big(\rho\big\|\rho^{(P)}\big) - \sum_{\rho \in \cSo} P(\rho)\, D\big(\rho\big\|\rho^{(P)}\big) \bigg)^2 .
  \end{align}
  Thus, if we choose $P \in \Pi(\cSo)$ we clearly have $\rho^{(P)} = \sigma^*(\cSo)$ and the second term vanishes due to Property 2 of Theorem~\ref{th:radius}.
\end{proof}

\subsection{Proof of Lemma~\ref{lm:selection}}

\begin{proof}
Let $\Xi^{\nu}$ be the set of indices for which $\Delta(\rho_i, \Gamma) >\frac{\nu}{2}$ holds. Then, we have
\begin{align}
\nu<\frac{1}{n}\sum_{i=1}^n\Delta(\rho_i,\Gamma) \le\frac{1}{n}\sum_{i\in\Xi^\nu}1+\frac{1}{n}\sum_{i\notin\Xi^\nu}\frac{\nu}{2}\le\frac{ | \Xi^\nu|}{n}+\frac{\nu}{2}
\end{align}
  from which the condition on the cardinality of $\Xi^{\nu}$ follows.
\end{proof}

\subsection{Proof of Lemma~\ref{lm:v-limit}}

\begin{proof}
  First note that all infima can be replaced with minima since the optimization is over compact sets. Denote $\min_{x \in \Theta_{\infty}} f(x)$ by $f*$.
  Clearly, $\limsup_{n \to \infty} \min_{x \in \Theta_n} f(x) \leq f^*$ since the inequality holds for every $n \in \mathbb{N}$ as $\Theta_n \supseteq \Theta_{\infty}$.
  
  Suppose, for the sake of contradiction that $\liminf_{n \to \infty} \min_{x \in \Theta_n} f(x) < f^*$. Then, there exists a subsequence indexed by $\{ n_k \}_{k \in \mathbb{N}}$ with the property that $\min_{x \in \Theta_{n_k}} f(x) < f^*$. For every $k \in \mathbb{N}$, let $x_k \in \argmin_{x \in \Theta_{n_k}} f(x)$ be any minimizer. Since the sets $\Theta_{n_k}$ are compact, there must exist a converging subsequence indexed by $\{ k_l \}_{l \in \mathbb{N}}$ such that $\lim_{l \to \infty} x_{k_l} = x^*$. Clearly, $x^*$ must be in $\Theta_{\infty}$. However, this leads to a contradiction with $f(x^*) < f^* = \min_{x \in \Theta_{\infty}} f(x)$.
  Hence, $\liminf_{n \to \infty} \min_{x \in \Theta_n} f(x) \geq f^*$.  
\end{proof}

\bibliographystyle{arxiv}
\bibliography{library}

\end{document}